\documentclass[12pt,letterpaper]{article}

\usepackage{xcolor, colortbl}
\usepackage{amssymb, amsmath, amsthm}
\usepackage{bm}
\usepackage{graphicx}
\usepackage{natbib}
\usepackage{float}
\usepackage{setspace}
\usepackage{algorithm}

\definecolor{lightgray}{gray}{0.9}

\newtheorem{lemma}{Lemma}
\newtheorem{proposition}{Proposition}
\newtheorem{theorem}{Theorem}
\newtheorem{corollary}{Corollary}
\newtheorem{assumption}{Assumption}

\newcommand{\di}{\text{d}}
\newcommand{\bh}{\mathbf{h}}
\newcommand{\bmm}{\mathbf{m}}

\newcommand{\by}{\mathbf{y}}
\newcommand{\bx}{\mathbf{x}}
\newcommand{\bSigma}{\bm{\Sigma}}
\newcommand{\bB}{\mathbf{B}}

\newcommand{\bK}{\mathbf{K}}
\newcommand{\bQ}{\mathbf{Q}}

\newcommand{\bone}{\mathbf{1}}
\newcommand{\distn}[1]{\mathcal{#1}}
\newcommand{\Em}{\mathbb E}
\newcommand{\Pm}{\mathbb P}
\newcommand{\gvn}{\,|\,}
\newcommand{\e}{\text{e}}
\newcommand{\Var}{\text{Var}}
\newcommand{\diag}{\text{diag}}
\renewcommand{\epsilon}{\varepsilon}
\renewcommand{\hat}{\widehat}
\renewcommand{\tilde}{\widetilde}
\renewcommand{\leq}{\leqslant}
\renewcommand{\geq}{\geqslant}
\newcommand{\matlab}{\mathrm{M}\mathrm{{\scriptstyle ATLAB}}}

\begin{document}

\title{Exact Rejection Sampling for Non-Gaussian State Space Models}

\author{Joshua C.C. Chan \\
 {\small Purdue University}
}

\date{August 2026}

\maketitle

\onehalfspacing

\begin{abstract}

\noindent Rejection sampling requires a proposal that dominates the target by a known constant, generally unavailable for non-Gaussian state space models. We construct such a proposal for the latent state path, yielding independent exact smoothing draws and an unbiased likelihood estimator whose relative variance is at most $1/p-1$ per draw at acceptance probability $p$. The method covers scalar states with affine Gaussian dynamics and log-concave observation densities, including multivariate observations. Transition twisting makes the log target-to-proposal ratio separable, and tangent-line twists make each term nonpositive, producing an attained, sharp dominating constant. With a companding node placement, the accumulated envelope error is $O(T/G^2)$ for a sample of length $T$ with $G$ nodes per date, so $G\propto\sqrt{T}$ keeps acceptance bounded away from zero; for stochastic volatility, the required conditions hold almost surely. A simpler mode-centered grid shows the same scaling empirically. At $T=2{,}000$, acceptance is $75\%$, versus roughly $10^{-16}$ for the Gaussian envelope.

\smallskip

\noindent \textbf{JEL classification:} C11, C15, C32

\smallskip

\noindent \textbf{Keywords:} rejection sampling, exact simulation, log-concavity, simulation smoothing, state space model, stochastic volatility, duration model

\end{abstract}

\thispagestyle{empty}

\newpage

\section{Introduction} \label{s:intro}

Latent state variables are central to many models in macroeconomics and finance. Stochastic volatility, unobserved components, count and duration models, among others, require simulation from the conditional distribution of a latent state path. When the model is linear and Gaussian, the path can be drawn exactly and cheaply using standard simulation smoothers.\footnote{These include the smoothers of \citet{CK94}, \citet{dS95}, \citet{DK02}, \citet{CJ09} and \citet{MMP11}.} Outside that class, state simulation typically relies on Metropolis--Hastings updates, importance sampling, particle methods, or approximations that replace the measurement density by a mixture of normals so that linear Gaussian machinery applies \citep{KSC98, OCSN07}. These methods can be highly accurate, but generally do not provide independent exact draws of the entire state path along with a computable global bound on the target-to-proposal ratio.

This paper develops an exact rejection sampler for a class of non-Gaussian state space models with a scalar latent state, affine Gaussian dynamics and a finite, continuously differentiable, concave observation log-density. The observation may be multivariate. Throughout, ``exact'' refers to the conditional state draw at a fixed parameter vector: an accepted path is distributed exactly from the joint smoothing distribution of the intended model. There is no discretization or approximation-model error, and the state draw requires neither burn-in nor a Markov-chain correction. The exactness claim concerns the algorithm in exact arithmetic; the implementation is subject to floating-point roundoff. 

The main obstacle to rejection sampling the entire path is to construct a proposal that dominates the target up to a \emph{known} constant. For a $T$-dimensional state path, no practically useful constant has generally been available. A Gaussian envelope centered at the posterior mode provides a computable bound, but its acceptance probability decays exponentially with $T$; for the observation densities considered here, changing the Gaussian covariance does not in general resolve the problem. Existing non-Gaussian simulation smoothers instead construct accurate approximations to the smoothing distribution. Their target-to-proposal ratio may be bounded, but the value of that bound is not computed, whereas rejection sampling requires it.

Our construction makes the global bound tractable in two steps. First, twisting the transition densities makes the log target-to-proposal ratio, apart from an additive constant, a \emph{sum of univariate functions} of the states. Its $T$-dimensional supremum therefore separates into $T$ scalar suprema. Second, we take each twist to be the minimum of tangent lines to a concave backward function. The resulting tangent envelope dominates that function and touches it at the nodes, so every scalar supremum is exactly zero. At the same time, each twisted transition remains a finite mixture of truncated normals with closed-form normalizing constants and draws. The global dominating constant is then simply the normalizing constant of the first twisted density, computed as a by-product of the backward recursion.

The paper makes two main contributions. The first is an exact simulation smoother for the model class described above. We prove that the tangent-twisted proposal dominates the unnormalized smoothing density everywhere on $\mathbb{R}^T$, with a dominating constant that is attained and is the smallest admissible for the proposal. Rejection sampling therefore produces independent draws from the exact joint smoothing distribution. To our knowledge, this is the first practical rejection sampler for the joint smoothing distribution of a continuous, unbounded state process with Gaussian transitions for which the global dominating constant is both computable and sharp.

We also characterize how the computational cost of exactness scales with the path length. In nondegenerate cases, a fixed per-period envelope error accumulates along the path and causes acceptance to deteriorate exponentially in $T$. For the tangent construction, the per-period error is instead of order $G^{-2}$ in the number of nodes $G$, so the accumulated error, and hence $-\log$ of the acceptance probability, is of order $T/G^2$. Increasing $G$ at rate $\sqrt{T}$ therefore keeps the acceptance probability bounded away from zero. Theorem~\ref{thm:scaling} establishes this result under uniform curvature and proposal-tail conditions using a companding rule for node placement; for the stochastic volatility model, these conditions hold almost surely under the data generating process. The same square-root rate is recommended for the discretization filter of \citet{Farmer2021}, where it instead controls the error from replacing the continuous state process by a finite Markov chain.

The second contribution is likelihood evaluation with importance weights bounded by the same constant. The construction gives a nonnegative unbiased likelihood estimator whose variance is finite at every parameter value, with relative variance bounded in terms of the rejection acceptance probability rather than diagnosed from realized weights. Finite variance cannot be taken for granted for Gaussian proposals in this literature: \citet*{KoopmanShephardCreal2009} argue that it should be tested rather than assumed, and for the Gaussian approximation at the mode the relevant test fails throughout the stochastic volatility experiments and at all six empirical fits reported below. The bound also provides an exact reference for evaluating approximate smoothers, a computable uniform-ergodicity bound for the associated independence sampler, and a building block for pseudo-marginal and marginal likelihood estimation, as used for large vector autoregressions with stochastic volatility by \citet{chan23JE} and \citet*{CYZ26}.

The Monte Carlo results show that these guarantees remain useful for long state paths. For the stochastic volatility model, the rejection sampler accepts $75\%$ of proposals at $T=2{,}000$, compared with fewer than one in $10^{16}$ for the Gaussian envelope, with similarly high acceptance across seven other observation densities. Taking $G$ proportional to $\sqrt{T}$ keeps acceptance approximately constant across a sixteenfold increase in sample size, at about $0.87$ for the practical grid used in the implementation; the companding grid of Theorem~\ref{thm:scaling} shows the same scaling at a uniformly lower level. Under a common computing-time budget, the likelihood estimator is two to three thousand times more efficient than a bootstrap particle filter and comparable to the efficient importance sampler of \citet{RichardZhang2007}. Against importance sampling from the Gaussian approximation at the mode no such ratio is well defined, because those weights have infinite variance in every design we consider. What distinguishes the tangent estimator is therefore not the margin but that it attains this efficiency with weights bounded by a constant the algorithm computes. Exact draws also provide a direct benchmark for quantifying approximation error in standard state smoothers.

We illustrate the method with the stochastic conditional duration model of \citet{BauwensVeredas2004}. Marginal likelihoods strongly favor gamma over exponential errors in both series considered. Imposing exponential errors forces the latent state to absorb dispersion that would otherwise be captured by the error distribution, increasing the estimated state innovation variance several-fold.

The construction connects to several strands of the literature. The twisted transition densities are Doob $h$-transforms of the type used in sequential Monte Carlo \citep{WhiteleyLee2014, GuarnieroJohansenLee2017, HengBishopDeligiannidisDoucet2020}. The backward recursion is closely related to the efficient importance sampling method of \citet{RichardZhang2007} and the HESSIAN method of \citet{McCausland2012}, and more broadly to the Gaussian importance smoothers of \citet{DanielssonRichard1993}, \citet{SP97} and \citet{DurbinKoopman1997}. These methods construct accurate global proposals for importance sampling or Metropolis--Hastings, but do not supply the value of a global bound over the entire state path. The tangent construction is the path-space analogue of the piecewise linear upper hull used for univariate log-concave densities by \citet{Devroye1986} and, with adaptive refinement of the node set, by \citet{GilksWild1992}.

The rest of the paper is organized as follows. Section~\ref{s:method} establishes exactness of the tangent-twisted rejection sampler, and Section~\ref{s:implementation} presents the implementation and scaling theory. Section~\ref{s:uses} develops uses beyond exact state simulation, and Section~\ref{s:related} relates the construction to existing smoothers. Sections~\ref{s:MC} and~\ref{s:app} report the Monte Carlo evidence and the application, and Section~\ref{s:conclusion} concludes.

\section{Exact Rejection Sampling for the State Path} \label{s:method}

This section sets out the modeling framework and the assumption maintained throughout, and shows why the obvious Gaussian envelope cannot work. It then builds the proposal in two steps: twisting the transition densities makes the log target-to-proposal ratio separable across dates, and tangent-line twists make every term of that sum nonpositive. The main result is Theorem~\ref{thm:exact}: the algorithm is exact and its dominating constant is attained and smallest admissible.

\subsection{The Modeling Framework} \label{ss:model}

Let $\bh=(h_1,\ldots,h_T)'$ denote a scalar latent state sequence and $\by=(\by_1',\ldots,\by_T')'$ the observations, where each $\by_t$ may be of any dimension. The observations are conditionally independent given the states, and the state evolves as a Markov chain with Gaussian transitions,
\begin{equation} \label{eq:model}
	\log p(\by_t \gvn h_t = u) = \ell_t(u), \qquad
	h_1 \sim \distn{N}(\mu_1,\omega_1), \qquad
	(h_t \gvn h_{t-1}) \sim \distn{N}\left(a_t(h_{t-1}),\omega_t\right),
\end{equation}
for $t=2,\ldots,T$. Write $p_1(\cdot)$ for the density of $h_1$ and $p_t(\cdot \gvn u)$ for the transition density. The functions $\ell_t$ and $a_t$ depend on static parameters, which are suppressed from the conditioning set to keep the notation uncluttered. The unnormalized posterior of the state path and its normalizing constant are
\begin{equation} \label{eq:target}
	\gamma(\bh) = p_1(h_1)\e^{\ell_1(h_1)}\prod_{t=2}^{T}p_t(h_t \gvn h_{t-1})\e^{\ell_t(h_t)},
	\qquad Z = \int_{\mathbb{R}^T}\gamma(\bh)\,\di\bh,
\end{equation}
so that the target is $p(\bh \gvn \by) = \gamma(\bh)/Z$. The objective is to obtain independent draws from $p(\bh \gvn \by)$ exactly. 

The leading example is the stochastic volatility model $y_t = \e^{h_t/2}\epsilon_t$ with $\epsilon_t\sim\distn{N}(0,1)$ and $h_t = \mu+\phi(h_{t-1}-\mu)+\sigma_h\eta_t$, $\eta_t\sim\distn{N}(0,1)$, where $|\phi|<1$ and the state is initialized from its stationary distribution. In the notation of \eqref{eq:model}, $a_t(u) = \mu+\phi(u-\mu)$, $\omega_t=\sigma^2_h$, $\mu_1=\mu$, $\omega_1 = \sigma^2_h/(1-\phi^2)$ and $\ell_t(u) = -\frac{1}{2}(u+y_t^2\e^{-u})-\frac{1}{2}\log(2\pi)$. The restriction $|\phi|<1$ is what makes $\omega_1$ positive and finite; a nonstationary state is covered by \eqref{eq:model} under any other initialization with $\omega_1>0$.

We maintain two conditions throughout. The first restricts the state equation and is used only to establish a log-concavity property in Lemma~\ref{lem:concave}; the second restricts the observation densities.

\begin{assumption} \label{as:model}
The following hold.
\begin{enumerate}
	\item[(i)] For each $t\geq 2$ the conditional mean $a_t(u) = \alpha_t+\beta_tu$ is affine in $u$ and the conditional variance $\omega_t>0$ does not depend on the state; and $\omega_1>0$.
	\item[(ii)] For each $t$ the observation log-density $\ell_t:\mathbb{R}\rightarrow\mathbb{R}$ is finite, continuously differentiable and concave.
\end{enumerate}
\end{assumption}

Assumption~\ref{as:model}(ii) accommodates a wide range of observation densities. For the stochastic volatility model, $\ell_t''(u)=-\frac{1}{2}y_t^2\e^{-u}\leq 0$. For the stochastic volatility in mean model $y_t = \alpha\e^{h_t}+\e^{h_t/2}\epsilon_t$ of \citet{KoopmanHolUspensky2002}, $\ell_t''(u) = -\frac{1}{2}y_t^2\e^{-u}-\frac{1}{2}\alpha^2\e^{u}\leq0$. For Student's $t$ measurement error with $\nu$ degrees of freedom, $\ell_t''(u) = -\frac{1}{2}(\nu+1)z/(1+z)^2 \leq0$, where $z=y_t^2\e^{-u}/\nu$, with equality only if $y_t=0$. For count data with a latent log-intensity, $y_t$ Poisson with mean $\e^{h_t}$, $\ell_t''(u)=-\e^{u}<0$. For a dynamic probit model with $\Pm(y_t=1)=\Phi(h_t)$, concavity of $\ell_t$ follows from the log-concavity of the normal distribution function. For the stochastic conditional duration model $y_t = \e^{h_t}\epsilon_t$ of \citet{BauwensVeredas2004} with exponential errors, $\ell_t(u) = -u-y_t\e^{-u}$ and $\ell_t''(u) = -y_t\e^{-u}<0$, and the fixed-shape gamma and Weibull versions have the same linear-minus-exponential form.

A further class deserves separate mention, because it shows that the restriction to a scalar state places no restriction on the dimension of the observation. In the common stochastic volatility model of \citet{CCM16}, $\by_t = \bB'\bx_t+\e^{h_t/2}\bm{\epsilon}_t$ with $\bm{\epsilon}_t\sim\distn{N}(\mathbf{0},\bSigma)$, the observation log-density conditional on $\bB$ and $\bSigma$ is
\[
	\ell_t(u) = -\frac{n}{2}u-\frac{1}{2}\e^{-u}(\by_t-\bB'\bx_t)'\bSigma^{-1}(\by_t-\bB'\bx_t)+\text{constant},
\]
which is concave in $u$ for every $n$. The same holds for a one-factor dynamic count model in which $y_{it}$ is Poisson with mean $\e^{\alpha_i+\lambda_ih_t}$ for $i=1,\ldots,n$, since $\ell_t$ is then a sum of $n$ concave functions of $u$. 

Assumption~\ref{as:model}(ii) requires concavity but not strict concavity. This matters in practice: in the stochastic volatility model an observation recorded as exactly zero gives $\ell_t(u)=-u/2-\frac{1}{2}\log(2\pi)$, which is affine rather than strictly concave.

\subsection{Why the Obvious Envelopes Fail} \label{ss:gaussian}

Before developing the proposal it is useful to see precisely why the obvious candidate fails. Let $\bQ$ denote the prior precision matrix of $\bh$ implied by \eqref{eq:model}, which is tridiagonal, and let $\bmm$ denote the posterior mode, with $t$th element $m_t$. Since each $\ell_t$ is concave, the tangent line at $m_t$ dominates it, and hence $\gamma(\bh) \leq \e^{M_G}q_G(\bh)$, where $M_G = \sum_{t=1}^{T}\ell_t(m_t)-\tfrac{1}{2}(\bmm-\bm{\alpha})'\bQ(\bmm-\bm{\alpha})$, $\bm{\alpha}$ collects the prior means, and $q_G$ is the $\distn{N}(\bmm,\bQ^{-1})$ density including its normalizing constant. 

The bound is attained at $\bh=\bmm$, so the dominating constant is exact and the acceptance probability is $Z\e^{-M_G}$, for which a Laplace approximation gives $\e^{-\kappa}$ with $\kappa=\frac{1}{2}(\log|\bK|-\log|\bQ|)\geq0$ and $\bK = \bQ+\diag\{-\ell_1''(m_1),\ldots,-\ell_T''(m_T)\}$. When the curvature $-\ell_t''(m_t)$ is bounded away from zero for a positive fraction of the sample and the eigenvalues of $\bQ$ are bounded above uniformly in $T$, as they are under Assumption~\ref{as:model}(i) with fixed state-equation parameters, $\kappa$ grows linearly in $T$ and the acceptance probability decays exponentially. 

Whether this deficiency can be repaired by changing the Gaussian covariance matrix depends on the tails of the observation density. Along the ray $\bh=\bmm+r s_j\mathbf{e}_j$ for a coordinate $j$ and a direction $s_j\in\{-1,1\}$, if $\ell_j(m_j+rs_j) = o(r^2)$ as $r\rightarrow\infty$, then $\log\gamma \sim -\frac{1}{2}r^2\bQ_{jj}$, and boundedness of $\log\gamma-\log q$ for a Gaussian proposal with precision $\bQ_q$ requires $(\bQ_q)_{jj} \leq \bQ_{jj}$: no Gaussian envelope can then add diagonal curvature in coordinate $j$, which is exactly the curvature that the observation density contributes and that $\log|\bK|-\log|\bQ|$ measures. The tail condition holds for the standard stochastic volatility model, where $\ell_t(u)\rightarrow -u/2$ as $u\rightarrow\infty$, and for the Student's $t$, Poisson and dynamic probit specifications discussed earlier.

A second route to a bounded weight avoids concavity altogether: the defensive importance sampling of \citet{Hesterberg1995} mixes the proposal with the state prior $p_0$, and taking $q_{\text{def}} = (1-\lambda)p_0+\lambda q$ gives $\gamma(\bh)/q_{\text{def}}(\bh) \leq (1-\lambda)^{-1}\exp\{\sum_t\sup_u\ell_t(u)\}$ from $T$ scalar maximizations. Two difficulties arise. The bound is finite only if every $\sup_u\ell_t(u)$ is finite, which Assumption~\ref{as:model} does not require: the affine $\ell_t$ produced by an observation recorded as exactly zero is unbounded above. When all $T$ suprema are finite, the bound is still too loose, because the implied acceptance probability is $(1-\lambda)\exp[-\sum_t\{\sup_u\ell_t(u)-\log p(\by_t \gvn \by_1,\ldots,\by_{t-1})\}]$, in which every summand is nonnegative and vanishes only if $\ell_t$ is flat where the predictive state distribution puts mass. Whenever the state matters, in the sense that the average of these gaps is bounded away from zero, the acceptance probability again decays exponentially in $T$.

\subsection{Separability and Tangent Twisting} \label{ss:separability}

The proposal is built by twisting the transition densities of the state equation: each is multiplied by a positive function of the current state and renormalized, so that the result remains a Markov chain that can be simulated forward in one pass. Let $\psi_1,\ldots,\psi_T$ be functions on $\mathbb{R}$ for which the integrals below are finite, and define
\[
	\hat C_1 = \int_{\mathbb{R}}p_1(v)\e^{\psi_1(v)}\,\di v,
	\qquad
	\hat C_t(u) = \int_{\mathbb{R}}p_t(v \gvn u)\e^{\psi_t(v)}\,\di v, \quad t=2,\ldots,T.
\]
The twisted proposal is the Markov chain with initial and transition densities
\begin{equation} \label{eq:proposal}
	q_1(h_1) = \frac{p_1(h_1)\e^{\psi_1(h_1)}}{\hat C_1},
	\qquad
	q_t(h_t \gvn h_{t-1}) = \frac{p_t(h_t \gvn h_{t-1})\e^{\psi_t(h_t)}}{\hat C_t(h_{t-1})},
\end{equation}
and joint density $q(\bh) = q_1(h_1)\prod_{t=2}^{T}q_t(h_t \gvn h_{t-1})$. Define the backward functions
\begin{equation} \label{eq:f}
	f_T = \ell_T, \qquad f_t = \ell_t + \log\hat C_{t+1}, \quad t=1,\ldots,T-1,
\end{equation}
and the residuals $d_t = f_t-\psi_t$.

\begin{lemma}[Separability] \label{lem:sep}
For every $\bh\in\mathbb{R}^T$,
\begin{equation} \label{eq:ratio}
	\frac{\gamma(\bh)}{q(\bh)} = \hat C_1\exp\left\{\sum_{t=1}^{T}d_t(h_t)\right\},
\end{equation}
and consequently $\sup_{\bh\in\mathbb{R}^T}\sum_{t=1}^{T}d_t(h_t) = \sum_{t=1}^{T}\sup_{u\in\mathbb{R}}d_t(u)$.
\end{lemma}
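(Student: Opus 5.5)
The plan is a direct telescoping computation, followed by a short sup-of-sum argument. Throughout I assume, as in the definition of the proposal, that $\hat C_1$ and each $\hat C_t(u)$ are finite and positive, so that every quantity below is well defined.

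First, I will write out $\gamma(\bh)/q(\bh)$ from \eqref{eq:target} and \eqref{eq:proposal}. The transition densities $p_1(h_1)$ and $p_t(h_t \gvn h_{t-1})$ appear in both numerator and denominator, so they cancel factor by factor. Because these are Gaussian densities, they are strictly positive everywhere, so the cancellation is legitimate at every $\bh\in\mathbb{R}^T$. What remains is
\[
	\frac{\gamma(\bh)}{q(\bh)} = \hat C_1\prod_{t=2}^{T}\hat C_t(h_{t-1})\,\exp\left\{\sum_{t=1}^{T}\bigl(\ell_t(h_t)-\psi_t(h_t)\bigr)\right\}.
\]

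Next, I will reindex the product. Writing $\prod_{t=2}^{T}\hat C_t(h_{t-1})=\exp\{\sum_{t=1}^{T-1}\log\hat C_{t+1}(h_t)\}$ attaches each normalizing function to the state at date $t$. Combining this with $\ell_t(h_t)$ and using \eqref{eq:f}, the exponent becomes $f_t(h_t)-\psi_t(h_t)=d_t(h_t)$ for $t\le T-1$. For $t=T$ it is $\ell_T-\psi_T=f_T-\psi_T=d_T$. This gives \eqref{eq:ratio}.

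For the supremum identity, the point is that the sum is separable: each $h_t$ enters only the single term $d_t(h_t)$. For the bound $\le$, take any $\bh$; then $d_t(h_t)\le\sup_u d_t(u)$ for each $t$, and summing over $t$ gives the claim. For the bound $\ge$, choose for each $t$ a sequence $u_t^{(k)}$ with $d_t(u_t^{(k)})\to\sup_u d_t(u)$, and set $\bh^{(k)}=(u_1^{(k)},\ldots,u_T^{(k)})$. This is allowed because the coordinates can be chosen independently on the product space $\mathbb{R}^T$. The sum at $\bh^{(k)}$ then converges to $\sum_t\sup_u d_t(u)$. If some supremum equals $+\infty$, the same choice shows both sides are $+\infty$. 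The only remaining concern is the mixed case $+\infty$ and $-\infty$, which cannot occur because each $d_t$ is real-valued and so every supremum is greater than $-\infty$.

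Neither step is a genuine obstacle. The most error-prone point is the index shift that moves $\hat C_{t+1}$ from the transition at date $t+1$ onto $h_t$, together with checking the boundary terms at $t=1$ (which leaves $\hat C_1$ as the only constant) and at $t=T$ (which has no $\hat C$ term).
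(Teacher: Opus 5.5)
Your proof is correct and follows the paper's argument: cancel the transition densities, shift the index so that $\hat C_{t+1}(h_t)$ combines with $\ell_t(h_t)$ into $f_t(h_t)$, and use the fact that the summands depend on disjoint coordinates. Your explicit treatment of the supremum, with approximating sequences chosen coordinatewise and the $+\infty$ case handled separately, fills in the one-line justification that the paper gives.
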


\begin{proof}
Dividing \eqref{eq:target} by the joint density implied by \eqref{eq:proposal}, every transition density cancels and
\begin{align*}
	\frac{\gamma(\bh)}{q(\bh)}
	&= \hat C_1\prod_{t=1}^{T}\e^{\ell_t(h_t)-\psi_t(h_t)}\prod_{t=2}^{T}\hat C_t(h_{t-1})\\
	&= \hat C_1\exp\left\{\sum_{t=1}^{T}\left[\ell_t(h_t)-\psi_t(h_t)\right]+\sum_{t=1}^{T-1}\log \hat C_{t+1}(h_t)\right\}.
\end{align*}
Collecting the terms that involve $h_t$ and using \eqref{eq:f} gives \eqref{eq:ratio}. The second claim follows because the summands depend on disjoint coordinates of $\bh$.
\end{proof}

The identity \eqref{eq:ratio} is the standard weight decomposition for twisted proposals, implicit in \citet{RichardZhang2007}. What matters here is its consequence: a supremum over the $T$-dimensional path is a sum of $T$ scalar suprema, whatever the twisting functions. The optimal twist $\psi_t=f_t$ makes every residual vanish, so that $\hat C_1=Z$ and every proposal is accepted, but it is not available in closed form. We instead choose $\psi_t$ so that each residual is nonpositive while each $\hat C_t$ remains analytic. Specifically, for each $t$ let $V_t=\{v_{t,1}<v_{t,2}<\cdots<v_{t,G_t}\}$ be a finite nonempty set of nodes, so that $G_t\geq1$, and define
\begin{equation} \label{eq:tangent}
	\psi_t(u) = \min_{1\leq j\leq G_t}\left\{f_t(v_{t,j})+f_t'(v_{t,j})(u-v_{t,j})\right\}.
\end{equation}
For a fixed concave function, \eqref{eq:tangent} is the upper hull of univariate adaptive rejection sampling \citep{GilksWild1992}; what is new here is that it is applied to functions the construction itself generates. The definition is recursive: $\psi_T$ is constructed from $f_T=\ell_T$, then $\hat C_T$ determines $f_{T-1}$ through \eqref{eq:f}, which determines $\psi_{T-1}$, and so on down to $t=1$. Each step integrates the twisted density of the following period, so the construction is well posed only if concavity and differentiability are preserved under that integration; Lemma~\ref{lem:concave} establishes that they are, and records the domination property on which Theorem~\ref{thm:exact} rests. Its proof is in \ref{app:proofs}.

\begin{lemma}[Concavity and domination] \label{lem:concave}
Under Assumption~\ref{as:model}, for every $t=1,\ldots,T$ the following hold.
\begin{enumerate}
	\item[(i)] $f_t$ is finite, concave and continuously differentiable on $\mathbb{R}$, so that \eqref{eq:tangent} is well defined.
	\item[(ii)] $\psi_t$ is concave and piecewise linear, with $\psi_t(u)\geq f_t(u)$ for every $u\in\mathbb{R}$ and $\psi_t(v)=f_t(v)$ for every $v\in V_t$.
	\item[(iii)] For $t\geq 2$, $0<\hat C_t(u)<\infty$ for every $u$, and $\log\hat C_t$ is concave and continuously differentiable; and $0<\hat C_1<\infty$.
\end{enumerate}
\end{lemma}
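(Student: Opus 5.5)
The plan is backward induction on $t$, from $t=T$ down to $t=1$. The inductive claim at date $t$ is that (i) and (ii) hold at $t$, and that (iii) holds at $t$ whenever $t\geq2$. The base case is immediate from Assumption~\ref{as:model}(ii): $f_T=\ell_T$ is finite, concave and $C^1$. For (ii) at any date, once $f_t$ is known to be finite, concave and $C^1$, the argument is the univariate one. Each tangent line $f_t(v_{t,j})+f_t'(v_{t,j})(u-v_{t,j})$ dominates $f_t$ by concavity, so their minimum $\psi_t$ also dominates $f_t$. Equality holds at each node because the $j$th line touches $f_t$ there and every other line lies above $f_t$. Finally, $\psi_t$ is a minimum of finitely many affine functions, which makes it concave and piecewise linear.

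The substantive step is passing from $\psi_t$ to $\hat C_t$, and so to $f_{t-1}=\ell_{t-1}+\log\hat C_t$. For $t\geq2$, Assumption~\ref{as:model}(i) gives
\[
\hat C_t(u)=\int_{\mathbb{R}}(2\pi\omega_t)^{-1/2}\exp\Bigl\{-\tfrac{(v-\alpha_t-\beta_tu)^2}{2\omega_t}+\psi_t(v)\Bigr\}\,\di v .
\]
The integrand is strictly positive, so $\hat C_t(u)>0$. For finiteness, $\psi_t(v)\leq \max_j\{c_j+s_jv\}\leq c+s|v|$, and the Gaussian factor then makes the integral finite for every $u$. The same reasoning gives $0<\hat C_1<\infty$ once (i) and (ii) are established at $t=1$.

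For concavity of $\log\hat C_t$, the exponent is jointly concave in $(u,v)$. It is the sum of a negative quadratic in the affine form $v-\alpha_t-\beta_tu$ and the concave function $\psi_t(v)$. Prékopa's theorem therefore says that the marginal $u\mapsto\hat C_t(u)$ is log-concave. If $\beta_t=0$, $\hat C_t$ is constant, which is trivially fine. For differentiability, I would use the substitution $v=\alpha_t+\beta_tu+w$, which gives
\[
\hat C_t(u)=\int \phi_{\omega_t}(w)\,\e^{\psi_t(\alpha_t+\beta_tu+w)}\,\di w .
\]
Viewed as a function of $a=\alpha_t+\beta_tu$, this is the Gaussian convolution $\int\phi_{\omega_t}(v-a)\e^{\psi_t(v)}\di v$, which is $C^\infty$ in $a$. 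Differentiation under the integral is justified by dominated convergence: the $a$-derivative of the Gaussian kernel is bounded by a polynomial times a Gaussian, uniformly for $a$ in a compact set, and against $\e^{c+s|v|}$ this product is integrable. Positivity then gives that $\log\hat C_t$ is $C^1$ (indeed smooth). It follows that $f_{t-1}=\ell_{t-1}+\log\hat C_t$ is finite, concave and $C^1$, which is (i) at $t-1$ and closes the induction.

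I expect the main obstacle to be the concavity of $\log\hat C_t$. A direct proof is possible, since the second derivative of $\log\hat C_t$ in $a$ equals $\omega_t^{-2}\Var(v)-\omega_t^{-1}$ under the tilted Gaussian density, and one needs this variance to be at most $\omega_t$. That inequality is a Brascamp–Lieb-type fact for a Gaussian tilted by a log-concave function, but because $\psi_t$ is only piecewise linear the derivative route is fiddly. I would therefore cite Prékopa's theorem, that marginals of log-concave functions are log-concave, and avoid computing derivatives.
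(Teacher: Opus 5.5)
Your proposal is correct and follows the paper's proof essentially step for step. It uses backward induction with the tangent-line argument for (ii), an affine upper bound on $\psi_t$ combined with Gaussian integrability for $0<\hat C_t<\infty$, Pr\'ekopa's theorem on the jointly log-concave integrand for concavity of $\log\hat C_t$, and the Gaussian-convolution representation with dominated convergence for smoothness (the paper bounds $\psi_t$ by a single tangent line and invokes the normal moment generating function, slightly more directly than your $c+s|v|$ bound).
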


Part~(ii) is also what makes the twisted kernel tractable. Where $\psi_t$ is affine, multiplying the Gaussian transition density by $\e^{\psi_t}$ leaves a normal density with the same variance and a shifted mean, truncated to that piece. Hence $q_t(\cdot \gvn h_{t-1})$ is a mixture of truncated normals, and Section~\ref{ss:pieces} obtains its weights and $\hat C_t$ in closed form.

\subsection{Exactness and Bounded Weights} \label{ss:exact}

The two lemmas now combine: each residual $d_t=f_t-\psi_t$ is nonpositive by Lemma~\ref{lem:concave}(ii) and vanishes on the nodes, so the sum in \eqref{eq:ratio} is at most zero and the bound is attained.

\begin{theorem}[Exactness and the smallest dominating constant] \label{thm:exact}
Let Assumption~\ref{as:model} hold and let $\psi_t$ be given by \eqref{eq:tangent}. Then the following hold.
\begin{enumerate}
	\item[(i)] $q$ is a probability density on $\mathbb{R}^T$.
	\item[(ii)] $\gamma(\bh)\leq \hat C_1 q(\bh)$ for every $\bh\in\mathbb{R}^T$, with equality whenever $h_t\in V_t$ for every $t$; hence $\hat C_1 = \sup_{\bh}\gamma(\bh)/q(\bh)$, and no smaller dominating constant is admissible for this proposal.
	\item[(iii)] Let $\bh\sim q$ and $U\sim\distn{U}(0,1)$ be independent, and accept $\bh$ if
	\begin{equation} \label{eq:accept}
		\log U \leq \sum_{t=1}^{T}\left[f_t(h_t)-\psi_t(h_t)\right].
	\end{equation}
	Then the conditional distribution of $\bh$ given acceptance is exactly $p(\bh \gvn \by)$, and a proposal is accepted with probability $Z/\hat C_1$.
\end{enumerate}
\end{theorem}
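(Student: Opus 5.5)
The plan is to combine Lemma~\ref{lem:sep} with Lemma~\ref{lem:concave} and then apply the standard rejection sampling argument.

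For part~(i), Lemma~\ref{lem:concave}(iii) gives $0<\hat C_1<\infty$ and $0<\hat C_t(u)<\infty$ for every $u$. Hence $q_1$ is a well-defined probability density. Likewise, each $q_t(\cdot\gvn u)$ is a nonnegative measurable function of $(h_t,u)$ that integrates to one in $h_t$. Measurability follows because $\psi_t$ is continuous (piecewise linear) and $\hat C_t$ is continuous by Lemma~\ref{lem:concave}(iii). Integrating $q(\bh)=q_1(h_1)\prod_{t\ge2}q_t(h_t\gvn h_{t-1})$ iteratively from $h_T$ down to $h_1$, using Tonelli, shows that $q$ is nonnegative and integrates to one on $\mathbb{R}^T$.

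For part~(ii), Lemma~\ref{lem:sep} gives $\gamma(\bh)/q(\bh)=\hat C_1\exp\{\sum_t d_t(h_t)\}$ wherever $q>0$. Since $q$ is a product of Gaussians times finite positive factors, $q>0$ everywhere, so the identity holds on all of $\mathbb{R}^T$. Lemma~\ref{lem:concave}(ii) gives $d_t=f_t-\psi_t\le 0$, with $d_t(v)=0$ for $v\in V_t$. Hence $\gamma\le\hat C_1 q$ everywhere, with equality whenever each $h_t\in V_t$. Because every $V_t$ is nonempty, such points exist, so the supremum of $\gamma/q$ is attained and equals $\hat C_1$. Any constant $c<\hat C_1$ then violates $\gamma\le cq$ at such a point, which is the sharpness claim.

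For part~(iii), the acceptance rule \eqref{eq:accept} is equivalent to $U\le \gamma(\bh)/\{\hat C_1 q(\bh)\}$, by Lemma~\ref{lem:sep}. The right-hand side lies in $(0,1]$ by part~(ii). For any Borel set $A$,
\[
\Pm(\bh\in A,\ \text{accept})=\int_A q(\bh)\,\frac{\gamma(\bh)}{\hat C_1q(\bh)}\,\di\bh=\frac{1}{\hat C_1}\int_A\gamma(\bh)\,\di\bh .
\]
Taking $A=\mathbb{R}^T$ gives acceptance probability $Z/\hat C_1$. For this step we need $0<Z<\infty$. Positivity holds because $\gamma>0$. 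Finiteness follows from part~(ii), since $Z\le\hat C_1\int q=\hat C_1$. Dividing the display by $Z/\hat C_1$ then shows that the conditional law of $\bh$ given acceptance is $\gamma/Z=p(\bh\gvn\by)$.

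The main obstacle is not the rejection argument, which is routine. It is making sure the objects are well defined on all of $\mathbb{R}^T$: finiteness and positivity of every $\hat C_t$, positivity of $q$, and the recursive well-posedness of $\psi_t$. All of these are delegated to Lemma~\ref{lem:concave}, so the only care needed here is to state explicitly that $q>0$ everywhere. That fact lets the ratio identity of Lemma~\ref{lem:sep} be used pointwise, and it ensures that equality at node points gives the supremum.
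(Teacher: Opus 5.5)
Your proposal is correct and follows the paper's proof essentially step for step. It uses Lemma~\ref{lem:concave}(iii) for normalization, then Lemma~\ref{lem:sep} with Lemma~\ref{lem:concave}(ii) for domination with equality on the nodes, and then the standard joint-probability computation for acceptance and the conditional law. Your extra details (positivity of $q$ everywhere, measurability, and $0<Z\leq\hat C_1$) are left implicit in the paper, and your direct attainment argument for the supremum makes the paper's appeal to the separability of suprema unnecessary but harmless.
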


The proof is in \ref{app:proofs}. Theorem~\ref{thm:exact} supplies what has been missing in this setting: a dominating constant that is known. It requires no numerical maximization, no safety factor and no search over the tails, because the supremum of each residual is zero by construction and is attained at the nodes. The theorem holds for any finite nonempty node sets. Exactness therefore does not depend on the accuracy of the posterior mode: poor placement lowers the acceptance probability but cannot invalidate the bound.

Theorem~\ref{thm:exact} also bounds the importance weights, which is what makes the twisted proposal useful beyond rejection sampling; the proof of the corollary is in \ref{app:proofs}.

\begin{corollary}[Bounded importance weights] \label{cor:is}
Let the conditions of Theorem~\ref{thm:exact} hold and let $w(\bh) = \gamma(\bh)/q(\bh)$. Then $0<w(\bh)\leq\hat C_1$ for every $\bh$, $\Em_q\{w(\bh)\}=Z$, and
\[
	\frac{\Var_q\{w(\bh)\}}{\left[\Em_q\{w(\bh)\}\right]^2} \ \leq\ \frac{\hat C_1}{Z}-1 \ =\ \frac{1}{p}-1,
\]
where $p$ is the acceptance probability of Theorem~\ref{thm:exact}. Hence for $\bh^{(1)},\ldots,\bh^{(M)}$ drawn independently from $q$, the estimator $\hat Z = M^{-1}\sum_{m}w(\bh^{(m)})$ is unbiased for $Z$ with relative variance at most $(1/p-1)/M$.
\end{corollary}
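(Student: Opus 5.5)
The plan is to read off all three claims from Theorem~\ref{thm:exact} and Lemma~\ref{lem:sep}, using one elementary inequality for a bounded nonnegative random variable.

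\emph{Positivity and the upper bound.} By Lemma~\ref{lem:sep}, $w(\bh)=\hat C_1\exp\{\sum_t d_t(h_t)\}$. Lemma~\ref{lem:concave}(iii) gives $0<\hat C_1<\infty$, and each $d_t=f_t-\psi_t$ is finite by Lemma~\ref{lem:concave}(i)--(ii), since $\psi_t$ is a minimum of finitely many finite affine functions. Hence $w(\bh)>0$ for every $\bh$. The bound $w\leq\hat C_1$ is Theorem~\ref{thm:exact}(ii) restated.

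\emph{Unbiasedness.} Since $q>0$ everywhere (it is a product of Gaussian densities times positive exponentials divided by finite positive constants), $\Em_q\{w\}=\int\gamma(\bh)\,\di\bh=Z$.

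\emph{Variance bound.} Because $0<w\leq\hat C_1$, we have $w^2\leq\hat C_1 w$. Taking expectations gives $\Em_q\{w^2\}\leq\hat C_1 Z$, so
\[
\Var_q\{w\}\leq \hat C_1Z-Z^2 .
\]
Dividing by $Z^2$ gives relative variance at most $\hat C_1/Z-1$. Theorem~\ref{thm:exact}(iii) identifies $p=Z/\hat C_1$, which yields the form $1/p-1$. The variance is finite because $w$ is bounded.

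\emph{The estimator $\hat Z$.} For independent draws, linearity gives $\Em\hat Z=Z$. Independence gives $\Var\hat Z=\Var_q\{w\}/M$, so the relative variance is at most $(1/p-1)/M$.

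None of these steps is a real obstacle. The only care needed is to note that $Z>0$ (since $\gamma>0$) and $Z\leq\hat C_1<\infty$, so the ratios are well defined.
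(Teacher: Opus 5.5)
Your proof is correct and follows the paper's argument: the bound comes from Theorem~\ref{thm:exact}(ii), unbiasedness is direct integration, and $w^2\leq\hat C_1 w$ gives $\Var_q(w)\leq\hat C_1 Z-Z^2$, which you divide by $Z^2$ using $p=Z/\hat C_1$. The extra details you supply are left implicit in the paper: strict positivity of $q$, $\gamma$ and $Z$, and $\Var(\hat Z)=\Var_q(w)/M$ by independence.
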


\section{Implementation and Scaling} \label{s:implementation}

The construction of Section~\ref{s:method} is useful only if the twisted densities can be sampled and their normalizing constants evaluated. This section shows that both are available in closed form, because a piecewise linear twist leaves each twisted transition a mixture of truncated normals. It then sets out how the nodes are chosen, how many of them the sample size calls for, and states the algorithm, closing with the numerical precautions that an implementation requires.

\subsection{Closed-Form Expressions} \label{ss:pieces}

Write $g_{t,j}=f_t'(v_{t,j})$ for the slope of the $j$th tangent line and $\alpha_{t,j}=f_t(v_{t,j})-g_{t,j}v_{t,j}$ for its intercept, so that the line is $u\mapsto \alpha_{t,j}+g_{t,j}u$. The following lemma records the interval on which each line is active; the intersection points and the interval rule are equations (1) and (2) of \citet{GilksWild1992}, restated for a general strictly concave $f_t$ and with the interlacing $v_{t,j}<z_{t,j}<v_{t,j+1}$ established rather than assumed. The proof is in \ref{app:proofs}.

\begin{lemma}[Tangent intervals] \label{lem:intervals}
Suppose in addition that $f_t$ is strictly concave. Then $g_{t,1}>g_{t,2}>\cdots>g_{t,G_t}$, the intersection points
\[
	z_{t,j} = \frac{\alpha_{t,j+1}-\alpha_{t,j}}{g_{t,j}-g_{t,j+1}}, \qquad j=1,\ldots,G_t-1,
\]
satisfy $v_{t,j}<z_{t,j}<v_{t,j+1}$ and hence $z_{t,1}<z_{t,2}<\cdots<z_{t,G_t-1}$, and
$\psi_t(v) = \alpha_{t,j}+g_{t,j}v$ for $v\in(z_{t,j-1},z_{t,j}]$, with the conventions $z_{t,0}=-\infty$ and $z_{t,G_t}=+\infty$.
\end{lemma}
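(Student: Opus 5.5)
The plan is to read everything off the first-order characterization of strict concavity. By Lemma~\ref{lem:concave}(i), $f_t$ is continuously differentiable and concave, and we now assume it is strictly concave. Then $f_t'$ is strictly decreasing, and since $v_{t,1}<\cdots<v_{t,G_t}$ this gives $g_{t,1}>\cdots>g_{t,G_t}$ at once. In particular each denominator $g_{t,j}-g_{t,j+1}$ is positive, so $z_{t,j}$ is well defined. It is the unique point where the lines $L_j(u)=\alpha_{t,j}+g_{t,j}u$ and $L_{j+1}$ cross. Because $L_j-L_{j+1}$ is affine with positive slope $g_{t,j}-g_{t,j+1}$, we have $L_j<L_{j+1}$ to the left of $z_{t,j}$ and $L_j>L_{j+1}$ to the right.

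For the interlacing I will evaluate the affine function $D_j=L_j-L_{j+1}$ at the two nodes. The strict tangent inequality for a strictly concave differentiable function says $f_t(u)<f_t(v)+f_t'(v)(u-v)$ for $u\neq v$. Applying it at $v=v_{t,j+1}$ and $u=v_{t,j}$ gives
\[
D_j(v_{t,j})=f_t(v_{t,j})-L_{j+1}(v_{t,j})<0 .
\]
Applying it at $v=v_{t,j}$ and $u=v_{t,j+1}$ gives
\[
D_j(v_{t,j+1})=L_j(v_{t,j+1})-f_t(v_{t,j+1})>0 .
\]
Since $D_j$ is increasing with root $z_{t,j}$, this forces $v_{t,j}<z_{t,j}<v_{t,j+1}$. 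Chaining over $j$ then yields $z_{t,j}<v_{t,j+1}<z_{t,j+1}$, so the $z_{t,j}$ are strictly increasing.

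The main step is identifying which line attains the minimum in \eqref{eq:tangent} on each interval. Fix $j$ and $v\in(z_{t,j-1},z_{t,j}]$. I need $L_j(v)\leq L_k(v)$ for every $k$.

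For $k>j$, I will use a telescoping chain. Since $v\leq z_{t,j}\leq z_{t,i}$ for all $i\geq j$, the sign property gives $L_i(v)\leq L_{i+1}(v)$ for each $i=j,\ldots,k-1$. Hence $L_j(v)\leq L_k(v)$.

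For $k<j$, the argument is symmetric. Since $v>z_{t,j-1}\geq z_{t,i}$ for $i\leq j-1$, we get $L_{i}(v)>L_{i+1}(v)$ for $i=k,\ldots,j-1$. Hence $L_k(v)>L_j(v)$.

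These two chains together give $\psi_t(v)=L_j(v)$, with the conventions $z_{t,0}=-\infty$ and $z_{t,G_t}=+\infty$ covering the end intervals. The only delicate point is the ordering of the $z$'s, since without it the chain comparisons fail. That ordering is exactly what the interlacing step supplies, so that step is the crux. The boundary point $v=z_{t,j}$, where $L_j$ and $L_{j+1}$ tie, is harmless.
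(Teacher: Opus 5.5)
Your proof is correct and is essentially the paper's argument: your two strict tangent inequalities $D_j(v_{t,j})<0<D_j(v_{t,j+1})$ are equivalent to the paper's $g_{t,j}>s_{t,j}>g_{t,j+1}$ for the secant slope $s_{t,j}$. The paper obtains these from the mean value theorem and turns them into an explicit convex-combination formula for $z_{t,j}$ rather than a sign change of the affine function $D_j$, and your telescoping comparison of adjacent lines spells out the final step that the paper states in one line from the ordering of slopes and intersection points.
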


When $f_t$ is concave but not strictly so, distinct nodes can share a tangent slope and the intersection $z_{t,j}$ is undefined. Equal slopes at two nodes imply that $f_t$ is affine between them, so the two tangent lines coincide rather than merely running parallel. The implementation therefore merges coincident lines and keeps one representative, then retains only the lines that appear on the lower envelope of the remaining family, discarding any that is nowhere the minimum, and forms intersections between adjacent distinct slopes. No slope is altered, so every retained line remains a tangent and $\psi_t\geq f_t$ is preserved exactly. When $f_t$ is affine a single line survives and the twisted transition reduces to a single untruncated normal, which is the case used as an implementation check in Section~\ref{s:MC}. After merging coincident lines and removing inactive lines, let $\mathcal{A}_t$ index the retained distinct lines. In the formulas below, sums over mixture components are understood to run over $j\in\mathcal{A}_t$. Under strict concavity, $\mathcal{A}_t=\{1,\ldots,G_t\}$.

On each of these intervals the twisted density is normal. Completing the square,
\[
	p_t(v \gvn u)\e^{\psi_t(v)}
	= \exp\left\{\alpha_{t,j}+g_{t,j}a_t(u)+\tfrac{1}{2}g_{t,j}^2\omega_t\right\}
	\distn{N}\left(v;b_{t,j}(u),\omega_t\right),	
\]
where $b_{t,j}(u)=a_t(u)+g_{t,j}\omega_t$ and $\distn{N}(v;b,\omega)$ denotes the normal density with mean $b$ and variance $\omega$ evaluated at $v$. Integrating over the $j$th interval gives the weight
\begin{equation} \label{eq:weights}
	w_{t,j}(u) = \exp\left\{\alpha_{t,j}+g_{t,j}a_t(u)+\tfrac{1}{2}g_{t,j}^2\omega_t\right\}
	\left[\Phi\left(\zeta^{+}_{t,j}\right)-\Phi\left(\zeta^{-}_{t,j}\right)\right],
\end{equation}
where $\zeta^{+}_{t,j} = (z_{t,j}-b_{t,j}(u))/\sqrt{\omega_t}$, 
$\zeta^{-}_{t,j} = (z_{t,j-1}-b_{t,j}(u))/\sqrt{\omega_t},$ and $\Phi$ is the standard normal distribution function. The normalizing constant is $\hat C_t(u)=\sum_{j\in\mathcal{A}_t}w_{t,j}(u)$, so that $q_t(\cdot \gvn u)$ is a mixture of $|\mathcal{A}_t|\leq G_t$ normal densities that share the variance $\omega_t$ and are each truncated to one interval. A draw is obtained by selecting the $j$th component with probability $w_{t,j}(u)/\hat C_t(u)$ and then drawing from $\distn{N}(b_{t,j}(u),\omega_t)$ truncated to $(z_{t,j-1},z_{t,j}]$ by inversion. Both steps are exact and together require $O(G_t)$ evaluations of $\Phi$.

The same expressions deliver the initial density: $q_1$ is obtained from \eqref{eq:weights} on replacing $a_t(u)$ by $\mu_1$ and $\omega_t$ by $\omega_1$, and $\hat C_1$ is the corresponding sum of weights. The dominating constant of Theorem~\ref{thm:exact} is therefore computed as a by-product of the backward recursion, at the cost of a single additional evaluation.

Constructing $\psi_t$ requires $f_t$ and its first derivative at the nodes. The first is available from \eqref{eq:weights} through \eqref{eq:f}. The second is available at no additional cost. Differentiation under the integral sign, justified in the proof of Lemma~\ref{lem:concave}, gives
\begin{equation} \label{eq:Cprime}
	\hat C_t'(u) = \frac{\beta_t}{\omega_t}\int_{\mathbb{R}}\left(v-a_t(u)\right)p_t(v \gvn u)\e^{\psi_t(v)}\,\di v,
\end{equation}
and dividing by $\hat C_t(u)$ gives
\begin{equation} \label{eq:derivative}
	\frac{\di}{\di u}\log\hat C_t(u) = \frac{\beta_t}{\omega_t}\left\{\mathcal{M}_t(u)-a_t(u)\right\},
	\qquad \mathcal{M}_t(u) = \Em_{q_t(\cdot \gvn u)}\left[h_t\right],
\end{equation}
so that $f_{t-1}'(u) = \ell_{t-1}'(u)+\beta_t\{\mathcal{M}_t(u)-a_t(u)\}/\omega_t$. The conditional mean of the twisted transition is a weighted average of truncated normal means,
\[
	\mathcal{M}_t(u) = \frac{\sum_{j\in\mathcal{A}_t}w_{t,j}(u)\mu_{t,j}(u)}{\sum_{j\in\mathcal{A}_t}w_{t,j}(u)},
	\qquad
	\mu_{t,j}(u) = b_{t,j}(u)+\sqrt{\omega_t}\,
	\frac{\varphi\left(\zeta^{-}_{t,j}\right)-\varphi\left(\zeta^{+}_{t,j}\right)}
	{\Phi\left(\zeta^{+}_{t,j}\right)-\Phi\left(\zeta^{-}_{t,j}\right)},
\]
where $\varphi$ is the standard normal density (not to be confused with the persistence parameter $\phi$ of the stochastic volatility model). Every quantity appearing here is already formed in computing $\hat C_t$, so the derivative recursion adds only $O(G_t)$ elementary operations per node.

\subsection{Choosing the Nodes} \label{ss:nodes}

The nodes affect only efficiency, so they should be placed where the smoothed states have mass. Our default implementation sets $v_{t,j}=m_t+w_js_t$, where $m_t$ is the $t$th element of the posterior mode of $\bh$, $s_t$ is the corresponding standard deviation under the Gaussian approximation at the mode, and $w_1<\cdots<w_G$ is a fixed grid on $[-4,6]$. The mode is computed by Newton's method, with each iteration requiring an $O(T)$ band solve. The standard deviations are the square roots of diagonal elements of the inverse of a tridiagonal precision matrix and are obtained in $O(T)$ operations using the selected inversion recursion of \citet*{TakahashiFaganChin1973}, rather than by forming the inverse at $O(T^2)$. The grid is asymmetric because in the stochastic volatility and duration models the observation log-density decays much faster to the left of the mode than to the right. The favorable orientation is model specific---it is reversed for Poisson counts and depends on $y_t$ for dynamic probit.

We refer to this mode-centered, fixed-range construction as the \emph{practical grid}. The scaling theorem below is proved for a different \emph{theoretical grid}, obtained from a companding rule chosen to permit uniform control of both curvature and proposal tails. This distinction concerns efficiency, not exactness: Theorem~\ref{thm:exact} applies to any finite nonempty node sets, including both grids. We use the simpler practical grid as the default implementation.

\subsection{How Many Nodes: Bounds and Scaling} \label{ss:scaling}

How many nodes are needed has two answers. We first give a finite-sample bound that applies to arbitrary node sets, including the practical grid of Section~\ref{ss:nodes}. We then introduce a particular companding grid for which that bound can be strengthened to a uniform $T/G^2$ scaling result. Write $S_T=-\sum_{t=1}^{T}d_t(h_t)$ for the accumulated envelope residual of a proposed path. Since $d_t\leq0$, $S_T\geq0$, and the acceptance probability of Theorem~\ref{thm:exact} is $\Em_q(\e^{-S_T})$. Jensen's inequality gives
$ \Em_q(\e^{-S_T})\geq \exp\{-\Em_q(S_T)\},$ and the expected residual admits the following finite-sample bound.

\begin{proposition}[Acceptance bound] \label{prop:rate}
Let the conditions of Theorem~\ref{thm:exact} hold, suppose $G_t\geq2$ for every $t$, let $I_t$ denote the interval spanned by $V_t$, let $\Delta_t$ be the largest spacing between adjacent nodes of $V_t$, and suppose $f_t'$ is Lipschitz on $I_t$ with constant $L_t$, which holds in particular if $f_t$ is twice differentiable there with $-f_t''\leq L_t$. Then
\[
	0\leq \psi_t(u)-f_t(u)\leq \frac{L_t\Delta_t^2}{8} \quad \text{for } u\in I_t,
\]
and consequently
\[
	\Pm(\text{accept}) \geq \exp\left\{-\sum_{t=1}^{T}\left[\frac{L_t\Delta_t^2}{8}+\Em_q(R_t)\right]\right\},
\]
where $R_t = \{\psi_t(h_t)-f_t(h_t)\}\bone\{h_t\notin I_t\}$ is the contribution from outside the node range.
\end{proposition}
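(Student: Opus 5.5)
The plan has two parts: a local bound on each cell between adjacent nodes, then an accumulation step through Jensen's inequality.

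\textbf{Local bound.} Fix $t$ and $u\in I_t$. Pick adjacent nodes $v_{t,j}\leq u\leq v_{t,j+1}$ and write $a=v_{t,j}$, $b=v_{t,j+1}$, so $b-a\leq\Delta_t$. Since $\psi_t$ is a minimum of tangent lines, it lies below each of the two tangents at $a$ and $b$:
\[
\psi_t(u)-f_t(u)\leq \min\{T_a(u)-f_t(u),\,T_b(u)-f_t(u)\},
\]
where $T_v(u)=f_t(v)+f_t'(v)(u-v)$. The Lipschitz hypothesis on $f_t'$ over $I_t\supseteq[a,b]$ gives, via $f_t(u)-T_v(u)=\int_v^u\{f_t'(s)-f_t'(v)\}\,\di s$, the bound $T_v(u)-f_t(u)\leq L_t(u-v)^2/2$. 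Hence
\[
\psi_t(u)-f_t(u)\leq \tfrac{L_t}{2}\min\{(u-a)^2,(b-u)^2\}\leq \tfrac{L_t}{2}\Bigl(\tfrac{b-a}{2}\Bigr)^2=\tfrac{L_t(b-a)^2}{8}\leq\tfrac{L_t\Delta_t^2}{8}.
\]
The lower bound $0\leq\psi_t-f_t$ is Lemma~\ref{lem:concave}(ii). The only care needed is that the two nodes bracketing $u$ exist, which holds because $u\in I_t$ and $G_t\geq2$. This step is elementary; I do not need the active-interval structure of Lemma~\ref{lem:intervals}, because the minimum over all tangents is bounded by any two of them.

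\textbf{Accumulation.} Decompose
\[
-d_t(h_t)=\{\psi_t(h_t)-f_t(h_t)\}\bone\{h_t\in I_t\}+R_t,
\]
and bound the first term by $L_t\Delta_t^2/8$. Summing over $t$ gives
\[
S_T\leq\sum_{t=1}^{T}\bigl\{L_t\Delta_t^2/8+R_t\bigr\},
\]
so that $\Em_q(S_T)\leq\sum_{t=1}^{T}\{L_t\Delta_t^2/8+\Em_q(R_t)\}$.

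Theorem~\ref{thm:exact}(iii), combined with Lemma~\ref{lem:sep}, gives $\Pm(\text{accept})=Z/\hat C_1=\Em_q(\e^{-S_T})$. Since $x\mapsto\e^{-x}$ is convex, Jensen's inequality yields $\Em_q(\e^{-S_T})\geq\exp\{-\Em_q(S_T)\}$, and combining this with the previous display gives the stated lower bound.

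\textbf{Main obstacle.} The only delicate point is measure-theoretic. $\Em_q(R_t)$ could be $+\infty$, in which case the bound is trivial and Jensen's inequality still holds in the extended sense. Because $R_t\geq0$, the expectation is always well defined in $[0,\infty]$, so no integrability assumption is needed.
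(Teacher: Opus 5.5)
Your proposal is correct and follows the paper's proof essentially step for step. You bound $\psi_t$ by the two bracketing tangents, use the Lipschitz integral remainder to obtain $\tfrac{L_t}{2}\min\{(u-v_{t,j})^2,(v_{t,j+1}-u)^2\}\leq L_t\Delta_t^2/8$, and then split $\Em_q(S_T)$ into contributions inside and outside $I_t$ before applying Jensen's inequality to $\Pm(\text{accept})=\Em_q(\e^{-S_T})$; your added remark that $R_t\geq0$ makes $\Em_q(R_t)$ well defined in $[0,\infty]$ is a small point the paper leaves implicit.
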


The proof is in~\ref{app:proofs}. The proposition separates the two sources of envelope error. Inside the node range, the error is quadratic in the largest node spacing. If curvature is uniformly bounded and $\Delta_t=O(G^{-1})$, this contribution accumulates at rate $T/G^2$. The remaining term is the contribution from outside the node range. Thus, if the node ranges also provide sufficiently uniform tail coverage, taking $G$ proportional to $\sqrt{T}$ keeps acceptance bounded away from zero. At that rate the backward pass costs $O(TG^2)=O(T^2)$ and each proposal costs $O(TG)=O(T^{3/2})$, the latter being the marginal cost of an additional draw once the backward pass has been constructed.

Proposition~\ref{prop:rate} remains valid for the practical grid, but it does not impose the uniform control of the tail terms $\Em_q(R_t)$ that an asymptotic claim requires. The following assumption and theorem instead construct a companding grid for which both the curvature and tail contributions can be controlled uniformly. The result is stated in centered coordinates. For deterministic or data-dependent centers $c_{t,T}$, write $\tilde f_t(x)=f_t(c_{t,T}+x)$ and $X_t = h_t-c_{t,T}$ for the centered state. Both parts of the assumption are imposed over the single family of companding node sets constructed in Theorem~\ref{thm:scaling} from the constants $b$ and $W$ appearing below.

\begin{assumption}[Curvature and tail control] \label{as:scaling}
Each $\ell_t$ is twice continuously differentiable, and there exist constants $b>0$ and $\varsigma>2b/3$, a continuous function $W\geq1$ with $|\log W(x)-\log W(y)|\leq L_W|x-y|$ for all $x,y$, centers $c_{t,T}$, and finite nonnegative coefficients $A_{t,T}$ and $B_{t,T}$ with $\sup_TT^{-1}\sum_{t=1}^TA_{t,T}B_{t,T}<\infty$, such that for every $t$ and $T$, uniformly over the node sets of Theorem~\ref{thm:scaling}: (i) $0\leq-\tilde f_t''(x)\leq A_{t,T}\,W(x)$ for every $x$; and (ii) $\Em_q\big[\e^{\varsigma X_t^2}\big]\leq B_{t,T}$. The centers and the coefficients may be data dependent, in which case they are measurable functions of the data, defined together with $b$, $\varsigma$ and $W$ on a common probability-one event, every expectation under $q$ is read conditionally on the realized data, and the average condition on $A_{t,T}B_{t,T}$ is required to hold almost surely on that event.
\end{assumption}

Part (i) can be checked directly from the observation density. The backward recursion cannot generate uncontrolled curvature on its own: Lemma~\ref{lem:curvature} in \ref{app:proofs} shows that $0\leq-f_t''\leq-\ell_t''+\beta_{t+1}^2/\omega_{t+1}$, whatever the number and placement of the nodes at later dates. Part (ii) is the substantive condition, because it restricts an object generated by the algorithm rather than a primitive feature of the model. It requires the centered proposal marginals to be sub-Gaussian, uniformly in $t$, $T$ and $G$, at an exponent $\varsigma$ strictly above $2b/3$; the proof of Theorem~\ref{thm:scaling} shows that this is exactly what the approximation bound consumes. It is not automatic: the marginals of $q$ are outputs of the backward recursion, and the bound can fail for otherwise admissible node sets. The quantification is joint---the assumption asserts the existence of $b$, $\varsigma$ and $W$ such that the associated grids satisfy both (i) and (ii). In \ref{app:svscaling}, we construct such a triple for the stochastic volatility model and verify both conditions for the resulting grids.

\begin{theorem}[Scaling of the acceptance probability] \label{thm:scaling}
Let Assumptions~\ref{as:model} and \ref{as:scaling} hold, and at each date place $G\geq2$ nodes at $v_{t,j} = c_{t,T}+x_{j,G}$, where $x_{j,G} = \Lambda^{-1}\{(j-\frac{1}{2})/G\}$ and $\Lambda$ is the distribution function of the density proportional to $\{W(x)\e^{-bx^2}\}^{1/3}$. Then there is a constant $C$, depending only on $b$, $\varsigma$, $L_W$ and $W$, such that
\[
	-\log\Pm(\text{accept}) \leq \frac{C}{G^{2}}\sum_{t=1}^{T}A_{t,T}B_{t,T}.
\]
\end{theorem}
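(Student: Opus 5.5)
Proof plan. Start, as in Proposition~\ref{prop:rate}, from Jensen's inequality: $-\log\Pm(\text{accept})\leq\Em_q(S_T)=\sum_t\Em_q\{\psi_t(h_t)-f_t(h_t)\}$. It therefore suffices to show, for each $t$, that
\[
\Em_q\{\psi_t(h_t)-f_t(h_t)\}\leq CA_{t,T}B_{t,T}/G^2 ,
\]
with $C$ independent of $t$, $T$ and $G$. Work in centered coordinates, $\tilde\psi_t(x)=\psi_t(c_{t,T}+x)$, so the nodes are $x_{j,G}$. The residual $r_t(x)=\tilde\psi_t(x)-\tilde f_t(x)$ is nonnegative and vanishes at the nodes.

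\emph{Pointwise bound on $r_t$.} I will use a local version of the tangent-line error. Suppose $x$ lies between adjacent nodes $x_j<x_{j+1}$. The minimum of the two tangent lines is at most the better of the two Taylor remainders. Hence
\[
r_t(x)\leq\tfrac12\sup_{[x_j,x_{j+1}]}(-\tilde f_t'')\,\min\{(x-x_j)^2,(x_{j+1}-x)^2\}\leq\tfrac18 A_{t,T}\,\bar W_j\,\Delta_j^2 ,
\]
where $\bar W_j$ is the supremum of $W$ on the cell and $\Delta_j=x_{j+1}-x_j$. Now bound $\Delta_j$ through the companding density $\lambda\propto(We^{-bx^2})^{1/3}$. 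Each cell carries $\Lambda$-mass $1/G$, so $\Delta_j\inf_{\text{cell}}\lambda\leq 1/G$. The log-Lipschitz property of $W$ and the Gaussian factor make $\lambda$ comparable across a cell, provided the cell width is $O(1)$, up to the factor $e^{b|x|\Delta/3}$. This gives $\Delta_j\lesssim G^{-1}W(x)^{-1/3}e^{bx^2/3}$, and therefore
\[
r_t(x)\lesssim A_{t,T}G^{-2}W(x)^{1/3}e^{2bx^2/3}.
\]
The exponent $2b/3$ is exactly what condition (ii) with $\varsigma>2b/3$ is designed to absorb. The leftover $W^{1/3}$ is at most $e^{\epsilon x^2}$ times a constant, because $\log W$ is Lipschitz, so $W$ grows at most exponentially in $|x|$. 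This fits in the slack $\varsigma-2b/3$.

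\emph{Tails and outer cells.} Two regions need separate treatment.

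The first is $x$ outside $[x_1,x_G]$. There $r_t(x)\leq\tfrac12\sup_{[x_1,x]}(-\tilde f_t'')(x-x_1)^2$ for $x<x_1$, and symmetrically on the right. Since $x_1=\Lambda^{-1}(1/(2G))$, the Gaussian-type tail of $\lambda$ gives $e^{-bx_1^2/3}\approx G^{-1}$, up to polynomial and $W$ factors. So the quadratic growth $(x-x_1)^2\sup W$ is again dominated by $G^{-2}e^{(2b/3+\epsilon)x^2}$ for $|x|\geq|x_1|$.

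The second is the outer interior cells, where the width is not $O(1)$ and the comparability argument fails. I would split on $|x|\leq\kappa\sqrt{\log G}$ versus larger $|x|$. On the inner part the cell widths are $O(G^{-1}e^{bx^2/3})=o(1)$ and the local argument above applies. On the outer part I would use the crude bound $r_t(x)\lesssim A_{t,T}\sup W\cdot x^2$ together with $e^{-bx^2/3}\lesssim G^{-1}$ up to subexponential factors. I expect this step to be the main obstacle: obtaining a uniform constant that holds across all three regimes, with the polynomial and $W$ factors tucked into the gap $\varsigma-2b/3$.

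\emph{Conclusion.} Combining the three regions yields $r_t(x)\leq C_0A_{t,T}G^{-2}e^{\varsigma x^2}$ for all $x$, with $C_0$ depending only on $b,\varsigma,L_W,W$. Taking expectations under $q$ and applying Assumption~\ref{as:scaling}(ii) gives
\[
\Em_q\{r_t(X_t)\}\leq C_0A_{t,T}B_{t,T}/G^2 .
\]
Summing over $t$ completes the argument. In the data-dependent case this is done pathwise on the probability-one event.
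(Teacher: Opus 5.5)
Your outer structure matches the paper's: Jensen's inequality, a pointwise bound of order $A_{t,T}G^{-2}W(x)^{1/3}e^{2bx^2/3}$ on the centered residual, absorption into $e^{\varsigma x^2}$ through the slack $\varsigma-2b/3$, and Assumption~\ref{as:scaling}(ii). The paper isolates the pointwise bound as Lemma~\ref{lem:quant}. It uses the tangent at the node of the $\lambda$-mass-$1/G$ cell containing $x$ rather than the better of two adjacent tangents, which is a cosmetic difference.

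The gap is the step you flag as the main obstacle, and it rests on a false premise: the outer interior gaps are not wide, so comparability never fails there. The missing idea is to bound $\lambda$ from below at the outer end of each gap, using the tail mass that remains beyond it. For $y>x\geq\bar x=L_W/b$, the log-Lipschitz hypothesis gives $\log\lambda(y)-\log\lambda(x)\leq-\vartheta(x)(y-x)$ with $\vartheta(x)=(2bx-L_W)/3\geq bx/3$. So $\lambda$ is decreasing beyond $\bar x$, and its mass beyond $x$ is at most $\lambda(x)/\vartheta(x)$. The mass beyond $x_{j+1}$ is $1-(j+\frac12)/G\geq1/(2G)$. Hence a right-tail gap $[x_j,x_{j+1}]$ has $\lambda(x_{j+1})\geq\vartheta(x_{j+1})/(2G)$, and so $\Delta_j\leq1/\{G\lambda(x_{j+1})\}\leq2/\vartheta(x_{j+1})$, which gives $\Delta_j|x|\leq6/b$ on the gap. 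A gap meeting $[-\bar x,\bar x]$ has bounded length: for large $G$, a long one would contain a unit interval near the core with $\lambda$-mass exceeding $1/G$, and the finitely many small $G$ are a compact case. The factor $\exp\{L_W\Delta_j/3+b\Delta_j|x+y|/3\}$ by which $\lambda$ can vary across a gap is therefore bounded uniformly in $j$ and $G$. Your local argument then gives the pointwise bound on all of $[x_1,x_G]$, and no $\sqrt{\log G}$ split is needed. This is Step~1 of the paper's proof of Lemma~\ref{lem:quant}, phrased there as uniform boundedness of segment length times $|x+y|$.

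The split you propose does not close as written. The inner claim $G^{-1}e^{bx^2/3}=o(1)$ on $|x|\leq\kappa\sqrt{\log G}$ requires $b\kappa^2<3$. The outer claim $e^{-bx^2/3}\lesssim G^{-1}$ on $|x|\geq\kappa\sqrt{\log G}$ requires $b\kappa^2\geq3$. At the boundary value, the crude width estimate gives only $O(1)$ gaps at $|x|\sim\sqrt{\log G}$, and the factor $e^{b|x|\Delta/3}$ is then uncontrolled without the estimate above. The split can be rescued by taking $\kappa^2\in(2/\varsigma,3/b)$, an interval that is nonempty precisely because $\varsigma>2b/3$. On the outer part, use $G^{-2}\geq e^{-2x^2/\kappa^2}$ instead. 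Then the crude bound, of order $A_{t,T}(1+x^2)e^{L_W|x|}$ from a node within a fixed distance of the origin, is at most a constant times $A_{t,T}e^{(\varsigma-2/\kappa^2)x^2}\leq A_{t,T}G^{-2}e^{\varsigma x^2}$. Your treatment beyond the extreme nodes is sound once the reverse tail estimate $\lambda(x_1)\lesssim(1+|x_1|)/G$ is made explicit. There you spend part of the slack $\varsigma-2b/3$, whereas the paper keeps the sharp factor $W(x)^{1/3}e^{2bx^2/3}$ through the decay $\lambda(x)\leq\lambda(x_G)e^{-\vartheta(x_G)(x-x_G)}$. That difference is immaterial for the theorem.
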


The proof is in~\ref{app:proofs}. Under the average bound of Assumption~\ref{as:scaling}, the right-hand side is $O(T/G^2)$. Hence $G$ proportional to $\sqrt{T}$ keeps the acceptance probability bounded away from zero, $G/\sqrt{T}\rightarrow\infty$ drives it to one, and $G$ proportional to $\sqrt{T/\log T}$ prevents it from decaying faster than polynomially. These conclusions apply to the companding grid in the theorem. The Monte Carlo results in Section~\ref{s:MC} show that the practical grid of Section~\ref{ss:nodes} displays the same $T/G^2$ scaling empirically, while delivering higher acceptance in the designs considered there. For the stochastic volatility model, the assumptions underlying the theoretical grid can be verified rather than imposed.

\begin{corollary} \label{cor:svscaling}
Consider the stochastic volatility model of Section~\ref{ss:model} with $\sigma_h^2>0$ and $|\phi|<1$, the state initialized from its stationary distribution. With centers $c_{t,T} = \log(1+y_t^2)$ and envelope $W(x) = 1+\e^{|x|}$, there are deterministic constants $b>0$ and $\varsigma>2b/3$, depending only on the model parameters, such that Assumption~\ref{as:scaling} holds almost surely under the data generating process. Consequently, for almost every data path there is a finite random constant $C_y$ for which
\[
	-\log\Pm(\text{accept})\leq C_y\,\frac{T}{G^{2}}
\]
at the node sets of Theorem~\ref{thm:scaling}, simultaneously for every $T$ and every $G\geq2$.
\end{corollary}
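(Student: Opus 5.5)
The plan is to verify the two parts of Assumption~\ref{as:scaling} separately for the stochastic volatility model, with the stated centers and envelope, and then apply Theorem~\ref{thm:scaling}. The constants $b$ and $\varsigma$ will be deterministic. The coefficient $A_{t,T}$ will be deterministic, and $B_{t,T}$ will be an explicit function of the data. The average condition on $A_{t,T}B_{t,T}$ will then follow from the ergodic theorem, because the data path is a stationary ergodic process. Since a convergent sequence of Cesàro averages is bounded, this gives the $\sup_T$ requirement, and we take $C_y = C\sup_T T^{-1}\sum_t A_{t,T}B_{t,T}$.

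\emph{Part (i).} Write $x=u-c_{t,T}$. Then $y_t^2\e^{-u} = \{y_t^2/(1+y_t^2)\}\e^{-x}\leq \e^{-x}\leq W(x)$, so $0\leq -\tilde\ell_t''(x)\leq \tfrac12 W(x)$. Lemma~\ref{lem:curvature} gives $0\leq -\tilde f_t''\leq -\tilde\ell_t''+\phi^2/\sigma_h^2$ whatever the nodes are. Since $W\geq1$, part (i) holds with the deterministic constant $A_{t,T}=A=\tfrac12+\phi^2/\sigma_h^2$. The envelope $\log W$ is Lipschitz with constant $L_W=1$. Twice continuous differentiability of $\ell_t$ is immediate.

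\emph{Part (ii).} This is the substantive step. The twisted transition $q_t(\cdot\gvn u)\propto \distn{N}(v;a_t(u),\sigma_h^2)\e^{\psi_t(v)}$ has $\psi_t$ concave by Lemma~\ref{lem:concave}(ii). It is therefore strongly log-concave with parameter $1/\sigma_h^2$, and the same holds for $q_1$ with $\omega_1$. By the Bakry--Émery / Caffarelli comparison, $\Em[\e^{s(h_t-\mathcal M_t(h_{t-1}))^2}\gvn h_{t-1}]\leq (1-2s\sigma_h^2)^{-1/2}$ for $s<1/(2\sigma_h^2)$. It then suffices to control the location $\mathcal M_t(u)-c_{t,T}$ by an affine function of $|u-c_{t-1,T}|$ and of data-dependent terms such as $|c_{t,T}|$ and $|c_{t-1,T}|$, with coefficients independent of $G$ and of the nodes. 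The argument uses the first-order condition for the mode, $(v-a_t(u))/\sigma_h^2\in\partial\psi_t(v)$, together with monotonicity of $\psi_t'$. For $v$ inside the node range, $\partial\psi_t(v)$ lies between the values of $f_t'$ at the two neighbouring nodes. For $v$ outside the range, the extreme slope is at least as favorable, because it pulls toward the center: on the left it is $f_t'(v_{t,1})\geq f_t'(v)$, and symmetrically on the right. Hence the drift of the twisted chain toward $c_{t,T}$ is no weaker than that produced by $f_t$ evaluated one grid cell away. The next ingredient is a bound on $f_t'$ near $c_{t,T}$. By \eqref{eq:derivative}, $f_t'=\ell_t'+\phi\{\mathcal M_{t+1}-a_{t+1}\}/\sigma_h^2$, and $\tilde\ell_t'(x)=-\tfrac12+\tfrac12\{y_t^2/(1+y_t^2)\}\e^{-x}$ is bounded for $x\geq -1$. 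A backward induction then bounds $|\tilde f_t'(0)|$ by a linear combination of $|c_{s,T}-c_{s\pm1,T}|$ with geometrically decaying weights $|\phi|^{s-t}$. Combining this with the forward Gaussian concentration gives the sub-Gaussian bound. We obtain $\Em_q[\e^{\varsigma X_t^2}]\leq B_{t,T}= K\exp\{\kappa\sum_s |\phi|^{|s-t|}(1+|c_{s,T}|)\}$ for some $\varsigma<1/(2\sigma_h^2)$ fixed by the model. Finally we choose $b<3\varsigma/2$.

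\emph{Averaging.} Under the data generating process, $c_{t,T}=\log(1+y_t^2)\leq |h_t|+\log(1+\epsilon_t^2)$. This has Gaussian-type tails, so for $\kappa$ small the stationary ergodic sequence $B_{t,T}$, truncated at the sample boundaries, has finite mean. The boundary effect only lowers the sums, so domination by the two-sided stationary sequence suffices, and Birkhoff's theorem gives $T^{-1}\sum_tB_{t,T}\to$ a finite limit almost surely. The main obstacle is making the location bound uniform in $G$. The extreme companding nodes move out like $\sqrt{\log G}$, and $f_t'$ is exponentially large there on the left. The step I would check first is therefore that these large slopes only strengthen the pull toward the center and never push the twisted mean outward. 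If the one-grid-cell comparison fails near the boundary of the node range, the fallback is to use the inequality $\psi_t\geq f_t$ together with tangency at the central nodes, and to bound the tails of $q_t$ directly by those of $\distn{N}(v;a_t(u),\sigma_h^2)\e^{\tilde f_t}$.
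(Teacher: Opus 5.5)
Your architecture matches the paper's: a centered curvature bound through Lemma~\ref{lem:curvature}, conditional concentration from strong log-concavity, a backward score recursion with weights $|\phi|^{s-t}$, a forward drift recursion, and Birkhoff. Part~(i) is correct, and so is your log-Sobolev/Herbst concentration about the conditional mean. The gap is the step you flag yourself, and all of part~(ii) rests on it. You need a bound on the supergradient $s\in\partial\psi_t(r)$ at the twisted mode $r$ that is uniform in $G$ and in the extreme nodes, with coefficient exactly one on $|f_t'(c_t)|$; otherwise the backward recursion does not contract at rate $|\phi|$.

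Your out-of-range argument has the inequality reversed. For $v<v_{t,1}$, concavity gives $f_t'(v)\geq f_t'(v_{t,1})$, so in the tails the envelope pulls toward the center \emph{less} than $f_t$ does. The fallback fails for the same reason. Since $\psi_t\geq f_t$, $q_t$ is heavier-tailed than the $f_t$-twisted Gaussian. Moreover, $\e^{\psi_t}/\hat C_t$ cannot be bounded apart from the Gaussian factor, because the slopes of $\psi_t$ at distance $d$ from the center can be of order $\e^{d}$.

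The missing idea is to compare with the slope at the \emph{center}, not at the mode.
\begin{itemize}
\item For every $G\geq2$ some $(j-\frac12)/G$ lies in $[\frac14,\frac34]$. So $V_t$ has a node $v^*$ with $|v^*-c_t|\leq\bar\varkappa:=|\Lambda^{-1}(\frac14)|\vee|\Lambda^{-1}(\frac34)|$, a quantity free of $G$.
\item $\psi_t$ has slope $f_t'(v^*)$ at $v^*$ (Lemma~\ref{lem:intervals}). Your curvature bound gives $|f_t'(v^*)|\leq|f_t'(c_t)|+\bar c$ with $\bar c=\bar\varkappa(\frac12\e^{\bar\varkappa}+\phi^2/\sigma_h^2)$.
\item Use the first-order condition $r-a=\sigma^2s$, with $\sigma^2$ the transition variance, together with monotonicity of $\partial\psi_t$. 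If $r\geq v^*$ then $s\leq f_t'(v^*)$. If moreover $s\leq0$, then $v^*\leq r\leq a$, so $\sigma^2|s|\leq a-v^*$. The case $r\leq v^*$ is symmetric.
\item Hence $|s|\leq|f_t'(c_t)|+\bar c+(|a-c_t|+\bar\varkappa)/\sigma^2$, whatever the extreme nodes do: a large supergradient forces the mode to lie between $a$ and $v^*$.
\item At date $t+1$, with $a=a_{t+1}(c_t)$ and $|\mathcal M_{t+1}(c_t)-r|\leq\sqrt{\pi/2}\,\sigma_h$, this yields $|f_t'(c_t)|\leq|\phi|\,|f_{t+1}'(c_{t+1})|+(|\phi|/\sigma_h^2)\delta_{t+1}+\text{const}$. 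Here $\delta_{t+1}=|c_{t+1}-\mu-\phi(c_t-\mu)|$, not $|c_{t+1}-c_t|$.
\item At $u=c_{t-1}$, combine it with $|\mathcal M_t(u)-\mathcal M_t(u')|\leq|\phi|\,|u-u'|$, which follows from $\mathcal M_t'=\phi\mathcal V_t/\sigma_h^2$ and $\mathcal V_t\leq\sigma_h^2$. This gives the contracting location bound your drift step needs.
\end{itemize}

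Second, your stated $B_{t,T}=K\exp\{\kappa\sum_s|\phi|^{|s-t|}(1+|c_{s,T}|)\}$ has the wrong form, because the offset $X_t$ can be linear in neighbouring centers. Take a high-volatility stretch with $c_s\approx H$ for $s<t$, followed by $y_t\approx0$. Under $q$ this puts $h_t$ near $\phi H$ while $c_t\approx0$. Then $\Em_q[\e^{\varsigma X_t^2}]$ is of order $\e^{\varsigma\phi^2H^2}$, which no exponential of a linear function of the centers dominates.

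What the drift recursion actually delivers is an exponential of a geometrically weighted sum of \emph{squared} offsets. Let $\varrho_t$ be the data term in the location bound, $M_t=\Em_q[\e^{\varsigma X_t^2}]$, and $\epsilon>0$ a Young constant. Young's inequality gives $X_t^2\leq\bar\rho^2X_{t-1}^2+c_\epsilon\varrho_t^2+(1+1/\epsilon)\{h_t-\mathcal M_t(h_{t-1})\}^2$, with $\bar\rho=(1+\epsilon)|\phi|<1$ and $c_\epsilon=(1+\epsilon)(1+1/\epsilon)$. Jensen for $x\mapsto x^{\bar\rho^2}$ then closes $M_t\leq C\e^{\varsigma c_\epsilon\varrho_t^2}M_{t-1}^{\bar\rho^2}$.

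The averaging step therefore needs exponential-square moments, which ``Gaussian-type tails, $\kappa$ small'' does not supply.
\begin{itemize}
\item Of the centers: for $c_t=\log(1+\e^{h_t}\epsilon_t^2)$ they hold at exponents below $1/(4\omega_1)$, using only the right tail of $h_t+\log\epsilon_t^2$.
\item Of the infinite geometric sums of the dependent $\delta_s$ inside $\varrho_t$, via the triangle inequality for the sub-Gaussian Orlicz norm.
\item $\varsigma$ must then be chosen below a threshold set by those exponents, by $1/(2\omega_1)$, and by $\bar\rho$.
\end{itemize}
That threshold is free of $b$, because the grid enters only through additive constants such as $\bar\varkappa$ and $\bar c$. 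This is what legitimizes fixing $\varsigma$ before $b$. Finally, date~1 is not a truncation of the stationary sums: it has variance $\omega_1$ and no contraction term. Its influence therefore needs separate domination, which the paper handles with an almost surely finite factor.
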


The square-root rate has a counterpart in the discretization filter of \citet{Farmer2021}, which replaces the continuous state process by a finite Markov chain and, for a scalar state, recommends growing the number of chain states in proportion to $\sqrt{T}$. The two prescriptions control different objects. There, refinement drives the error of approximating the continuous-state model to zero fast enough for the approximate likelihood to support valid inference; here, exactness holds at every finite node set by Theorem~\ref{thm:exact}, and refinement governs only the acceptance probability.

\subsection{The Algorithm} \label{ss:algorithm}

Everything the algorithm requires is now in place: the closed-form weights and derivatives of Section~\ref{ss:pieces} and the nodes of Section~\ref{ss:nodes}. A single backward pass forms the twists $\psi_1,\ldots,\psi_T$ and delivers the dominating constant $\hat C_1$ as a by-product, after which each proposal is one forward pass, accepted or rejected by \eqref{eq:accept}. It is summarized in Algorithm~\ref{alg:main}.

\begin{algorithm}[ht]
\caption{Tangent-twisted rejection sampling of the state path.}
\label{alg:main}
\begin{enumerate}
	\item \textit{Nodes.} Compute the posterior mode $m_t$ and the scale $s_t$, and set $v_{t,j}=m_t+w_js_t$.

	\item \textit{Backward pass.} Set $f_T(v_{T,j})=\ell_T(v_{T,j})$ and $f_T'(v_{T,j})=\ell_T'(v_{T,j})$. Then for $t=T,T-1,\ldots,2$:
	\begin{enumerate}
		\item[(a)] form $\psi_t$ from $\{v_{t,j},f_t(v_{t,j}),f_t'(v_{t,j})\}$, that is the slopes $g_{t,j}$, intercepts $\alpha_{t,j}$ and intersections $z_{t,j}$, using Lemma~\ref{lem:intervals} under strict concavity and the merging-and-pruning rule following that lemma otherwise;
		\item[(b)] evaluate $\hat C_t(v_{t-1,j})$ and $\mathcal{M}_t(v_{t-1,j})$ from \eqref{eq:weights} for $j=1,\ldots,G_{t-1}$;
		\item[(c)] set $f_{t-1}(v_{t-1,j})=\ell_{t-1}(v_{t-1,j})+\log\hat C_t(v_{t-1,j})$ and obtain $f_{t-1}'(v_{t-1,j})$ from \eqref{eq:derivative}.
	\end{enumerate}
	Finally form $\psi_1$ and compute $\hat C_1$.

	\item \textit{Proposal.} Draw $h_1\sim q_1$ and set $L=0$. Then for $t=2,\ldots,T$: compute the weights $w_{t,j}(h_{t-1})$ and $\hat C_t(h_{t-1})$ from \eqref{eq:weights}; update
	$L\leftarrow L+\ell_{t-1}(h_{t-1})+\log \hat C_t(h_{t-1})-\psi_{t-1}(h_{t-1})$; and draw $h_t$ from the piecewise normal density $q_t(\cdot \gvn h_{t-1})$. Finally update $L\leftarrow L+\ell_T(h_T)-\psi_T(h_T)$.

	\item \textit{Accept-reject.} Draw $U\sim\distn{U}(0,1)$. Return $\bh$ if $\log U\leq L$; otherwise return to step 3.
\end{enumerate}
\end{algorithm}

Step 1 costs $O(T)$ operations per Newton iteration, because the precision matrix is tridiagonal and each iteration is therefore a band solve, as in \citet{chan17}. Step 2 evaluates $G$ weights at each of $G$ nodes for each of $T$ periods and therefore requires $O(TG^2)$ evaluations of $\Phi$, where $G$ denotes the common number of nodes. Each proposal in step 3 requires $O(TG)$ evaluations, and the expected number of proposals per accepted draw is $\hat C_1/Z$. Steps 1 and 2 depend on the data and on the model parameters, but not on the number of draws required, so their cost is amortized when many draws are taken at a fixed parameter vector.

Theorem~\ref{thm:exact} guarantees that the right-hand side of \eqref{eq:accept} is nonpositive, but floating-point error can produce small positive values. Three precautions are therefore important. The mixture weights \eqref{eq:weights} are accumulated by log-sum-exp rather than by direct exponentiation and summation. Differences of normal distribution functions in \eqref{eq:weights} are evaluated using log distribution and survival functions, according to the signs of their arguments, since direct subtraction loses precision when both arguments lie far in the same tail. Finally, we monitor the largest value of $\sum_t d_t(h_t)$ over all proposals; a value materially above machine precision indicates a numerical or coding error and is investigated rather than silently truncated.

These diagnostics are used throughout the Monte Carlo study and empirical application. On deterministic grids containing the nodes, tangent intersections, interval midpoints and far-tail points, the residual never exceeds $4.5\times10^{-13}$, even in the demanding designs with $T=4{,}000$, $\phi=0.995$ or $\sigma_h^2=0.5$; positive values occur only at the level of double-precision roundoff. Direct exponentiation and subtraction of normal probabilities are numerically unstable in these designs, so the log-domain calculations described above are used throughout.

\section{Uses Beyond Exact State Simulation} \label{s:uses}

Beyond producing independent exact state draws, the tangent-twisted proposal has two uses that are particularly important for this paper. First, its bounded importance weights give a nonnegative unbiased likelihood estimator with guaranteed finite variance. Second, exact draws provide a benchmark for assessing approximate state smoothers. We develop these two uses first and then record several additional consequences of the same bound. The backward pass of Algorithm~\ref{alg:main} depends on the data and on the model parameters, so it must be rebuilt whenever the parameter vector changes, at a cost of $O(TG^2)$. Uses that vary the parameters are therefore most attractive when several state draws or likelihood evaluations are taken at each parameter value, or when computations can be parallelized.

Let $\bm{\theta}$ collect the parameters of the state and observation equations, and write $\gamma_{\bm{\theta}}$, $q_{\bm{\theta}}$, $\hat C_1(\bm{\theta})$ and $\ell_{t,\bm{\theta}}$ for the objects of Sections~\ref{s:method} and \ref{s:implementation} evaluated at $\bm{\theta}$. The normalizing constant
\[
    Z(\bm{\theta})
    =
    \int_{\mathbb{R}^T}\gamma_{\bm{\theta}}(\bh)\,\di\bh
    =
    p(\by\gvn\bm{\theta})
\]
is the observed-data likelihood, and $w_{\bm{\theta}}(\bh) =\gamma_{\bm{\theta}}(\bh)/q_{\bm{\theta}}(\bh)$ satisfies $ 0<w_{\bm{\theta}}(\bh)\leq\hat C_1(\bm{\theta})$ and $ \Em_{q_{\bm{\theta}}} \{w_{\bm{\theta}}(\bh)\}
 = Z(\bm{\theta})$ by Theorem~\ref{thm:exact} and Corollary~\ref{cor:is}. Write
$p_{\bm{\theta}}=Z(\bm{\theta})/\hat C_1(\bm{\theta})$ for the corresponding
rejection acceptance probability, reserving $p(\bm{\theta})$ for the prior
density.

\subsection{Likelihood Evaluation} \label{ss:likeuse}

For $\bh^{(1)},\ldots,\bh^{(M)}$ drawn independently from $q_{\bm{\theta}}$, define
\begin{equation} \label{eq:Zhat}
    \hat Z_M(\bm{\theta})
    =
    \frac{1}{M}
    \sum_{m=1}^{M}
    w_{\bm{\theta}}\big(\bh^{(m)}\big).
\end{equation}
The estimator is nonnegative and unbiased for $Z(\bm{\theta})$.
Corollary~\ref{cor:is} further gives
\[
    \frac{\Var\{\hat Z_M(\bm{\theta})\}}
         {Z(\bm{\theta})^2}
    \leq
    \frac{1/p_{\bm{\theta}}-1}{M}.
\]
Thus the likelihood estimator has finite variance at every parameter value, with a bound determined by a quantity that has a direct algorithmic interpretation. For example, an acceptance probability of $0.75$ implies relative variance at most $1/(3M)$. The bound is the exact relative variance of the cruder estimator $\hat C_1(\bm{\theta})M^{-1}\sum_m\bone\{\text{accept}^{(m)}\}$ constructed from the rejection indicators alone. Estimator~\eqref{eq:Zhat} is its Rao--Blackwellization \citep{CasellaRobert1996}, so the continuous weights should be retained rather than discarded at the rejection step.

Estimator~\eqref{eq:Zhat} can be used for simulated likelihood evaluation,
likelihood ratios and comparison of parameter values, in the manner of the
integrated likelihood estimators of \citet{CJ09}. The likelihood itself is
estimated without bias, although $\log\hat Z_M(\bm{\theta})$ is not an
unbiased estimator of the log likelihood. The same estimator also supports the marginal likelihood calculation used in the empirical application in Section~\ref{s:app}. Let $p(\bm{\theta})$ be the prior density and let $g$ be an importance density whose support contains that of $Z(\bm{\theta})p(\bm{\theta})$. If $\bm{\theta}^{(1)},\ldots,\bm{\theta}^{(N)}$ are drawn from $g$, and an independent estimate $\hat Z_M(\bm{\theta}^{(n)})$ is constructed at each draw, then
\begin{equation} \label{eq:marglike}
    \hat p(\by)
    =
    \frac{1}{N}
    \sum_{n=1}^{N}
    \frac{p(\bm{\theta}^{(n)})}
         {g(\bm{\theta}^{(n)})}
    \hat Z_M(\bm{\theta}^{(n)})
\end{equation}
is unbiased for $p(\by)$. Bounded state weights rule out an infinite variance arising from integration over the high-dimensional state path at a fixed $\bm{\theta}$; the remaining tail requirement concerns the lower-dimensional importance density $g$. \ref{app:data} gives a sufficient condition for the choice used in the application. 

More generally, because $\hat Z_M(\bm{\theta})$ is nonnegative and unbiased, it can also be used in pseudo-marginal parameter samplers \citep{Beaumont2003,AndrieuRoberts2009,ADH10}; \citet{LiuPlagborgMoller2023} illustrate the reach of this idea in economics, basing full-information Bayesian inference for heterogeneous agent models with unobserved aggregate states on a numerically unbiased likelihood estimator.

\subsection{Benchmarking State Approximations} \label{ss:benchmark}

Exact simulation is especially useful for benchmarking approximate state smoothers. Assessing a particle smoother, a Laplace or variational approximation, or a Metropolis--Hastings state sampler ordinarily requires a reference that is itself approximate, so the comparison measures the  difference between two approximation errors. Accepted paths from Algorithm~\ref{alg:main} instead provide draws from the intended smoothing distribution, leaving only ordinary Monte Carlo error. Posterior expectations of integrable functions can therefore be estimated from accepted paths without burn-in, thinning or autocorrelation correction. Alternatively, all proposal draws can be retained and used by ordinary or self-normalized importance sampling. The bounded weights prevent a small number of paths from dominating these estimates.

The rejection bound also gives a direct measure of the quality of the proposal
itself. Writing $\pi_{\bm{\theta}}(\bh) = p(\bh\gvn\by,\bm{\theta}),$ we have
$\pi_{\bm{\theta}}(\bh)/q_{\bm{\theta}}(\bh) \leq 1/p_{\bm{\theta}},$ with equality attained. Write $D_{\infty}$, $\mathrm{KL}$ and $\chi^{2}$ for the R\'{e}nyi divergence of order infinity, the Kullback--Leibler divergence and the chi-squared divergence. We have $D_{\infty} (\pi_{\bm{\theta}}\Vert q_{\bm{\theta}})  =  -\log p_{\bm{\theta}},$ $ \mathrm{KL} (\pi_{\bm{\theta}}\Vert q_{\bm{\theta}})    \leq   -\log p_{\bm{\theta}},$ and $ \chi^2  (\pi_{\bm{\theta}}\Vert q_{\bm{\theta}})   \leq  1/p_{\bm{\theta}}-1.$ Section~\ref{s:MC} uses exact simulation in precisely this benchmarking role.

\subsection{Further Consequences} \label{ss:furtheruses}

Several other consequences follow from the same bound. Algorithm~\ref{alg:main} can be used as an exact blocked state update in a posterior sampler, drawing $\bh$ from $p(\bh\gvn\by,\bm{\theta})$ before updating $\bm{\theta}$. This removes approximation and within-block Markov-chain error from the state update, although dependence remains through the parameter draws. Likewise, an independence sampler based on $q_{\bm{\theta}}$ satisfies
$P_{\bm{\theta}}(\bh,A)   \geq  p_{\bm{\theta}}\pi_{\bm{\theta}}(A)$ and is therefore uniformly ergodic \citep{tierney94,MengersenTweedie1996}. These observations are theoretical consequences rather than recommended implementations, since each draws only one state path per parameter value.

Finally, although $p_{\bm{\theta}}$ is unknown, it is itself an acceptance
probability and can therefore be estimated from bounded random variables.
A lower confidence bound for $p_{\bm{\theta}}$ immediately gives corresponding
upper bounds on $1/p_{\bm{\theta}}-1$ and $-\log p_{\bm{\theta}}$. This is
qualitatively different from diagnosing the variance of unbounded importance
weights, which may be driven by rare events absent from a finite simulation.

\section{Relation to Existing State Simulation Methods} \label{s:related}

The proposal developed here shares its architecture with several existing simulation smoothers, but differs in what it asks of the approximation error. In twisted methods, each transition density is multiplied by a function of the current state and renormalized, giving an importance density of the form \eqref{eq:proposal}. Lemma~\ref{lem:sep} is a property of this architecture: apart from an additive constant, the log target-to-proposal ratio is a sum of univariate functions of the states. Efficient importance sampling chooses Gaussian twists by backward regressions on simulated draws \citep{RichardZhang2007, LMRD2013}; the HESSIAN method of \citet{McCausland2012}, confined like the present paper to univariate states, matches derivatives of the target conditional log density through fifth order; and \citet{ScharthKohn2016} combine global twisting with particle resampling. A related family instead approximates the smoothing density around its mode \citep{SP97, DurbinKoopman1997, JK08, MMP11, chan17}. These methods seek a target-to-proposal ratio that varies little; the tangent construction instead controls its sign.

That distinction is a trade-off. The construction here matches only the level and first derivative of the backward function at each node, giving an error of order $\Delta^2$ in the node spacing, coarser than a fifth-order fit or a variance-minimizing regression. In exchange, every $d_t$ is nonpositive, so Lemma~\ref{lem:sep} turns the global dominating constant into a computable quantity. Importance sampling requires only support coverage and integrable weights; bounded weights, as in \citet{McCausland2012}, are a stronger sufficient condition that also guarantees finite variance. Rejection sampling requires more: the value of a finite dominating constant must be known. A period-by-period bound does not solve this problem, because multiplying the $T$ local bounds can lead to exponential deterioration when their average log gap is bounded away from zero. The tangent construction instead makes every coordinatewise residual supremum zero. Its univariate antecedents are \citet{Devroye1986} and \citet{GilksWild1992}; separability carries the envelope to the full state path.

The ensemble rejection sampler of \citet*{DeligiannidisDoucetRubenthaler2020} reaches exactness by a different route and covers a broader class of models, including nonlinear and non-Gaussian transitions. It performs rejection sampling on an extended particle space and obtains its cost guarantee from two-sided bounds on incremental weights. Under their Proposition~4, keeping acceptance bounded away from zero requires an ensemble size of order $T$, implying 
an $O(T^3)$ cost per exact draw. Those bounds do not apply in our setting because the state space is unbounded and the incremental weights have infimum zero; Theorem~\ref{thm:scaling} instead exploits concavity and Gaussian transitions to obtain the $T/G^2$ bound on the global envelope error.

Another class of methods approximates the measurement density rather than twisting the transitions. The auxiliary mixture sampler of \citet*{KSC98} for stochastic volatility replaces the measurement density by a finite mixture of normals, after which standard Gaussian simulation smoothers apply; extensions include leverage \citep{OCSN07}, stochastic volatility in mean \citep{HCO25}, and count and binomial models \citep{FruhwirthSchnatterWagner2006,FruhwirthSchnatterFruhwirthHeldRue2009}. These methods draw exactly from an approximating model, with any remaining discrepancy corrected, if desired, by reweighting. The construction here instead draws directly from the intended model under the log-concavity condition in Assumption~\ref{as:model}(ii).

The signed residual also distinguishes the method as an importance sampler. Corollary~\ref{cor:is} gives the explicit relative-variance bound $1/p-1$, which neither efficient importance sampling nor the HESSIAN method provides. Finite variance cannot in general be assumed for Gaussian proposals \citep{KoopmanShephardCreal2009}. For a Gaussian importance density with precision $\bQ_q$, the second moment is infinite whenever $\mathbf{d}'(2\bQ-\bQ_q)\mathbf{d}<0$ for a nonnegative direction $\mathbf{d}$ satisfying the tail condition of Section~\ref{ss:gaussian}. When $\bQ_q-\bQ$ is diagonal and the state coefficients $\beta_t$ are nonnegative, this reduces to an eigenvalue check on a tridiagonal matrix; for the Gaussian approximation at the mode, $\bQ_q$ is the matrix $\bK$ of Section~\ref{ss:gaussian} and the difference is diagonal by construction. The criterion fails in all stochastic volatility experiments and all six empirical fits below. The tangent weights are bounded by construction.

\section{Monte Carlo Evidence} \label{s:MC}

This section examines three aspects of the proposed method. First, we assess whether exact rejection sampling remains practical as the sample size grows and across different models. Second, we compare the resulting likelihood estimator with existing simulation-based likelihood methods. Third, we use exact simulation to benchmark commonly used approximate smoothers.

For the stochastic volatility design, $\mu=0$, $\phi=0.95$ and $\sigma^2_h=0.01$ unless stated otherwise. Acceptance probabilities are estimated by the Rao--Blackwellized form of the realized acceptance rate, $\widehat p = M^{-1}\sum_{m=1}^{M}\e^{-S_T^{(m)}}$, where $S_T=\sum_{t=1}^{T}\{\psi_t(h_t)-f_t(h_t)\}\geq0$ is the total envelope residual of a proposed path, as in Section~\ref{ss:scaling}, and $S_T^{(m)}$ is its value at the $m$th of $M$ proposals. All computations are in $\matlab$ on an otherwise idle desktop machine. The comparison in Section~\ref{ss:MClike} is calibrated by wall clock: the number of draws or particles each method receives is read off a measured cost model. The exact efficiency ratios are therefore hardware-specific; the common-budget protocol, rather than the numerical ratios, is the reproducible object. We run four checks to validate the implementation, each targeting a property the construction guarantees in theory.\footnote{First, on a dense grid, the residual $d_t$ never exceeds $9.1\times10^{-13}$ for any of the eight models in Table~\ref{tab:robust}, so the envelope is respected to numerical precision. Second, the likelihood estimates agree with deterministic quadrature for $T\leq50$. Third, when $y_t=0$ for every $t$, the observation log density is affine, the smoothing distribution is Gaussian, and the acceptance probability is $1.0$. Finally, at $T=60$, smoothed moments and quantiles from $50{,}000$ draws agree to within $0.01$ with those from the Gaussian-envelope sampler of Section~\ref{ss:gaussian}, which is exact and computationally feasible at that sample size.}

\subsection{Acceptance Performance} \label{ss:MCscaling}

Figure~\ref{fig:scaling} shows that the tangent construction delivers high acceptance probabilities even for long state paths. Panel~(a) compares it with the Gaussian envelope of Section~\ref{ss:gaussian}. For the Gaussian envelope, the acceptance probability decays approximately exponentially with $T$, falling to $\e^{-73.7}$ at $T=4{,}000$.\footnote{Probabilities this small are computed as follows: the two envelopes dominate the same unnormalized target, so the Gaussian acceptance probability equals $\widehat p\,\hat C_1\e^{-M_G}$ exactly, where $M_G$ is the Gaussian dominating constant of Section~\ref{ss:gaussian}, and only the factor $\widehat p$ is estimated. The same identity produces the Gaussian column of Table~\ref{tab:robust}.} With the tangent envelope and a fixed number of nodes, acceptance also declines with $T$, at a rate about one hundred times slower: at $T=4{,}000$ it is still $0.56$ with $G=81$. When instead $G$ is increased in proportion to $\sqrt{T}$, acceptance remains essentially constant at about $0.87$.

Panels~(b) and (c) show the source of this stability. The acceptance curves separate when plotted against $G$, but nearly coincide when plotted against $G/\sqrt{T}$. Thus the finite-sample behavior closely matches the $T/G^2$ envelope-error rate of Theorem~\ref{thm:scaling}: increasing the number of nodes at rate $\sqrt{T}$ prevents acceptance from deteriorating as the state dimension grows. Regressing $\log\Em_q(S_T)$ on $\log T$ and $\log G$ gives exponents of $1.00$ and $-2.03$, and the companding grid of Theorem~\ref{thm:scaling} gives $0.97$ and $-2.01$ at a uniformly lower level of acceptance.

\begin{figure}[H]
\centering
\includegraphics[width=\textwidth]{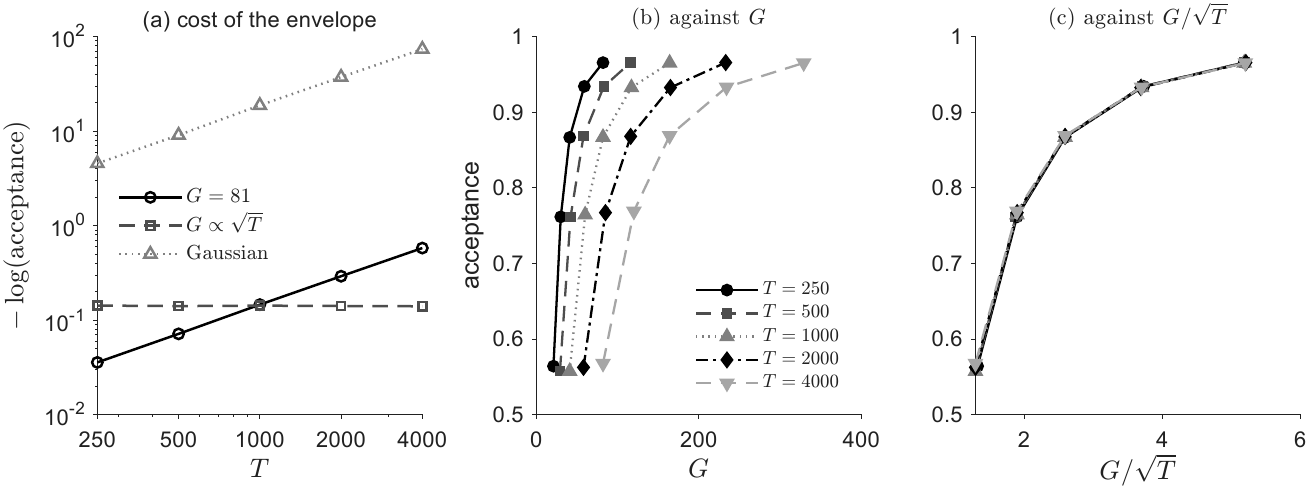}
\caption{Acceptance, sample size and node refinement for the stochastic volatility model. Panel (a) reports $-\log(\text{acceptance})$. Panels (b) and (c) report acceptance against $G$ and $G/\sqrt{T}$, respectively. Each point in panels (b) and (c) is based on 600 proposals.}
\label{fig:scaling}
\end{figure}

The high acceptance rates are not specific to stochastic volatility. Table~\ref{tab:robust} reports results for eight models at $T=500$, sharing the Gaussian AR(1) state equation and differing only in the observation density. Acceptance for the proposed method ranges from $0.811$ to $0.931$. In contrast, acceptance for the Gaussian envelope ranges from $1.1\times10^{-4}$ to $4.9\times10^{-66}$. The results are also stable across independently generated data sets: the standard deviation of the acceptance probability over twenty replications, reported in Table~\ref{tab:robust}, is below $0.004$ in seven of the eight designs, and $0.021$ for the Poisson. Hence the practical advantage of the tangent construction extends across the observation densities considered, rather than being specific to a particular stochastic volatility sample. 

\begin{table}[H]
\caption{Acceptance probabilities across eight models, $T=500$ and $G=81$. ``Across data sets'' is the standard deviation of acceptance over twenty independent data sets. The Student's $t$, duration and Poisson designs use $\sigma^2_h=0.05$, with $\mu=2$ for the Poisson, and the dynamic probit design $\phi=0.98$ and $\sigma^2_h=0.1$. The duration models have unit-mean exponential, gamma (shape $2$) and Weibull (shape $1.5$) errors. Every design uses the default node range $[-4,6]$ of Section~\ref{ss:nodes}, even where its orientation is unfavorable.}
\label{tab:robust}
\centering
\setlength{\tabcolsep}{4.5pt}
\begin{tabular}{lccc}
\hline\hline
Observation density & Tangent envelope & Across data sets & Gaussian envelope \\ \hline
Stochastic volatility         & 0.931 & 0.0010 & $1.1\times10^{-4}$ \\
\rowcolor{lightgray}
Stochastic volatility in mean & 0.918 & 0.0013 & $3.5\times10^{-6}$ \\
Student's $t$ errors & 0.896 & 0.0006 & $4.3\times10^{-10}$ \\
\rowcolor{lightgray}
Dynamic probit                 & 0.855 & 0.0034 & $1.2\times10^{-19}$ \\
Duration, exponential errors & 0.867 & 0.0004 & $3.2\times10^{-20}$ \\
\rowcolor{lightgray}
Duration, gamma errors & 0.851 & 0.0004 & $2.6\times10^{-30}$ \\
Duration, Weibull errors & 0.850 & 0.0004 & $7.7\times10^{-32}$ \\
\rowcolor{lightgray}
Poisson counts                 & 0.811 & 0.0206 & $4.9\times10^{-66}$ \\
\hline\hline
\end{tabular}
\end{table}

Acceptance also remains high when the stochastic volatility design is made more difficult: it is $0.885$ at $\phi=0.98$ and $0.855$ at $\phi=0.995$ as the state approaches a unit root, and $0.869$ and $0.835$ at $\sigma_h^2=0.1$ and $0.5$ as the innovation variance grows, while setting fifty observations exactly to zero, or inflating twenty observations by a factor of twenty, leaves it essentially unchanged.

\subsection{Likelihood Evaluation} \label{ss:MClike}

We next assess the tangent-twisted proposal for likelihood evaluation. We compare its unbiased likelihood estimator with efficient importance sampling (EIS), a Laplace--Gaussian importance sampler based on the Gaussian approximation at the posterior mode, and a bootstrap particle filter. The experiment uses $T=500$, $1{,}000$, $2{,}000$ and $4{,}000$. For each $T$, we generate five independent series and conduct four independent likelihood evaluations per method and series. Efficiency ratios are computed within each series and then averaged across series.

For likelihood evaluation we fix $G=121$ at every sample size rather than use the acceptance-based $G\propto\sqrt{T}$ rule of Section~\ref{ss:scaling}, since the relevant criterion is weight variance per unit of computing time. Figure~\ref{fig:like} compares the methods under a common wall-clock budget equal to the cost of one complete tangent-twisted evaluation with two hundred draws at a new parameter vector, including method-specific setup.\footnote{For EIS, setup consists of five backward regression sweeps using one thousand draws each; the Gaussian methods include the posterior-mode calculation. Comparator draw or particle counts are chosen to exhaust the same wall-clock budget. For the tangent-twisted and EIS estimators, the quantities in Figure~\ref{fig:like} are constructed from the relative variance $\mathrm{rv}_w$ of a single importance weight, estimated from separate long runs; panel~(a) uses the lognormal moment-matched value $\{\log(1+\mathrm{rv}_w/N)\}^{1/2}$. The particle-filter relative variance is estimated from independent replicated likelihood evaluations, since its likelihood estimate is a product of sequential averages rather than an average of independent path weights.}

\begin{figure}[H]
\centering
\includegraphics[width=0.92\textwidth]{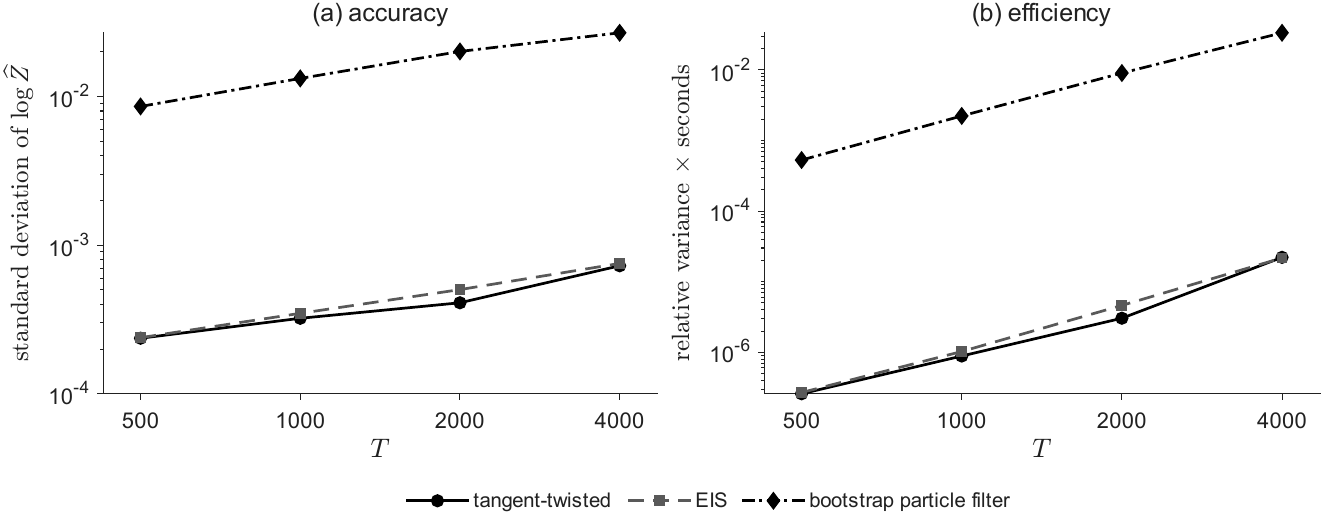}
\caption{Likelihood estimation for the stochastic volatility model at a fixed parameter vector under a common computing-time budget, with $G=121$. Each point averages over five independent series and four evaluations per method and series. The budget equals the cost of a tangent-twisted evaluation with two hundred draws, including setup; comparator draw or particle counts match it. Panel~(a) reports the implied standard deviation of $\log\widehat Z$, and panel~(b) the relative variance of $\widehat Z$ multiplied by computing time. Lower values indicate greater efficiency. The Laplace--Gaussian importance sampler is omitted because its weights have infinite variance in every design.}
\label{fig:like}
\end{figure}

The tangent-twisted estimator is substantially more efficient than the particle filter, by a factor of $1{,}900$ to $3{,}500$, and is comparable to EIS. The tangent proposal is more costly per draw because each transition is a mixture over $G$ pieces, but its lower weight variance largely offsets that cost, with neither method uniformly dominating across series and sample sizes. What distinguishes the tangent estimator is therefore not a systematic efficiency advantage over EIS, but its guarantee: its weights are bounded by a constant the algorithm computes, so Corollary~\ref{cor:is} ensures finite relative variance by construction. Consistent with this bound, the largest tangent weight exceeds the mean by less than $1.5\%$ at every sample size, compared with a factor of $55$ for EIS at $T=4{,}000$.

The Laplace--Gaussian comparison illustrates why this distinction matters. The finite-variance criterion of Section~\ref{s:related} fails for all twenty data sets, so neither quantity in Figure~\ref{fig:like} is defined for this proposal.\footnote{The smallest eigenvalue of $2\bQ-\bK$ ranges from $-0.41$ to $-0.26$. At $T=4{,}000$, the largest observed Laplace--Gaussian weight exceeds its mean by more than two orders of magnitude, against $1.013$ for the tangent proposal, and it grows with the number of draws, as the maximum of an infinite-variance quantity must. The same finite-variance criterion fails at all six empirical fits in Section~\ref{s:app}.} The tangent weights, by contrast, are bounded by construction.

\subsection{Benchmarking Approximate Smoothers} \label{ss:MCaccuracy}

A separate use of the method is to benchmark approximate state smoothers. Such assessments ordinarily compare one approximation with another; the tangent construction instead provides a reference for the intended smoothing distribution. We compare the Gaussian approximation at the posterior mode, EIS with self-normalized importance weights, and the genealogy smoother associated with a bootstrap particle filter. Each simulation-based method uses $8{,}000$ draws or particles, while the Gaussian approximation is evaluated in closed form. For greater precision, the reference uses $8{,}000$ tangent draws with self-normalized weights, whose effective sample size equals the nominal one to three decimal places.\footnote{\label{fn:refcheck}The reference is checked in three ways. First, the bound of Section~\ref{s:uses} gives $\chi^2(\pi_{\bm{\theta}}\Vert q_{\bm{\theta}})\leq1/p-1\leq0.339$ for $T\leq2{,}000$. Second, the largest realized weight exceeds the mean by at most $2.6\%$. Finally, the reference agrees with $4{,}000$ exact draws from Algorithm~\ref{alg:main}, with the largest absolute discrepancy in the smoothed mean ranging from $0.058$ to $0.070$ reference standard deviations. For comparison, two independent reference replicates differ by $0.045$ to $0.060$ under the same maximum-over-$t$ metric.} For each $t$, we examine the posterior mean and the $5\%$ quantile of $h_t$, with errors normalized by the reference marginal posterior standard deviation.

Table~\ref{tab:accuracy} shows three distinct patterns. EIS is extremely accurate: its root-mean-square error in the smoothed mean is $0.014$ to $0.017$ reference standard deviations, and its largest absolute error is $0.045$ to $0.058$, comparable to the Monte Carlo variation between independent reference replicates. Thus the case for the tangent construction is not that EIS gives materially inaccurate state estimates, but that it supplies exact draws and a computable global bound on the importance weights.

The Gaussian approximation exhibits a small but persistent approximation bias. Its root-mean-square error is about $0.08$ posterior standard deviations for the smoothed mean and $0.09$ for the lower quantile, with little change as $T$ increases. Because the Gaussian functionals are evaluated in closed form, these discrepancies reflect approximation rather than simulation error.

The genealogy smoother behaves differently: its error increases with $T$ and is concentrated near the beginning of the path, where particle ancestry has collapsed. At $T=2{,}000$, only about one percent of the original particles remain distinct over the first five percent of dates, and errors in the lower tail reach $0.72$ posterior standard deviations. The terminal filtering weights remain well behaved, with effective sample size at least $0.98$ of the nominal size, so the deterioration comes from genealogical collapse rather than poor filtering. This illustrates the benchmarking value of the exact sampler: approximation errors that are otherwise difficult to separate become directly measurable.

\begin{table}[H]
\caption{Error in the smoothed posterior mean and in the $5\%$ quantile of $h_t$, relative to the reference, in units of the reference marginal posterior standard deviation. RMSE is the root-mean-square error and Max the largest absolute error, both over $t$.}
\label{tab:accuracy}
\centering
\begin{tabular}{lrrrrrr}
\hline\hline
& \multicolumn{2}{c}{$T=500$} & \multicolumn{2}{c}{$T=1{,}000$} & \multicolumn{2}{c}{$T=2{,}000$} \\
& RMSE & Max & RMSE & Max & RMSE & Max \\ \hline
\multicolumn{7}{l}{\textit{Posterior mean}} \\
\rowcolor{lightgray}
Gaussian approximation & 0.075 & 0.109 & 0.079 & 0.116 & 0.078 & 0.123 \\
Efficient importance sampling & 0.016 & 0.046 & 0.014 & 0.045 & 0.017 & 0.058 \\
\rowcolor{lightgray}
Genealogy smoother & 0.051 & 0.158 & 0.082 & 0.240 & 0.094 & 0.394 \\
\multicolumn{7}{l}{\textit{$5\%$ quantile}} \\
\rowcolor{lightgray}
Gaussian approximation & 0.092 & 0.176 & 0.093 & 0.154 & 0.095 & 0.173 \\
Efficient importance sampling & 0.033 & 0.090 & 0.031 & 0.110 & 0.033 & 0.111 \\
\rowcolor{lightgray}
Genealogy smoother & 0.114 & 0.349 & 0.156 & 0.588 & 0.194 & 0.720 \\
\hline\hline
\end{tabular}
\end{table}

\section{Application: Stochastic Conditional Duration} \label{s:app}

We illustrate the method using the stochastic conditional duration model of \citet{BauwensVeredas2004}, providing an application outside stochastic volatility in which the choice of conditional duration distribution is an economically relevant model comparison. \citet{BauwensGalli2009} develop likelihood evaluation for this model using efficient importance sampling.

\subsection{Model Specifications and Data}

The stochastic conditional duration model is $y_t=\e^{h_t}\epsilon_t,$ 
where $\e^{h_t}$ is the latent expected duration, $h_t$ follows the affine Gaussian process in \eqref{eq:model}, and $\epsilon_t$ is a positive unit-mean error. We compare exponential, gamma and Weibull errors. The unit-mean normalization identifies the level of the state. Writing $u$ for the state, the corresponding observation log-densities are
\[
\ell_t(u)=-u-y_t\e^{-u}, \qquad
\ell_t(u)=-au-ay_t\e^{-u}, \qquad
\ell_t(u)=-ku-(y_t/\lambda)^k\e^{-ku},
\]
for the exponential, gamma with shape $a$, and Weibull with shape $k$, respectively, where $\lambda=1/\Gamma(1+1/k)$, up to terms independent of $u$. All three are concave, so Algorithm~\ref{alg:main} applies directly, and the gamma and Weibull families nest the exponential at unit shape.

We use Binance trade records for LINKUSDT and ALGOUSDT and construct seasonally adjusted volume durations for 15 May 2024; \ref{app:data} gives the data sources and construction details. Table~\ref{tab:appdata} shows substantial persistence and overdispersion in both series. The standard deviation exceeds the mean in each, with the greater dispersion for ALGOUSDT.

\begin{table}[H]
\caption{Adjusted volume durations, 15 May 2024. The seasonal range is the ratio of the largest to the smallest value of the periodic factor $\varphi_t$. Autocorrelations are of $\log y_t$.}
\label{tab:appdata}
\centering
\begin{tabular}{lrrrrrr}
\hline\hline
Pair & $T$ & Mean (s) & sd/mean & $\rho_1$ & $\rho_{20}$ & Seasonal range \\ \hline
LINKUSDT & 2{,}711 & 33.5 & 1.18 & 0.52 & 0.10 & 2.5 \\
\rowcolor{lightgray}
ALGOUSDT & 2{,}151 & 44.9 & 1.48 & 0.59 & 0.29 & 4.1 \\
\hline\hline
\end{tabular}
\end{table}

\subsection{Estimates and Model Comparison} \label{ss:appresults}

Each model is estimated by maximizing the unbiased likelihood estimator \eqref{eq:Zhat} using common random numbers; \ref{app:data} gives the implementation details. Table~\ref{tab:appfit} reports the estimates. Acceptance ranges from $0.78$ to $0.88$ across the six specifications. At all six estimates, the Gaussian approximation fails the finite-variance criterion of Section~\ref{s:related}, whereas the tangent weights remain bounded.

The model ranking is the same in both series: gamma is preferred to Weibull, which is preferred to exponential. The estimated gamma shapes, $0.642$ and $0.455$, are both well below the exponential benchmark of one and decline as the dispersion in Table~\ref{tab:appdata} rises. The error specification also materially affects the estimated state process: imposing exponential errors produces larger innovation variances and lower persistence, because variation that can instead be captured by the error distribution is forced into the latent state.

\begin{table}[H]
\caption{Simulated maximum likelihood estimates and the Bayesian information criterion (BIC). The exponential family has no shape parameter. Acceptance is the probability of Theorem~\ref{thm:exact} evaluated at the estimates with $G=188$.}
\label{tab:appfit}
\centering
\begin{tabular}{llrrrrrr}
\hline\hline
Pair & Family & $\mu$ & $\phi$ & $\sigma^2_h$ & Shape & Acceptance & BIC \\ \hline
LINKUSDT & Gamma & 3.21 & 0.991 & 0.015 & 0.642 & 0.856 & 22{,}741 \\
\rowcolor{lightgray}
& Weibull & 3.22 & 0.988 & 0.023 & 0.813 & 0.856 & 22{,}970 \\
& Exponential & 3.14 & 0.978 & 0.046 &---& 0.852 & 23{,}101 \\
\rowcolor{lightgray}
ALGOUSDT & Gamma & 3.21 & 0.981 & 0.072 & 0.455 & 0.883 & 17{,}167 \\
& Weibull & 3.24 & 0.969 & 0.175 & 0.606 & 0.880 & 17{,}367 \\
\rowcolor{lightgray}
& Exponential & 2.53 & 0.830 & 2.081 &---& 0.777 & 17{,}651 \\
\hline\hline
\end{tabular}
\end{table}

Because \eqref{eq:Zhat} is nonnegative and unbiased with bounded state weights, it can also be used in the marginal likelihood estimator \eqref{eq:marglike}. We use the priors
$\mu\sim\distn{N}(0,10)$, $(\phi+1)/2\sim\distn{B}(20,1.5)$, $\sigma_h^2\sim\distn{IG}(2.5,0.1)$,
and a standard normal prior for the log shape parameter. The parameter importance density is a heavy-tailed Student's $t$ centered at the simulated maximum likelihood estimate; \ref{app:data} gives the remaining details. Table~\ref{tab:appml} confirms the BIC ranking by a wide margin. Relative to gamma, the log marginal likelihood is lower by $101$--$114$ units for Weibull and by $178$--$249$ units for exponential. These differences are far larger than the numerical standard errors, which are at most $0.06$.

\begin{table}[H]
\caption{Log marginal likelihoods from \eqref{eq:marglike}, with numerical standard errors from $2{,}000$ parameter draws.}
\label{tab:appml}
\centering
\begin{tabular}{llrr}
\hline\hline
Pair & Family & $\log\hat p(\by)$ & NSE \\ \hline
LINKUSDT & Gamma & $-11{,}367.9$ & 0.06 \\
\rowcolor{lightgray}
& Weibull & $-11{,}481.7$ & 0.05 \\
& Exponential & $-11{,}546.4$ & 0.03 \\
\rowcolor{lightgray}
ALGOUSDT & Gamma & $-8{,}579.6$ & 0.05 \\
& Weibull & $-8{,}680.3$ & 0.05 \\
\rowcolor{lightgray}
& Exponential & $-8{,}828.9$ & 0.04 \\
\hline\hline
\end{tabular}
\end{table}

\section{Concluding Remarks and Future Research} \label{s:conclusion}

This paper develops an exact rejection sampler for non-Gaussian state space models with a scalar latent state, affine Gaussian dynamics, and a concave observation log-density. Backward twisting makes the log target-to-proposal ratio separable across dates, while tangent envelopes make every coordinatewise residual nonpositive and attain zero at the nodes. The resulting global dominating constant is therefore computable, attained, and smallest admissible for the proposal. Rejection sampling then produces independent draws from the intended joint smoothing distribution without burn-in, Markov-chain correction, or approximation-model error.

Exactness would be of limited practical value if acceptance deteriorated rapidly with the length of the state path. For the companding grid, the accumulated envelope error is \(O(T/G^2)\), so increasing the number of nodes at rate \(G\propto\sqrt{T}\) keeps the acceptance probability bounded away from zero. For stochastic volatility, the conditions underlying this result hold almost surely, while the simpler mode-centered grid used in practice displays the same scaling empirically. Importantly, exactness holds for every finite node set: refinement governs computational efficiency rather than the validity of accepted draws. The same proposal also yields a nonnegative unbiased likelihood estimator with bounded importance weights and therefore finite variance. In the numerical experiments, it is competitive with efficient importance sampling and provides an exact benchmark for assessing approximate state smoothers.

The main restriction of the present construction is the scalar latent state. For vector-valued states, the domination argument continues to apply, but the affine regions of the tangent envelope become polyhedra, making Gaussian probability evaluation and truncated-normal simulation the principal computational challenges. Developing practical methods for low-dimensional vector states is therefore a natural next step. Another direction is adaptive node placement that reduces the cost of the backward recursion while preserving the global bound. More broadly, the results suggest that one-sided approximations can be useful in latent-variable computation when a computable guarantee is more valuable than minimizing an unsigned approximation error.


\bigskip

\singlespacing
\bibliographystyle{econometrica}
\bibliography{var,exact-AR}

\onehalfspacing

\pagebreak

\appendix
\renewcommand{\thesection}{Appendix~\Alph{section}}

\section{Proofs} \label{app:proofs}

\begin{proof}[Proof of Lemma~\ref{lem:concave}]
The argument is by backward induction on $t$. Consider first $t=T$. By Assumption~\ref{as:model}(ii), $f_T=\ell_T$ is finite, concave and continuously differentiable, which is (i). A concave differentiable function lies below each of its tangent lines, so every term in the minimum in \eqref{eq:tangent} dominates $f_T$ and therefore $\psi_T\geq f_T$. At a node $v\in V_T$ the tangent line at $v$ takes the value $f_T(v)$ while every other tangent line is at least $f_T(v)$, so the minimum equals $f_T(v)$. Being a pointwise minimum of finitely many affine functions, $\psi_T$ is concave and piecewise linear. This is (ii).

Now suppose (i) and (ii) hold at some $t\geq 2$; we establish (iii) at $t$ and then (i) and (ii) at $t-1$. Since $\psi_t$ is a minimum of finitely many affine functions, there exist constants $c_0$ and $c_1$ with $\psi_t(v)\leq c_0+c_1v$ for every $v$. Hence
\[
	\hat C_t(u) = \int_{\mathbb{R}}p_t(v \gvn u)\e^{\psi_t(v)}\,\di v
	\leq \int_{\mathbb{R}}p_t(v \gvn u)\e^{c_0+c_1v}\,\di v
	= \exp\left\{c_0+c_1a_t(u)+\tfrac{1}{2}c_1^2\omega_t\right\}<\infty,
\]
using the moment generating function of the normal distribution, and $\hat C_t(u)>0$ because the integrand is strictly positive. The same argument applies to $\hat C_1$.

For log-concavity, write the integrand as $H_t(u,v) = p_t(v \gvn u)\e^{\psi_t(v)}$, so that
\[
	\log H_t(u,v) = -\tfrac{1}{2}\log(2\pi\omega_t)-\frac{\left(v-a_t(u)\right)^2}{2\omega_t}+\psi_t(v).
\]
Under Assumption~\ref{as:model}(i) the map $(u,v)\mapsto v-a_t(u)$ is affine, so its square is convex on $\mathbb{R}^2$ and the second term is jointly concave; the third term is concave in $v$ and constant in $u$, hence jointly concave. Therefore $H_t$ is log-concave on $\mathbb{R}^2$, and by the theorem of \citet{Prekopa1973} its marginal $\hat C_t$ is log-concave on $\mathbb{R}$, so $\log\hat C_t$ is concave.

For differentiability, write $\hat C_t(u) = (2\pi\omega_t)^{-1/2}C(a_t(u))$, where
\[
	C(a) = \int_{\mathbb{R}}\exp\left\{-\frac{(v-a)^2}{2\omega_t}+\psi_t(v)\right\}\di v
\]
is the convolution of a Gaussian density with $\e^{\psi_t}$. Because $\e^{\psi_t(v)}\leq\e^{c_0+c_1v}$, each $a$-derivative of the integrand is dominated, locally uniformly in $a$, by a normal density times a polynomial times $\e^{c_0+c_1v}$, which is integrable; differentiation under the integral sign is therefore justified to every order and $C$ is infinitely differentiable. Since $a_t$ is affine, $\hat C_t$ is infinitely differentiable in $u$, with derivative \eqref{eq:Cprime}. Since $\hat C_t(u)>0$, $\log\hat C_t$ is continuously differentiable, which completes (iii). Consequently $f_{t-1}=\ell_{t-1}+\log\hat C_t$ is finite, concave as the sum of two concave functions, and continuously differentiable, which is (i) at $t-1$; and (ii) at $t-1$ follows by the same tangent argument used in the base case.
\end{proof}

\begin{proof}[Proof of Theorem~\ref{thm:exact}]
Part (iii) of Lemma~\ref{lem:concave} shows that each $\hat C_t(u)$ and $\hat C_1$ is finite and strictly positive, so each conditional density in \eqref{eq:proposal} integrates to one and their Markov product $q$ is a joint probability density, which is (i).

By Lemma~\ref{lem:sep} and the definition of $d_t$, the ratio $\gamma(\bh)/q(\bh)$ equals $\hat C_1\exp\{\sum_t[f_t(h_t)-\psi_t(h_t)]\}$, and by part (ii) of Lemma~\ref{lem:concave} every summand is nonpositive. Hence $\gamma(\bh)\leq \hat C_1 q(\bh)$. Choosing any $\bh$ with $h_t\in V_t$ for all $t$ makes every summand zero, so the inequality is attained; by the second claim of Lemma~\ref{lem:sep} the supremum of the sum is the sum of the individual suprema, each of which is zero, so $\sup_{\bh}\gamma(\bh)/q(\bh)=\hat C_1$. This gives (ii).

For (iii), write $A(\bh) = \gamma(\bh)/\{\hat C_1q(\bh)\}$, which by (ii) satisfies $0<A(\bh)\leq 1$ and which is the acceptance probability implied by \eqref{eq:accept}. For any Borel set $B\subseteq\mathbb{R}^T$,
\[
	\Pm(\bh\in B,\ \text{accept}) = \int_{B}q(\bh)A(\bh)\,\di\bh = \frac{1}{\hat C_1}\int_{B}\gamma(\bh)\,\di\bh .
\]
Taking $B=\mathbb{R}^T$ gives $\Pm(\text{accept}) = Z/\hat C_1>0$. Dividing the two displays gives
$\Pm(\bh\in B \gvn \text{accept}) = \int_B\gamma(\bh)\,\di\bh/Z$, so the accepted draw has exactly the posterior distribution. Repeating the procedure with independent random numbers produces independent draws.
\end{proof}

\begin{proof}[Proof of Corollary~\ref{cor:is}]
The bound $w\leq\hat C_1$ is part (ii) of Theorem~\ref{thm:exact} and unbiasedness is immediate. Since $w$ is nonnegative and bounded by $\hat C_1$, $\Em_q(w^2)\leq \hat C_1\Em_q(w) = \hat C_1 Z$, so
$\Var_q(w)\leq \hat C_1 Z-Z^2$. Dividing by $Z^2$ and using $p = Z/\hat C_1$ gives the result.
\end{proof}

\begin{proof}[Proof of Lemma~\ref{lem:intervals}]
Let $s_{t,j} = \{f_t(v_{t,j+1})-f_t(v_{t,j})\}/(v_{t,j+1}-v_{t,j})$. By the mean value theorem $s_{t,j}=f_t'(\xi)$ for some $\xi\in(v_{t,j},v_{t,j+1})$, and strict concavity makes $f_t'$ strictly decreasing, so $g_{t,j}>s_{t,j}>g_{t,j+1}$. Substituting the definitions of $\alpha_{t,j}$ and $\alpha_{t,j+1}$ into the expression for $z_{t,j}$ and rearranging gives
\[
	z_{t,j} = v_{t,j}+(v_{t,j+1}-v_{t,j})\,\frac{s_{t,j}-g_{t,j+1}}{g_{t,j}-g_{t,j+1}},
\]
and the displayed ratio lies strictly between zero and one, so $v_{t,j}<z_{t,j}<v_{t,j+1}$. The intersection points are therefore increasing, and since the slopes are decreasing the minimum in \eqref{eq:tangent} is attained by the $j$th line precisely on $(z_{t,j-1},z_{t,j}]$.
\end{proof}

\begin{proof}[Proof of Proposition~\ref{prop:rate}]
Fix $u\in(v_{t,j},v_{t,j+1})$. Since $f_t'$ is $L_t$-Lipschitz on $I_t$, a first-order expansion with integral remainder about $v_{t,j}$ gives $f_t(u)\geq f_t(v_{t,j})+f_t'(v_{t,j})(u-v_{t,j})-L_t(u-v_{t,j})^2/2$, and the same expansion about $v_{t,j+1}$ gives the analogous inequality. Since $\psi_t(u)$ is at most the smaller of the two tangent lines,
$\psi_t(u)-f_t(u)\leq \frac{1}{2}L_t\min\{(u-v_{t,j})^2,(v_{t,j+1}-u)^2\}\leq \frac{1}{2}L_t(\Delta_t/2)^2$.
Splitting $\Em_q(S_T)$ into the contributions from inside and outside $I_t$ and applying Jensen's inequality gives the second display of the proposition.
\end{proof}

The remaining results in this appendix support Theorem~\ref{thm:scaling}. The first is the curvature bound cited after Assumption~\ref{as:scaling}; its content is that the backward recursion transmits at most $\beta_{t+1}^2/\omega_{t+1}$ of curvature from one date to the next, whatever the nodes at later dates.

\begin{lemma}[Curvature of a Gaussian twist] \label{lem:curvature}
Let the conditions of Lemma~\ref{lem:concave} hold and let $\mathcal{V}_t(u)$ denote the variance of the twisted transition density $q_t(\cdot \gvn u)$. Then $\mathcal{V}_t(u)\leq\omega_t$ and
\[
	(\log\hat C_t)''(u) = \beta_t^2\left\{\frac{\mathcal{V}_t(u)}{\omega_t^2}-\frac{1}{\omega_t}\right\}\in\left[-\frac{\beta_t^2}{\omega_t},0\right],
\]
so that, when $\ell_t$ is twice continuously differentiable,
\[
	0\leq-f_t''(u)\leq-\ell_t''(u)+\frac{\beta_{t+1}^2}{\omega_{t+1}}, \quad t<T,
	\qquad -f_T''(u)=-\ell_T''(u),
\]
for every number and placement of the nodes at dates $t+1,\ldots,T$. Both bounds on $(\log\hat C_t)''$ are sharp---the upper is attained and the lower is approached---so no downstream argument may assume strict inequality.
\end{lemma}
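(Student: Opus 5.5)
The plan is to differentiate $\log\hat C_t$ twice under the integral sign, which Lemma~\ref{lem:concave}(iii) already justifies to every order. Write $\hat C_t(u)=(2\pi\omega_t)^{-1/2}C(a_t(u))$ as in that proof, with $a_t(u)=\alpha_t+\beta_t u$. Then $\log C(a)$ is a log-partition function: $C(a)=\e^{-a^2/(2\omega_t)}\int \exp\{va/\omega_t - v^2/(2\omega_t)+\psi_t(v)\}\,\di v$, and the integral is the normalizer of an exponential family in the natural parameter $\theta=a/\omega_t$ with base measure $\e^{-v^2/(2\omega_t)+\psi_t(v)}\di v$.

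Standard exponential-family identities give $\frac{\di^2}{\di\theta^2}\log\int(\cdot)=\Var(v)$ under the tilted density. That tilted density is exactly $q_t(\cdot\gvn u)$ when $a=a_t(u)$. Hence $(\log C)''(a)=\mathcal V_t/\omega_t^2-1/\omega_t$, and the chain rule with $a_t'=\beta_t$ yields the displayed formula for $(\log\hat C_t)''$. This is consistent with \eqref{eq:derivative}, since $\mathcal M_t$ is the first derivative.

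For the interval, the lower endpoint $-\beta_t^2/\omega_t$ follows from $\mathcal V_t\ge0$. For the upper endpoint I would not need a variance comparison: concavity of $\log\hat C_t$ (Lemma~\ref{lem:concave}(iii)) already gives $(\log\hat C_t)''\le0$. That forces $\mathcal V_t\le\omega_t$ whenever $\beta_t\neq0$. When $\beta_t=0$, I would instead invoke the Brascamp--Lieb inequality: $q_t$ has density $\e^{-\Psi}$ with $\Psi$ being $1/\omega_t$-strongly convex, because $-\psi_t$ is convex. Its variance is therefore at most $\omega_t$. Piecewise linearity means $\psi_t$ is not $C^2$, but Brascamp--Lieb in its strong-convexity form needs no smoothness; alternatively one can use Prékopa's argument directly.

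Then $f_t=\ell_t+\log\hat C_{t+1}$ gives $-f_t''=-\ell_t''-(\log\hat C_{t+1})''\in[-\ell_t'',\,-\ell_t''+\beta_{t+1}^2/\omega_{t+1}]$, and $-\ell_t''\ge0$ by concavity. The case $f_T=\ell_T$ is trivial. None of this used the node placement beyond $\psi_{t+1}$ being concave, which establishes the uniformity claim.

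Sharpness is the remaining step. The upper bound $0$ is attained when $\psi_t$ is affine, for example when $G_t=1$ or $f_t$ is affine. In that case $q_t$ is an untruncated normal with variance $\omega_t$. The lower bound is approached by a two-node envelope with a kink and very steep slopes on either side. I would take slopes $\pm K$ at nodes near a point $z$ with $K\to\infty$: the tilted density then concentrates near $z$, so $\mathcal V_t\to0$ and the second derivative tends to $-\beta_t^2/\omega_t$. It cannot be attained, because $\mathcal V_t>0$ for any density. The main obstacle is making this limiting family legitimate, since $\psi_t$ must arise as tangents of some admissible $f_t$. I would handle it by choosing $\ell_t$ to be a steep concave function with curvature concentrated at $z$, for instance a smoothed $-K|u-z|$, with nodes on both sides.
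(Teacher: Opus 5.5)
Your proposal is correct and follows the paper's proof: the second derivative comes from the Gaussian tilting (exponential-family) identity for $C(a)$, the variance bound from log-concavity of the Gaussian convolution via Pr\'ekopa, and sharpness from a single tangent line (attained) versus an increasingly sharp kink (approached). The one difference is that the paper applies Pr\'ekopa to $C$ as a function of $a$ directly. That gives $\mathcal{V}_t\leq\omega_t$ for every $\beta_t$ at once, so your case split and the appeal to Brascamp--Lieb when $\beta_t=0$ are unnecessary; this is the alternative you already mention.
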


\begin{proof}
Write $a=a_t(u)$ and $C(a) = \int_{\mathbb{R}}\exp\{-(v-a)^2/(2\omega_t)+\psi_t(v)\}\,\di v$, so that $\hat C_t(u) = (2\pi\omega_t)^{-1/2}C(a_t(u))$. Since $\psi_t$ is bounded above by an affine function, differentiation under the integral sign is justified by Gaussian domination, and the standard tilting identities give $(\log C)'(a) = \{\mathcal{M}_t(u)-a\}/\omega_t$ and $\partial\mathcal{M}_t/\partial a = \mathcal{V}_t(u)/\omega_t$, hence $(\log C)''(a) = \mathcal{V}_t(u)/\omega_t^2-1/\omega_t$. As a function of $a$, $C$ is the convolution of a Gaussian density with the log-concave function $\e^{\psi_t}$ and is therefore log-concave by the theorem of \citet{Prekopa1973}; by the identity just displayed, this is equivalent to $\mathcal{V}_t(u)\leq\omega_t$, with no smoothness demanded of $\psi_t$. The lower bound follows from $\mathcal{V}_t(u)\geq0$, and the chain rule with $a_t'(u)=\beta_t$ gives the first display. The second follows from $f_t = \ell_t+\log\hat C_{t+1}$, whose second term is infinitely differentiable in $u$ as a Gaussian convolution, together with concavity of $f_t$ from Lemma~\ref{lem:concave}. Sharpness: when $\psi_t$ is a single tangent line ($G_t=1$), $\mathcal{V}_t(u)=\omega_t$ and the upper bound holds with equality, while a single sharp kink drives $\mathcal{V}_t(u)$ toward zero, so the lower bound is approached; it is never attained, since $q_t(\cdot \gvn u)$ has full support and $\mathcal{V}_t(u)>0$.
\end{proof}

The remaining result is the approximation bound behind Theorem~\ref{thm:scaling}: on the companding grid, curvature controlled by $W$ makes the tangent-envelope error of order $G^{-2}$ at every point, with a constant free of the curvature level, the function and the number of nodes. The bound is pointwise, so nothing is assumed about the law of the state; a sub-Gaussian moment then converts it into the expected error that Theorem~\ref{thm:scaling} needs.

\begin{lemma}[Companding approximation] \label{lem:quant}
Let $W\geq1$ be continuous with $|\log W(x)-\log W(y)|\leq L_W|x-y|$, let $b>0$, let $W_b = W\e^{-bx^2}$, and let $\Lambda$ be the distribution function of $\lambda = W_b^{1/3}/\int W_b^{1/3}$. For $G\geq2$ place nodes at $x_{j,G} = \Lambda^{-1}\{(j-\frac{1}{2})/G\}$, $j=1,\ldots,G$, so that the $j$th node lies in the $j$th cell $I_j = (\Lambda^{-1}\{(j-1)/G\},\Lambda^{-1}\{j/G\}]$ of $\lambda$-mass $1/G$. Let $g$ be concave and twice continuously differentiable with $0\leq-g''\leq A\,W$, and let $\psi_G$ be the tangent envelope of $g$ at the nodes. Then there is a $C$, depending only on $b$, $L_W$ and $W$, such that
\[
	0\leq\psi_G(x)-g(x)\leq\frac{CA}{G^2}\,W(x)^{1/3}\e^{2bx^2/3}
	\qquad\text{for every }x\in\mathbb{R}.
\]
Consequently, for every $\varsigma>2b/3$ there is a $C_\varsigma$, depending only on $b$, $\varsigma$, $L_W$ and $W$, such that any random variable $X$ satisfies
\[
	0\leq\Em\{\psi_G(X)-g(X)\}\leq\frac{C_\varsigma A}{G^2}\,\Em\big[\e^{\varsigma X^2}\big].
\]
\end{lemma}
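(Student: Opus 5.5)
The argument is pointwise and splits into two regions: the interior of the node range, $[x_{1,G},x_{G,G}]$, and the two tails beyond the extreme nodes. The expectation bound then follows by integrating the pointwise bound against the law of $X$.

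\emph{Interior.} Take $x$ in a gap $[x_{j,G},x_{j+1,G}]$ and argue as in Proposition~\ref{prop:rate}, but with a local curvature bound instead of a global one. Expanding $g$ about the nearer endpoint gives
\[
0\leq\psi_G(x)-g(x)\leq\tfrac12\,\bigl(\sup_{\text{gap}}A W\bigr)\,(\Delta_j/2)^2,
\]
where $\Delta_j=x_{j+1,G}-x_{j,G}$. The gap lies inside $I_j\cup I_{j+1}$, which has $\lambda$-mass $2/G$. The key step is a local regularity statement: on a set of $\lambda$-mass $O(1/G)$, the density $\lambda$ varies by at most a bounded factor. I would establish this from two facts. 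First, $\log W$ is $L_W$-Lipschitz. Second, $e^{-bx^2/3}$ varies little over any interval of length $\Delta$ with $|x|\Delta$ bounded. A short bootstrap then gives $\Delta_j\lambda(x)\asymp 1/G$, uniformly. With that in hand,
\[
\Delta_j^2\,W(x)\lesssim \frac{W(x)}{G^2\lambda(x)^2}\propto \frac{W(x)^{1/3}e^{2bx^2/3}}{G^2},
\]
since $\lambda^{-2}\propto W^{-2/3}e^{2bx^2/3}$. This is exactly the claimed weight, so the companding exponent $1/3$ is precisely what balances curvature against spacing.

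\emph{Tails.} Take $x>x_{G,G}$; the left tail is symmetric. Here $\psi_G$ is the last tangent line, so
\[
\psi_G(x)-g(x)\leq \tfrac12 A\,\sup_{[x_G,x]}W\,(x-x_G)^2.
\]
The last node sits at the $(1-\tfrac1{2G})$-quantile of $\lambda$, a sub-Gaussian density. Gaussian tail asymptotics, with $W$ of at most exponential growth, give $x_G\approx\sqrt{(3/b)\log G}$, and hence $e^{2bx_G^2/3}\approx G^{2}$ up to factors polynomial in $x_G$ and $W(x_G)^{1/3}$. For $x\geq x_G$ I would then bound $(x-x_G)^2\sup W$ by $C G^{-2}W(x)^{1/3}e^{2bx^2/3}$. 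The factor $e^{2b(x^2-x_G^2)/3}\geq e^{(4b/3)x_G(x-x_G)}$ absorbs both the polynomial factor and the exponential growth of $W$ once $x_G$ is large. For small $G$, where $x_G$ is bounded, the constant absorbs everything.

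\emph{Expectation.} Take expectations of the pointwise bound. Because $L_W$ bounds the growth of $W$, we have $W(x)^{1/3}\leq W(0)^{1/3}e^{L_W|x|/3}$. For any $\varsigma>2b/3$,
\[
e^{L_W|x|/3+2bx^2/3}\leq C'_\varsigma\, e^{\varsigma x^2},
\]
which yields the second display with $C_\varsigma$ depending only on $b,\varsigma,L_W,W$.

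The main obstacle is the uniform local-regularity step in the interior, $\Delta_j\lambda\asymp 1/G$, and in particular making it hold near the extreme nodes, where $\lambda$ is small and the cells are wide. There $e^{-bx^2/3}$ changes by a bounded factor only if $|x|\Delta_j=O(1)$. I would try to verify this via Mills-ratio bounds, $\Delta_j\sim 1/(G\lambda)\sim 1/x$ up to constants, and treat the last one or two gaps together with the tail case.
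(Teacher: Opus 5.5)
Your proposal is correct and follows essentially the same route as the paper. Both arguments split the line into the node range and the region beyond the extreme nodes, use the Taylor bound with $\sup W\leq W(x)\e^{L_W d}$, turn $W/\lambda^2$ into $W^{1/3}\e^{2bx^2/3}$, and finish by absorbing $\e^{L_W|x|/3}$ into $\e^{(\varsigma-2b/3)x^2}$. The paper closes the step you flag as the main obstacle along the Mills-ratio lines you suggest, packaged as the uniform bound $\sup_x\lambda(x)d_G(x)\e^{L_Wd_G(x)/2}\leq C/G$: at least $1/(2G)$ of $\lambda$-mass lies beyond the outer endpoint $x_R$ of every segment, including the last gap, so the tail-mass bound gives $\lambda(x_R)\geq(2bx_R-L_W)/(6G)$ and hence length at most $6/(2bx_R-L_W)$ (core-meeting segments are handled by a compactness argument), and no separate treatment of the last gaps is needed.
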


\begin{proof}
All unsubscripted constants depend only on $(b,L_W,W)$. Write $x_0 = L_W/b$, $\vartheta(x) = \frac{1}{3}(2bx-L_W)$, and $d_G(x) = |x-x_{j,G}|$ for $x\in I_j$. The log-Lipschitz hypothesis gives, for $x_0\leq x<y$, the two-sided increment bounds
\begin{equation} \label{eq:increments}
	-\tfrac{1}{3}\{L_W+b(x+y)\}(y-x)
	\leq\log\lambda(y)-\log\lambda(x)
	\leq-\vartheta(x)(y-x),
\end{equation}
and symmetrically on the left, without differentiating $W$. Hence $\lambda$ is decreasing beyond $x_0$; integrating the upper bound, the $\lambda$-mass beyond $x\geq x_0$ is at most $\lambda(x)/\vartheta(x)$, and integrating the lower bound over $[x,x+1]$, it is at least $c\lambda(x)/(1+x)$.

\textit{Step 1: geometry.} Call a bounded segment either a bounded cell or the part of an extreme cell between its finite endpoint and its node; every bounded segment has $\lambda$-mass $1/G$ or $1/(2G)$, contains the node of its cell at an endpoint or in its interior, and the $\lambda$-mass beyond its outer endpoint is at least $1/(2G)$, this mass being exactly $1/(2G)$ when that endpoint is the extreme node and at least $1/G$ otherwise. Write $\mathcal K = [-x_0-2,x_0+2]$ and $\lambda_{\mathcal K} = \inf_{\mathcal K}\lambda>0$, and let $G_0\geq2$ be large enough that for $G>G_0$ both $1/G<\lambda_{\mathcal K}$ and the extreme nodes lie outside $\mathcal K$. For $2\leq G\leq G_0$ every bounded segment lies in the fixed compact set $[\Lambda^{-1}\{1/(2G_0)\},\Lambda^{-1}\{1-1/(2G_0)\}]$ and its length is at most that set's diameter. For $G>G_0$, a segment meeting the core $[-x_0,x_0]$ has length at most $4$, since a longer one would contain a unit interval inside $\mathcal K$, on which $\lambda\geq\lambda_{\mathcal K}>1/G$, contradicting a $\lambda$-mass of at most $1/G$; a segment outside the core in the right tail, with endpoints $x_L<x_R$, has $\lambda(x_R)\geq\vartheta(x_R)/(2G)$ by the tail-mass bound at $x_R$, and $\lambda$ decreasing gives length${}\leq(1/G)/\lambda(x_R)\leq2/\vartheta(x_R)$, which is bounded; symmetrically on the left. Moreover the length of a bounded segment times $|x+y|$, for $x,y$ in it, is uniformly bounded: for tail segments $|x+y|\leq2x_R$ and $2bx_R-L_W\geq bx_R$ for $x_R\geq x_0$, giving $2x_R\cdot2/\vartheta(x_R) = 12x_R/(2bx_R-L_W)\leq12/b$, while core-meeting segments have length at most $4$ and $|x+y|\leq2(x_0+4)$, giving $8(x_0+4)$. Write $\bar\Delta$ and $C_\Delta$ for the two resulting constants.

\textit{Step 2: the key inequality.} We claim
\begin{equation} \label{eq:keysup}
	\sup_{x\in\mathbb{R}}\ \lambda(x)\,d_G(x)\,\e^{L_Wd_G(x)/2}\ \leq\ \frac{C}{G}.
\end{equation}
On a bounded segment, $|\log W_b(x)-\log W_b(y)|\leq L_W\bar\Delta+bC_\Delta$ for $x,y$ in the segment, since $|x^2-y^2|\leq(y-x)|x+y|\leq C_\Delta$, so $\lambda = (W_b)^{1/3}/N$ varies by at most a constant factor $\e^{C_0}$ there; hence $\lambda(x)d_G(x)\leq\e^{C_0}\inf\lambda\cdot\text{length}\leq\e^{C_0}/G$, the product of infimum and length being at most the segment's $\lambda$-mass, while $\e^{L_Wd_G/2}\leq\e^{L_W\bar\Delta/2}$. On the unbounded part of the right extreme cell, $x = x_{G,G}+d$ with $d>0$: the reverse tail-mass estimate at $x_{G,G}$, whose tail mass is $1/(2G)$, gives $\lambda(x_{G,G})\leq C(1+x_{G,G})/G$, and \eqref{eq:increments} gives $\lambda(x)\leq\lambda(x_{G,G})\e^{-\vartheta(x_{G,G})d}$. Since $x_{G,G}\rightarrow\infty$ as $G\rightarrow\infty$, for $G$ large enough $\vartheta(x_{G,G})\geq L_W$, whence
\[
	\lambda(x)\,d\,\e^{L_Wd/2}
	\leq\lambda(x_{G,G})\,d\,\e^{-\vartheta(x_{G,G})d/2}
	\leq\lambda(x_{G,G})\,\frac{2}{\e\,\vartheta(x_{G,G})}
	\leq\frac{C}{G}\cdot\frac{1+x_{G,G}}{\vartheta(x_{G,G})}
	\leq\frac{C'}{G},
\]
because $(1+x)/\vartheta(x)\rightarrow3/(2b)$. The left tail is symmetric, and the remaining finitely many values of $G$ are absorbed into $C$, the supremum in \eqref{eq:keysup} being finite at each because $\lambda$ decays as a Gaussian while $d_G$ grows linearly.

\textit{Step 3: the envelope error.} Taylor's formula with integral remainder gives, for the node $v$ of the cell containing $x$ and its tangent $\tau_v$,
\[
	0\leq\psi_G(x)-g(x)\leq\tau_v(x)-g(x) = \int_{v}^{x}(x-z)\{-g''(z)\}\,\di z
	\leq\frac{A}{2}\,W(x)\,\e^{L_Wd_G(x)}\,d_G(x)^2,
\]
the last step bounding $\sup_{[x\wedge v,x\vee v]}W\leq W(x)\e^{L_W|x-v|}$ by the log-Lipschitz hypothesis. Since $\lambda = W_b^{1/3}/N$ with $N=\int W_b^{1/3}$ and $W_b = W\e^{-bx^2}$, the identity
\[
	W(x)\,\e^{L_Wd_G(x)}\,d_G(x)^2
	= N^2\,W(x)^{1/3}\e^{2bx^2/3}\left[\lambda(x)\,d_G(x)\,\e^{L_Wd_G(x)/2}\right]^{2}
\]
holds at every $x$, and \eqref{eq:keysup} bounds the bracket by $C/G$. This gives the first display. Nothing has been assumed about $X$: the envelope error is controlled pointwise, not merely on average, which is what separates the geometry of the grid from the law of the state.

For the second display, the log-Lipschitz hypothesis gives $W(x)^{1/3}\leq W(0)^{1/3}\e^{L_W|x|/3}$, and for every $\varsigma>2b/3$ there is a $C_\varsigma$ with $\e^{L_W|x|/3}\leq C_\varsigma\e^{(\varsigma-2b/3)x^2}$, a linear exponent being dominated by any positive quadratic one. Hence $W(x)^{1/3}\e^{2bx^2/3}\leq C_\varsigma\e^{\varsigma x^2}$, and taking expectations in the first display completes the proof.
\end{proof}

\begin{proof}[Proof of Theorem~\ref{thm:scaling}]
By Theorem~\ref{thm:exact}, $\Pm(\text{accept}) = \Em_q(\e^{-S_T})$ with $S_T = \sum_{t}\{\psi_t(h_t)-f_t(h_t)\}\geq0$, and Jensen's inequality gives $-\log\Pm(\text{accept})\leq\Em_q(S_T) = \sum_t\Em_q\{\psi_t(h_t)-f_t(h_t)\}$; only linearity of the expectation is used, so no independence across dates is required. Fix $t$. Recentering by $u = c_{t,T}+x$ preserves both tangency and pointwise minima, so $\psi_t(h_t)-f_t(h_t) = \tilde\psi_G(X_t)-\tilde f_t(X_t)$, where $X_t = h_t-c_{t,T}$ and $\tilde\psi_G$ is the tangent envelope of $\tilde f_t$ at $x_{1,G},\ldots,x_{G,G}$. The function $\tilde f_t$ is concave and twice continuously differentiable (Assumption~\ref{as:scaling} and the proof of Lemma~\ref{lem:curvature}) with $0\leq-\tilde f_t''\leq A_{t,T}W$ by Assumption~\ref{as:scaling}(i). Lemma~\ref{lem:quant} with $g=\tilde f_t$ and $A=A_{t,T}$ therefore bounds $\psi_t(h_t)-f_t(h_t)$ pointwise, and taking the expectation under $q$ gives $\Em_q\{\psi_t(h_t)-f_t(h_t)\}\leq(C_\varsigma A_{t,T}/G^2)\,\Em_q[\e^{\varsigma X_t^2}]\leq C_\varsigma A_{t,T}B_{t,T}/G^2$ by Assumption~\ref{as:scaling}(ii); only the marginal law of $X_t$ enters. Summing over $t$ completes the proof.
\end{proof}

\section{Verification for Stochastic Volatility} \label{app:svscaling}

This appendix proves Corollary~\ref{cor:svscaling}. Throughout, the model is the stochastic volatility model of Section~\ref{ss:model}: $y_t = \e^{h_t/2}\epsilon_t$ with $\epsilon_t\sim\distn{N}(0,1)$, $h_t = \mu+\phi(h_{t-1}-\mu)+\sigma_h\eta_t$ with $|\phi|<1$, initialized from the stationary distribution, so that $a_t(u) = \mu+\phi(u-\mu)$, $\omega_t=\sigma_h^2$ and $\omega_1 = \sigma_h^2/(1-\phi^2)$. The centers are $c_t = \log(1+y_t^2)$, the envelope is $W(x) = 1+\e^{|x|}$, which is log-Lipschitz with $L_W=1$ since $|(\log W)'| = \e^{|x|}/(1+\e^{|x|})<1$, and the nodes are those of Theorem~\ref{thm:scaling}: $v_{t,j} = c_t+x_{j,G}$ with $x_{j,G}$ the midpoint quantiles of $\lambda\propto\{W(x)\e^{-bx^2}\}^{1/3}$ for a small $b>0$ fixed in the proof of Lemma~\ref{lem:envelope}, which the paragraph following Assumption~\ref{as:scaling} explains is a joint rather than a circular quantification. All expectations $\Em_q$ are conditional on the data; the randomness of the data enters only at the final step.

Centering the state at $c_t=\log(1+y_t^2)$ removes the observation-specific scale from the curvature bound, which establishes part~(i) of the assumption by the direct calculation below. Part (ii) rests on two substantive facts, which are Lemmas~\ref{lem:score} and \ref{lem:envelope}: the mode of each twisted transition stays within a stationary, exponential-square-integrable distance of the center and contracts at rate $|\phi|$ in its argument; and each twisted transition is strongly log-concave, so that a stable drift recursion carries that mode control into a uniform sub-Gaussian moment bound on the marginals. Lemmas~\ref{lem:nodes} and \ref{lem:centers} supply the two inputs those facts require: a node near every center, and exponential-square moments for the centers themselves. Stationarity and ergodicity of the centers then convert that marginal bound into the average condition on $A_{t,T}B_{t,T}$ that Assumption~\ref{as:scaling} imposes.

Three conventions keep the argument uniform. Date~1 is treated as a generic date with a degenerate transition, $a_1(u)\equiv\mu$ and variance $\omega_1$, which is how $q_1$ is already constructed in Section~\ref{ss:pieces}. The substitution $(\phi,\sigma_h^2)\rightarrow(0,\omega_1)$ applies to statements about $q_1$ itself---its mode, its strong log-concavity modulus and its contraction in a conditioning argument that is absent---so that $\delta_1 = |c_1-\mu|$ and the date-1 contraction factor is zero. It does not apply to the backward derivative $f_1' = \ell_1'+(\log\hat C_2)'$, which is a statement about the date-2 transition and retains $(\phi,\sigma_h^2)$; the bound of Lemma~\ref{lem:score}(i) is stated so as to cover $t=1$ in that form. Since $\omega_1\geq\sigma_h^2$, any constraint of the form $\varsigma<c/\sigma_h^2$ is imposed in its binding form $\varsigma<c/\omega_1$. A random variable $X$ has a finite exponential-square moment if $\Em\e^{\varsigma X^2}<\infty$ for some $\varsigma>0$, and the exponent of Assumption~\ref{as:scaling}(ii) is one such $\varsigma$; every random constant below is obtained from the centers $c_t$ by finitely many additions of deterministic constants and geometrically weighted sums, each of which preserves finiteness of the sub-Gaussian Orlicz norm $\|\cdot\|_{\mathrm{sG}}$, and hence of that moment, by its triangle inequality, with no independence required \citep{Vershynin2018}, and we invoke this without further comment. Finally, $\epsilon$ is a tuning constant, unrelated to the observation noise $\epsilon_t$.

\subsection*{Curvature}

Write $\varpi_t = y_t^2/(1+y_t^2)\in[0,1)$, so that $\e^{-c_t} = 1-\varpi_t$ and $y_t^2\e^{-(c_t+x)} = \varpi_t\e^{-x}$. Hence
\[
	-\ell_t''(c_t+x) = \tfrac{1}{2}\varpi_t\e^{-x}\leq\tfrac{1}{2}\e^{|x|},
	\qquad
	\ell_t'(c_t) = -\tfrac{1}{2}+\tfrac{1}{2}\varpi_t = -\frac{1}{2(1+y_t^2)},
\]
so $|\ell_t'(c_t)|\leq\frac{1}{2}$, and by Lemma~\ref{lem:curvature},
\begin{equation} \label{eq:svcurv}
	0\leq-f_t''(c_t+x)\leq\mathcal{E}(x):=\tfrac{1}{2}\e^{|x|}+\frac{\phi^2}{\sigma_h^2}
	\leq A\,W(x),
	\qquad A = \max\left(\tfrac{1}{2},\frac{\phi^2}{\sigma_h^2}\right),
\end{equation}
for every $t$, $T$ and every placement of nodes at later dates. The centering matters here: the observation-specific factor $y_t^2$ disappears, leaving the deterministic coefficient $A$, so Assumption~\ref{as:scaling}(i) holds with $A_{t,T}\equiv A$. Under the data generating process $y_t\neq0$ almost surely, so $\varpi_t>0$ and $-\ell_t''>0$ everywhere; since $f_T=\ell_T$ and Lemma~\ref{lem:curvature} gives $(\log\hat C_{t+1})''\leq0$ whatever the number and placement of the nodes at later dates, $-f_t''\geq-\ell_t''>0$ at every date, so each $f_t$ is strictly concave and Lemma~\ref{lem:intervals} applies throughout the appendix. It remains to establish part (ii), which occupies the rest of it.

\subsection*{The Nodes and the Centers}

One property of the companding sets is used repeatedly, and is recorded here.

\begin{lemma}[Node near the center] \label{lem:nodes}
Let $\Lambda$ be the distribution function of $\lambda\propto W_b^{1/3}$, let $x_{j,G} = \Lambda^{-1}\{(j-\frac{1}{2})/G\}$ for $j=1,\ldots,G$ with $G\geq2$, and let $V_t = \{c_t+x_{j,G}: j=1,\ldots,G\}$. Then, uniformly in $t$, $T$ and $G$, the set $V_t$ contains a node within
\[
	\bar\varkappa = \left|\Lambda^{-1}\left(\tfrac{1}{4}\right)\right|\vee\left|\Lambda^{-1}\left(\tfrac{3}{4}\right)\right|
\]
of its center $c_t$, where $\bar\varkappa$ depends only on $b$, $L_W$ and $W$.
\end{lemma}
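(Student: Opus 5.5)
The plan is to reduce the claim to a statement about the fixed grid $\{x_{j,G}\}$ alone, independent of $t$ and $T$. Since $V_t = c_t+\{x_{j,G}\}$, the distance from $c_t$ to the nearest node of $V_t$ equals $\min_j|x_{j,G}|$. It therefore suffices to show that, for every $G\geq2$, some $j$ satisfies $|x_{j,G}|\leq\bar\varkappa$. Uniformity in $t$ and $T$ is then automatic, and $\bar\varkappa$ depends only on $\lambda$, hence on $b$, $L_W$ and $W$.

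The key step is to locate a midpoint level $(j-\frac{1}{2})/G$ inside $[\frac{1}{4},\frac{3}{4}]$. The midpoint levels are equally spaced with spacing $1/G\leq\frac{1}{2}$, the smallest is $1/(2G)\leq\frac{1}{4}$, and the largest is $1-1/(2G)\geq\frac{3}{4}$. Consequently the closed interval $[\frac{1}{4},\frac{3}{4}]$, of length $\frac{1}{2}\geq 1/G$, must contain at least one level. To verify this, let $j$ be the smallest index with $(j-\frac{1}{2})/G\geq\frac{1}{4}$. If $j=1$, then $1/(2G)\leq\frac{1}{4}$ forces $1/(2G)=\frac{1}{4}$, so $G=2$ and the level is $\frac{1}{4}$. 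Otherwise the previous level lies below $\frac{1}{4}$, so this level is below $\frac{1}{4}+1/G\leq\frac{3}{4}$.

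Next, apply monotonicity of $\Lambda^{-1}$. Because $\lambda$ is continuous and strictly positive (it is proportional to $W^{1/3}\e^{-bx^2/3}$ with $W\geq1$ continuous), $\Lambda$ is a continuous strictly increasing bijection onto $(0,1)$. It follows that $x_{j,G}=\Lambda^{-1}\{(j-\frac{1}{2})/G\}\in[\Lambda^{-1}(\frac{1}{4}),\Lambda^{-1}(\frac{3}{4})]$, and so $|x_{j,G}|\leq\bar\varkappa$. The inequality $|x|\leq\max(|a|,|b|)$ for $x\in[a,b]$ completes the argument.

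No step here is genuinely hard. The only point that needs care is the edge case $G=2$, where the levels are exactly $\frac{1}{4}$ and $\frac{3}{4}$. That case shows why closed endpoints are needed, and why $\bar\varkappa$ is defined through these two quantiles rather than as a bound tending to zero with $G$.
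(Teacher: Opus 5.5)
Your proof is correct and follows essentially the paper's argument: the paper takes the largest midpoint level $u_{j^*}\leq\frac14$ and steps to $u_{j^*+1}$, which is your smallest level at or above $\frac14$, and then uses the spacing $1/G\leq\frac12$ and the monotonicity of $\Lambda^{-1}$ exactly as you do. Your explicit check that $\Lambda$ is a continuous, strictly increasing bijection onto $(0,1)$ supplies a step the paper leaves implicit.
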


\begin{proof}
Write $u_j = (j-\frac{1}{2})/G$ and let $j^{*}$ be the largest $j$ with $u_j\leq\frac{1}{4}$, which exists because $u_1 = 1/(2G)\leq\frac{1}{4}$. If $u_{j^{*}}=\frac{1}{4}$ there is nothing to prove. Otherwise $j^{*}<G$, since $u_G = 1-1/(2G)\geq\frac{3}{4}$, and then $u_{j^{*}+1} = u_{j^{*}}+1/G<\frac{1}{4}+\frac{1}{2} = \frac{3}{4}$ while $u_{j^{*}+1}>\frac{1}{4}$ by maximality. Hence $x_{j^{*}+1,G}$ lies between $\Lambda^{-1}(\frac{1}{4})$ and $\Lambda^{-1}(\frac{3}{4})$, so $|x_{j^{*}+1,G}|\leq\bar\varkappa$ and the node $c_t+x_{j^{*}+1,G}$ is within $\bar\varkappa$ of $c_t$.
\end{proof}

Write $\bar c = \mathcal{E}(\bar\varkappa)\bar\varkappa$ for the slope-transfer constant and
\[
	\delta_t = \left|c_t-\mu-\phi(c_{t-1}-\mu)\right|
\]
for the centered innovation, so that $\delta_1 = |c_1-\mu|$ under the date-1 convention. The centers have finite exponential-square moments, for a reason that is specific to the logarithmic form of $c_t$ and is stated once here because the duration families of Section~\ref{s:app} use it again.

\begin{lemma}[Sub-Gaussian right tail of the centers] \label{lem:centers}
Let $h$ be normal with mean $\mu$ and variance $\omega$, let $\zeta>0$ be independent of $h$ with $\log\Em\zeta^{s}<\infty$ for every $s>0$ and $\log\Em\zeta^{s} = o(s^2)$ as $s\rightarrow\infty$, and set $c = \kappa^{-1}\log(1+\e^{\kappa h}\zeta)$ for a constant $\kappa>0$. Then $c\geq0$ and $\Em\e^{\varsigma c^2}<\infty$ for every $\varsigma<1/(4\omega)$. In particular the stochastic volatility centers $c_t = \log(1+y_t^2)$, which are of this form with $\kappa=1$, $\omega = \omega_1$ and $\zeta = \epsilon_t^2$, and with them the innovations $\delta_t$, have finite exponential-square moments and are stationary and ergodic.
\end{lemma}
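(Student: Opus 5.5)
The plan is to sandwich $c$ between $0$ and an explicit sum of a Gaussian and a log-moment term, and then control each piece separately. Nonnegativity is immediate: $1+\e^{\kappa h}\zeta\geq1$ and $\kappa>0$.

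For the upper bound I will use the elementary inequality $\log(1+ab)\leq\log(1+a)+\log(1+b)$ for $a,b\geq0$. It follows from $1+ab\leq(1+a)(1+b)$. Applied with $a=\e^{\kappa h}$ and $b=\zeta$, and combined with $\log(1+\e^{x})\leq x^{+}+\log2$, it gives
\[
	0\leq c\leq h^{+}+\kappa^{-1}\log2+\kappa^{-1}\log(1+\zeta).
\]
Write $R=\kappa^{-1}\log(1+\zeta)\geq0$. Then $c^2\leq(1+\eta)(h^{+}+\kappa^{-1}\log2)^2+(1+1/\eta)R^2$ for any $\eta>0$. By independence of $h$ and $\zeta$,
\[
	\Em\e^{\varsigma c^2}\leq\Em\e^{\varsigma(1+\eta)(|h|+\kappa^{-1}\log2)^2}\cdot\Em\e^{\varsigma(1+1/\eta)R^2}.
\]

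The Gaussian factor is finite as soon as $\varsigma(1+\eta)<1/(2\omega)$. Since $\varsigma<1/(4\omega)$, I can take $\eta=1$, which gives the factor $2\varsigma<1/(2\omega)$. The additive constant is absorbed with a further $(1+\eta')$ split, which leaves a linear exponent in $|h|$ that a quadratic exponent dominates. This is exactly where the $1/(4\omega)$ threshold comes from: the factor $2$ lost in splitting the square.

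The main obstacle is the second factor, $\Em\e^{\varsigma' R^2}$ with $\varsigma'=2\varsigma/\kappa^2$. Here the hypothesis $\log\Em\zeta^{s}=o(s^2)$ must be used. By Markov's inequality,
\[
	\Pm(\log(1+\zeta)>r)\leq\Em(1+\zeta)^{s}\e^{-sr}\leq 2^{s}(1+\Em\zeta^{s})\e^{-sr}.
\]
Set $K(s)=\log\Em(1+\zeta)^{s}=o(s^2)$. For every $\epsilon>0$ there is $s_0$ with $K(s)\leq\epsilon s^2$ for $s\geq s_0$. Choosing $s=r/(2\epsilon)$ for large $r$ gives
\[
	\Pm(\log(1+\zeta)>r)\leq\exp\{-r^2/(4\epsilon)\}.
\]
So the tail of $R$ decays like $\e^{-\kappa^2r^2/(4\epsilon)}$ with $\epsilon$ arbitrary. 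Hence $\Em\e^{\varsigma' R^2}<\infty$ for every $\varsigma'$, and the stated range of $\varsigma$ is dictated by $h$ alone.

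For the stochastic volatility case, $y_t^2=\e^{h_t}\epsilon_t^2$ with $h_t\sim\distn{N}(\mu,\omega_1)$ under stationarity, and $\epsilon_t^2$ is independent of $h_t$. The moment $\Em\epsilon_t^{2s}=2^s\Gamma(s+\tfrac12)/\Gamma(\tfrac12)$ has logarithm $O(s\log s)=o(s^2)$ by Stirling. The lemma therefore applies with $\kappa=1$, $\omega=\omega_1$ and $\zeta=\epsilon_t^2$.

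For $\delta_t$, note that $|\delta_t|\leq|c_t|+|\mu|+|\phi||c_{t-1}|+|\phi\mu|$. Finite exponential-square moments then transfer through the triangle inequality for the sub-Gaussian Orlicz norm, as in the appendix's standing convention.

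Stationarity and ergodicity follow because $(h_t,\epsilon_t)$ is a stationary ergodic process: a stationary Gaussian AR(1) with $|\phi|<1$ is mixing, and it is paired with an independent iid sequence. The sequences $c_t$ and $\delta_t$ are measurable functions of finitely many coordinates of this process, so they are stationary and ergodic as factors.
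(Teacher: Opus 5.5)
Your proof is correct, but it decomposes $c$ differently from the paper. The paper bounds $1+\e^{\kappa h}\zeta\leq2\max(1,\e^{\kappa h}\zeta)$, so that $c\leq\kappa^{-1}\log2+S^{+}$ with $S=h+\kappa^{-1}\log\zeta$. It then applies a single Chernoff bound to the sum $S$. Independence enters only through $\log\Em\e^{sS}=s\mu+\frac{1}{2}s^2\omega+\log\Em\zeta^{s/\kappa}$, and the positive part discards the non-sub-Gaussian left tail of $\log\zeta$.

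You instead use $1+ab\leq(1+a)(1+b)$ to split $c$ additively into $h^{+}$ plus the nonnegative variable $\kappa^{-1}\log(1+\zeta)$. This avoids the left-tail issue from the outset. You then factor $\Em\e^{\varsigma c^2}$ using Young's inequality and independence, and run Chernoff on $\log(1+\zeta)$ alone. The paper's route is shorter. Yours makes explicit that the $\zeta$-part has exponential-square moments of every order, so the admissible range of $\varsigma$ is governed by $h$ alone.

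That same observation shows your remark about the threshold is off. The value $1/(4\omega)$ does not come ``exactly'' from the factor lost in splitting the square; it is an artifact of fixing $\eta=1$. Because your second factor is finite at every exponent, you can let $\eta\downarrow0$ and obtain $\Em\e^{\varsigma c^2}<\infty$ for every $\varsigma<1/(2\omega)$. No exponent above $1/(2\omega)$ can work, since $c\geq(h+\kappa^{-1}\log\zeta)^{+}$. In the paper, $1/(4\omega)$ likewise reflects evaluating Chernoff at the non-optimal point $s=x/(2\omega)$ rather than $s=x/\omega$.

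Your treatment of the stochastic volatility specialization, of $\delta_t$ via the Orlicz-norm triangle inequality, and of stationarity and ergodicity matches the paper's and is somewhat more explicit.
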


\begin{proof}
Nonnegativity is immediate. Since $1+\e^{\kappa h}\zeta\leq2\max(1,\e^{\kappa h}\zeta)$,
\[
	0\leq c\leq\kappa^{-1}\log2+\left(h+\kappa^{-1}\log\zeta\right)^{+},
\]
so only the right tail of $S = h+\kappa^{-1}\log\zeta$ matters. This is what makes the argument work: $\log\zeta$ itself need not be sub-Gaussian, and for $\zeta = \epsilon^2$ it is not, its left tail being exponential; the positive part truncates that tail, and the centers inherit only the right one. For $s>0$, independence gives $\log\Em\e^{sS} = s\mu+\frac{1}{2}s^2\omega+\log\Em\zeta^{s/\kappa}$, whose last term is $o(s^2)$ by hypothesis, so Chernoff's bound at $s = x/(2\omega)$ gives
\[
	\Pm(S>x)\leq\exp\left\{-\frac{x^2}{2\omega}+\frac{x^2}{8\omega}+o(x^2)\right\}\leq\e^{-x^2/(4\omega)}
\]
for all $x$ large enough. Hence $S^{+}$, and with it $c$, has a finite exponential-square moment at every exponent below $1/(4\omega)$, and $\delta_t\leq c_t+|\phi|c_{t-1}+|\mu|(1+|\phi|)$ inherits one by the $\|\cdot\|_{\mathrm{sG}}$ triangle inequality. For the stochastic volatility model $\Em\big[(\epsilon^2)^{s}\big] = \Em|\epsilon|^{2s} = 2^{s}\Gamma(s+\frac{1}{2})/\sqrt{\pi}$, whose logarithm grows like $s\log s = o(s^2)$, and $h_t$ is stationary normal with variance $\omega_1$; stationarity and ergodicity are inherited from $(h_t,\epsilon_t)$.
\end{proof}

\subsection*{Mode Location and the Marginal Moments}

Lemmas~\ref{lem:score} and \ref{lem:envelope} below share one elementary concentration bound for the twisted transitions, centered at the mode.

\begin{lemma}[Mode-centered concentration] \label{lem:modetail}
Let $q(v)\propto\exp\{\psi(v)-(v-a)^2/(2\omega)\}$ on $\mathbb{R}$, where $\psi$ is concave and finite and $\omega>0$, let $r$ be the mode of $q$ and let $h\sim q$. Then $\Pm(|h-r|\geq z)\leq\e^{-z^2/(2\omega)}$ for every $z\geq0$, and consequently $\Em|h-r|\leq\sqrt{\pi\omega/2}$ and $\Em[\e^{\varsigma(h-r)^2}]\leq(1-2\varsigma\omega)^{-1}$ for every $0\leq\varsigma<1/(2\omega)$. The bounds depend on $\omega$ alone.
\end{lemma}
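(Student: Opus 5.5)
The density $q$ is strongly log-concave with modulus $1/\omega$: its log-density is $\psi(v)-(v-a)^2/(2\omega)$ plus a constant, and $\psi$ is concave. The plan is to obtain a one-sided tail bound separately on each side of the mode and then integrate it. Throughout, set $\Psi(v)=\psi(v)-(v-a)^2/(2\omega)$, so that $q\propto\e^{\Psi}$ with $\Psi$ concave and $\Psi+(v-r)^2/(2\omega)$ still concave.

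\emph{Step 1: the right tail.} I would show $\Pm(h-r\geq z)\leq\frac{1}{2}\e^{-z^2/(2\omega)}$, or at least $\e^{-z^2/(2\omega)}$. Since $r$ maximizes $\Psi$, the concave function $\phi(x)=\Psi(r+x)-\Psi(r)+x^2/(2\omega)$ has a supergradient at $0$ that is nonpositive on the right. Hence for $x\geq0$ and $z\geq0$, concavity gives
\[
	\Psi(r+z+x)-\Psi(r)\leq\Psi(r+x)-\Psi(r)-\frac{(z+x)^2-x^2}{2\omega}\leq\Psi(r+x)-\Psi(r)-\frac{z^2}{2\omega}.
\]
The first inequality is the claim that $x\mapsto\Psi(r+x)+x^2/(2\omega)$ has nonincreasing increments beyond the mode; the second uses $(z+x)^2-x^2\geq z^2$ for $x\geq 0$. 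Integrating over $x\geq0$ then yields
\[
	\int_{r+z}^\infty\e^{\Psi}\leq\e^{-z^2/(2\omega)}\int_r^\infty\e^{\Psi},
\]
so $\Pm(h\geq r+z)\leq\e^{-z^2/(2\omega)}\,\Pm(h\geq r)$. The left tail is handled symmetrically. Adding the two sides gives $\Pm(|h-r|\geq z)\leq\e^{-z^2/(2\omega)}\{\Pm(h\geq r)+\Pm(h\leq r)\}=\e^{-z^2/(2\omega)}$, since $q$ has no atom at $r$.

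\emph{Step 2: the moments.} The moment bounds follow by the layer-cake formula. First, $\Em|h-r|=\int_0^\infty\Pm(|h-r|\geq z)\,\di z\leq\int_0^\infty\e^{-z^2/(2\omega)}\,\di z=\sqrt{\pi\omega/2}$. Second, with $Y=(h-r)^2$, we have $\Pm(Y\geq y)\leq\e^{-y/(2\omega)}$, so
\[
	\Em\e^{\varsigma Y}=1+\int_0^\infty\varsigma\e^{\varsigma y}\Pm(Y\geq y)\,\di y\leq1+\frac{\varsigma}{1/(2\omega)-\varsigma}=(1-2\varsigma\omega)^{-1}.
\]
Neither bound depends on $a$ or $\psi$.

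\emph{Main obstacle.} The delicate point is the increment inequality in Step 1 when $\psi$ is only concave, possibly with kinks, so that the mode is characterized by supergradients rather than a vanishing derivative. The subtraction above is an exact identity: $\Psi(r+z+x)-\Psi(r+x)=[\phi(z+x)-\phi(x)]-[(z+x)^2-x^2]/(2\omega)$. What must be checked is that $\phi(z+x)-\phi(x)\leq0$ for $x\geq0$, $z\geq 0$. This holds because $\phi$ is concave with $\phi(0)=0$ and is maximized at $0$ from the right, since the right derivative of $\phi$ at $0$ equals that of $\Psi$ at $r$, which is $\leq0$. A concave function that is nonincreasing just to the right of $0$ is nonincreasing on all of $[0,\infty)$. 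Existence and uniqueness of the mode is immediate from strong log-concavity.
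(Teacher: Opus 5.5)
Your proof is correct and follows essentially the paper's route. You prove the pointwise shift inequality $\Psi(r+z+x)\leq\Psi(r+x)-z^2/(2\omega)$ for $x,z\geq0$, integrate it to get $\Pm(h\geq r+z)\leq\e^{-z^2/(2\omega)}\Pm(h\geq r)$, mirror and add, and then apply the layer-cake formula for both moments. The paper obtains the shift inequality from a nonpositive supergradient at the inner point rather than from monotonicity of your $\phi$, which is the same fact. One harmless slip: your parenthetical target $\frac{1}{2}\e^{-z^2/(2\omega)}$ is false in general, since the mode need not be a median, but you never use it.
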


\begin{proof}
The log density $\Psi(v) = \psi(v)-(v-a)^2/(2\omega)$, taken without its normalizing constant, is coercive and strongly concave, so the mode is unique and $0\in\partial\Psi(r)$. For any $x$ and any supergradient $g\in\partial\Psi(x)$, the supergradient inequality for $\psi$ and the exact expansion of the quadratic give $\Psi(y)\leq\Psi(x)+g\,(y-x)-(y-x)^2/(2\omega)$; monotonicity of the superdifferential and $0\in\partial\Psi(r)$ supply a supergradient $g\leq0$ at every $x\geq r$. Taking $x = r+s-z$ and $y = r+s$ with $s\geq z\geq0$ yields $\Psi(r+s)\leq\Psi(r+s-z)-z^2/(2\omega)$, and integrating over $s\in[z,\infty)$,
\[
	\int_z^\infty\e^{\Psi(r+s)}\,\di s\leq\e^{-z^2/(2\omega)}\int_0^\infty\e^{\Psi(r+s)}\,\di s,
\]
so $\Pm(h-r\geq z)\leq\e^{-z^2/(2\omega)}\,\Pm(h\geq r)$; the mirror argument bounds the left tail, and the two add to the stated bound. The normalizing constant cancels in the ratio, so no bound on the height of the density is involved. Integrating the tail bound gives $\Em|h-r|\leq\int_0^\infty\e^{-z^2/(2\omega)}\di z = \sqrt{\pi\omega/2}$, and $\Em[\e^{\varsigma(h-r)^2}] = 1+\int_0^\infty2\varsigma z\,\e^{\varsigma z^2}\Pm(|h-r|\geq z)\,\di z\leq1+2\varsigma\int_0^\infty z\,\e^{-\{1/(2\omega)-\varsigma\}z^2}\di z = (1-2\varsigma\omega)^{-1}$.
\end{proof}

\begin{lemma}[Location of the twisted mode] \label{lem:score}
Let $r_t(u)$ denote the mode of $q_t(\cdot \gvn u)$ and write $F_{t,T} = |f_t'(c_t)|$, the second index recording the implicit dependence of $f_t$ on $T$. For the companding node sets the following hold.
\begin{enumerate}
	\item[(i)] $F_{t,T}\leq\bar F_t := C_0+C_1\sum_{j=1}^{\infty}|\phi|^{j-1}\delta_{t+j}$ for every $T\geq t$, with $C_0$ and $C_1$ depending only on $(\phi,\sigma_h,\bar\varkappa)$, and $\bar F_t$ is stationary with a finite exponential-square moment.
	\item[(ii)] $|r_t(u)-r_t(u')|\leq|\phi|\,|u-u'|$ for all $u,u'$.
	\item[(iii)] With $\varrho_t = 2\delta_t+\sigma_h^2(\bar F_t+\bar c)+\bar\varkappa$, which is stationary with a finite exponential-square moment,
	\[
		|r_t(c_{t-1}+x)-c_t|\leq\varrho_t+|\phi|\,|x|.
	\]
	At $t=1$ the date-1 convention gives $|r_1-c_1|\leq\varrho_1 := 2\delta_1+\omega_1(\bar F_1+\bar c)+\bar\varkappa$ with no contraction term; the prefactor $\omega_1$ comes from the first-order condition of $q_1$, while $\bar F_1$ and $\bar c$ retain $(\phi,\sigma_h^2)$ because they concern the date-2 transition.
\end{enumerate}
\end{lemma}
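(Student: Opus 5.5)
The plan is to treat the mode of each twisted transition as a proximal point. By Lemma~\ref{lem:concave}(ii), $\psi_t$ is concave, finite and piecewise linear, so $\log q_t(v\gvn u)=\psi_t(v)-(v-a)^2/(2\sigma_h^2)+\text{const}$ with $a=a_t(u)$ is strongly concave. Its unique mode $r_t(u)=:R(a)$ is therefore characterized by
\[
	R(a)-a\in\sigma_h^2\,\partial\psi_t\big(R(a)\big).
\]
That is, $R=\mathrm{prox}_{-\sigma_h^2\psi_t}$.

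\textbf{Part (ii).} Monotonicity of the superdifferential gives $(R(a)-R(a'))\{(R(a)-a)-(R(a')-a')\}\leq0$. Rearranged, this says $|R(a)-R(a')|\leq|a-a'|$, so $R$ is nonexpansive. Since $a_t(u)-a_t(u')=\phi(u-u')$, part (ii) follows at once. At $t=1$ the argument $a_1\equiv\mu$ does not vary, so the contraction factor is zero, matching the date-1 convention.

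\textbf{Part (iii), given a slope bound.} The triangle inequality and (ii) give $|r_t(c_{t-1}+x)-c_t|\leq|r_t(c_{t-1})-c_t|+|\phi||x|$, so only the case $x=0$, where $a=a_t(c_{t-1})$ and $|a-c_t|=\delta_t$, needs work. By Lemma~\ref{lem:nodes} there is a node $v^*\in V_t$ with $|v^*-c_t|\leq\bar\varkappa$. Because the tangent line at $v^*$ is active there, $g^*:=f_t'(v^*)\in\partial\psi_t(v^*)$, and hence $v^*=R(v^*-\sigma_h^2 g^*)$. Nonexpansiveness then yields
\[
	|r_t(c_{t-1})-v^*|\leq|a-v^*|+\sigma_h^2|g^*|\leq\delta_t+\bar\varkappa+\sigma_h^2|g^*|.
\]
For the slope at $v^*$, the curvature bound \eqref{eq:svcurv} gives $|g^*|\leq|f_t'(c_t)|+\mathcal{E}(\bar\varkappa)\bar\varkappa=F_{t,T}+\bar c$. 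This produces a bound of the form $\varrho_t$ once $F_{t,T}$ is replaced by $\bar F_t$ from (i); the constants in front of $\delta_t$ and $\bar\varkappa$ may need a little adjustment, but any fixed constants serve the later lemmas equally well. At $t=1$ the same argument runs with $\omega_1$ in place of $\sigma_h^2$ in the first-order condition. The slope $g^*$, however, still comes from $f_1$, whose bound involves the date-2 transition, exactly as the statement's convention prescribes.

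\textbf{Part (i), by backward induction on $t$.} Start from \eqref{eq:derivative}:
\[
	f_t'(c_t)=\ell_t'(c_t)+\frac{\phi}{\sigma_h^2}\big\{\mathcal{M}_{t+1}(c_t)-a_{t+1}(c_t)\big\},
\]
with $|\ell_t'(c_t)|\leq\tfrac12$ from the curvature calculation. Split $\mathcal{M}-a=(\mathcal{M}-r)+(r-c_{t+1})+(c_{t+1}-a)$, where $r=r_{t+1}(c_t)$, and bound the three pieces as follows.
\begin{enumerate}
	\item[(a)] $|\mathcal{M}-r|\leq\sqrt{\pi\sigma_h^2/2}$, by Lemma~\ref{lem:modetail}.
	\item[(b)] $|r-c_{t+1}|\leq\delta_{t+1}+\bar\varkappa+\sigma_h^2(F_{t+1,T}+\bar c)$, by the $x=0$ step of (iii) at date $t+1$ (not circular, because that step uses only $F_{t+1,T}$).
	\item[(c)] $|c_{t+1}-a|=\delta_{t+1}$.
\end{enumerate}
Combining these gives
\[
	F_{t,T}\leq C_0'+\frac{2|\phi|}{\sigma_h^2}\delta_{t+1}+|\phi|F_{t+1,T},
	\qquad F_{T,T}=|\ell_T'(c_T)|\leq\tfrac12.
\]
Because the coefficient on $F_{t+1,T}$ is exactly $|\phi|<1$, iterating to $T$ gives the geometric-sum bound $\bar F_t$ uniformly in $T$. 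Stationarity of $\bar F_t$ is inherited from $(\delta_t)$. Its finite exponential-square moment follows from Lemma~\ref{lem:centers} and the $\|\cdot\|_{\mathrm{sG}}$ triangle inequality applied to the absolutely summable weights $|\phi|^{j-1}$; the same reasoning covers $\varrho_t$.

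\textbf{Main obstacle.} The delicate step is the slope transfer in (iii): ensuring that $\partial\psi_t$ at the mode is controlled only through a nearby node, uniformly in $G$ and in node placement. This is exactly where the nonexpansiveness argument anchored at $v^*$ with supergradient $f_t'(v^*)$ is needed. It must also be checked that the $|\phi|$-contraction coefficient survives exactly, since the constant in (b) carries $\sigma_h^2$, which cancels against $\phi/\sigma_h^2$; if it failed, the recursion for $F_{t,T}$ would not be stable.
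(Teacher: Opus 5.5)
Your argument is correct in substance and follows the paper's overall structure. Part (ii) is the same monotonicity-of-the-superdifferential computation. Part (i) is the same backward recursion: Lemma~\ref{lem:modetail} controls $|\mathcal{M}_{t+1}-r|$, and $\sigma_h^2$ cancels against $\phi/\sigma_h^2$, so the coefficient on $F_{t+1,T}$ stays exactly $|\phi|$.

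The real difference is the slope-transfer step. The paper proves the bound \eqref{eq:slope} on the supergradient $s$ at the mode itself, by a case split on $r\geq v^*$ versus $r<v^*$. It then reads both the recursion and the location bound off $r-a=\sigma_h^2 s$. To handle the boundary case $r=v^*$ it needs $\psi_t$ to be differentiable at $v^*$, which it takes from the interlacing in Lemma~\ref{lem:intervals}, and so from strict concavity. You instead anchor the proximal map at the node through the fixed point $v^*=R(v^*-\sigma_h^2 g^*)$. This uses only that the tangent at $v^*$ majorizes $\psi_t$ and touches it there, so it needs neither strict concavity nor smoothness at $v^*$. Nonexpansiveness then does the work, and you feed the resulting location bound at date $t+1$ into the recursion. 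The paper's route packages everything into one inequality used twice; yours avoids the case split and the differentiability-at-$v^*$ step.

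The one thing to tidy is the constant in (iii). Nonexpansiveness plus two triangle inequalities through $v^*$ gives
$|r_t(c_{t-1})-c_t|\leq\delta_t+2\bar\varkappa+\sigma_h^2(F_{t,T}+\bar c)$.
This does not lie below the stated $\varrho_t=2\delta_t+\bar\varkappa+\sigma_h^2(\bar F_t+\bar c)$ when $\delta_t<\bar\varkappa$. Also, in your step (b) the quoted $\delta_{t+1}+\bar\varkappa+\cdots$ is your bound on $|r-v^*|$, not on $|r-c_{t+1}|$.

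Both are fixed within your own framework. Your inequality $(R(a)-R(a'))^2\leq(R(a)-R(a'))(a-a')$ shows that $R$ is nondecreasing and $1$-Lipschitz. With $a^*=v^*-\sigma_h^2 g^*$, it follows that $r=R(a)$ lies between $R(a^*)=v^*$ and $v^*+(a-a^*)=a+\sigma_h^2 g^*$. Hence
$|r-c_t|\leq\max\{\bar\varkappa,\ \delta_t+\sigma_h^2|g^*|\}$,
which is at most $\varrho_t$ and makes (b) correct as written. The same argument works at date 1 with $\omega_1$ in place of $\sigma_h^2$.

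Alternatively, enlarge $C_0$ by $\bar\varkappa/\sigma_h^2$, which (i) permits. This also covers date 1, since $\omega_1\geq\sigma_h^2$.
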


\begin{proof}
\textit{Claim 1: slope control at a mode.} Suppose $r-a = \sigma^2s$ for some $a\in\mathbb{R}$, some $\sigma^2>0$ and some supergradient $s\in\partial\psi_t(r)$. Let $v^{*}$ be the node of $V_t$ nearest $c_t$, so that $|v^{*}-c_t|\leq\bar\varkappa$ by Lemma~\ref{lem:nodes} and, by \eqref{eq:svcurv}, $|f_t'(v^{*})|\leq|f_t'(c_t)|+\bar c$. The slopes of $\psi_t$ are values of $f_t'$ at nodes and decrease from left to right, with $\psi_t'(v^{*+})\leq f_t'(v^{*})\leq\psi_t'(v^{*-})$; these hold with equality, because the interlacing of Lemma~\ref{lem:intervals} places $v^{*}$ strictly inside its active interval, so $\psi_t$ is differentiable at $v^{*}$ and the boundary case $r = v^{*}$ falls under either branch below. If $r\geq v^{*}$ then $s\leq\psi_t'(v^{*+})\leq|f_t'(c_t)|+\bar c$, while if in addition $s\leq0$ then $\sigma^2|s| = a-r\leq a-v^{*}\leq|a-c_t|+\bar\varkappa$; the case $r<v^{*}$ is symmetric. In every case
\begin{equation} \label{eq:slope}
	|s| \leq |f_t'(c_t)|+\bar c+\frac{|a-c_t|+\bar\varkappa}{\sigma^2}.
\end{equation}

\textit{Claim 2: the backward score recursion, giving (i).} At $t=T$, $F_{T,T} = |\ell_T'(c_T)|\leq\frac{1}{2}$. For $t<T$, the derivative recursion \eqref{eq:derivative} gives $f_t'(c_t) = \ell_t'(c_t)+(\phi/\sigma_h^2)\{\mathcal{M}_{t+1}(c_t)-a_{t+1}(c_t)\}$, where $\mathcal{M}_{t+1}(c_t)$ is the mean of $q_{t+1}(\cdot \gvn c_t)$. Let $r$ be the mode of that distribution, which exists and is unique because its log density is strongly concave with modulus $1/\sigma_h^2$ and coercive, and split $|\mathcal{M}_{t+1}(c_t)-a_{t+1}(c_t)|\leq|\mathcal{M}_{t+1}(c_t)-r|+|r-a_{t+1}(c_t)|$. For the first term, $q_{t+1}(\cdot \gvn c_t)$ has the form of Lemma~\ref{lem:modetail} with $\omega = \sigma_h^2$, so $|\mathcal{M}_{t+1}(c_t)-r|\leq\Em|h_{t+1}-r|\leq\sqrt{\pi/2}\,\sigma_h$. For the second term, the first-order condition produces $s\in\partial\psi_{t+1}(r)$ with $r-a_{t+1}(c_t) = \sigma_h^2s$, so \eqref{eq:slope} at date $t+1$ with $a = a_{t+1}(c_t)$, for which $|a-c_{t+1}| = \delta_{t+1}$, gives $|s|\leq F_{t+1,T}+\bar c+(\delta_{t+1}+\bar\varkappa)/\sigma_h^2$. Hence
\[
	F_{t,T}\leq\tfrac{1}{2}+|\phi|\Big(\frac{\sqrt{\pi/2}}{\sigma_h}+\bar c+\frac{\bar\varkappa}{\sigma_h^2}\Big)
	+\frac{|\phi|}{\sigma_h^2}\delta_{t+1}+|\phi|F_{t+1,T},
\]
in which the coefficient on $F_{t+1,T}$ is exactly $|\phi|$, because the error terms enter additively and never multiplicatively. Backward iteration gives (i), and Lemma~\ref{lem:centers} gives $\delta_t$ a finite exponential-square moment, so that the partial sums of $\bar F_t$ give $\|\bar F_t\|_{\mathrm{sG}}\leq C_0'+C_1\|\delta_t\|_{\mathrm{sG}}/(1-|\phi|)<\infty$ by monotone convergence; stationarity follows from that of $(h_t,\epsilon_t)$.

\textit{Claim 3: contraction and location, giving (ii) and (iii).} The first-order conditions produce supergradients $s\in\partial\psi_t(r_t(u))$ and $s'\in\partial\psi_t(r_t(u'))$ with $r_t(u)-a_t(u) = \sigma_h^2s$ and $r_t(u')-a_t(u') = \sigma_h^2s'$. Subtracting, and using $a_t(u)-a_t(u') = \phi(u-u')$,
\[
	r_t(u)-r_t(u') = \phi(u-u')+\sigma_h^2(s-s').
\]
Concavity of $\psi_t$ makes the superdifferential monotone, $(s-s')\{r_t(u)-r_t(u')\}\leq0$, so multiplying the display by $r_t(u)-r_t(u')$ gives $\{r_t(u)-r_t(u')\}^2\leq\phi(u-u')\{r_t(u)-r_t(u')\}$, which is (ii). That contraction reduces (iii) to $u = c_{t-1}$: there $r_t(c_{t-1})-c_t = \{a_t(c_{t-1})-c_t\}+\sigma_h^2s$ with $|a_t(c_{t-1})-c_t| = \delta_t$, and \eqref{eq:slope} with $|f_t'(c_t)|\leq\bar F_t$ gives $\sigma_h^2|s|\leq\sigma_h^2(\bar F_t+\bar c)+\delta_t+\bar\varkappa$. At $t=1$ the same computation with $a_1(u)\equiv\mu$ and variance $\omega_1$ gives the degenerate bound. The moment claim for $\varrho_t$ follows from (i).
\end{proof}

Centering the tail bound at the mode is not a matter of convenience. Centering instead at $a_t(h_{t-1})$, so as to keep the Gaussian factor of the transition density and bound $\e^{\psi_t}/\hat C_t$ separately, fails: by \eqref{eq:svcurv} the slope of $\psi_t$ at distance $d$ from the center can be of order $\e^{d}$, so that ratio can grow like $\exp(d\,\e^{d})$, which no Gaussian factor absorbs. The first-order condition at the mode replaces that slope by a supergradient, which \eqref{eq:slope} bounds, and it is about the mode that the concentration of Lemma~\ref{lem:modetail} is available.

\begin{lemma}[Uniform sub-Gaussian marginals] \label{lem:envelope}
There exist a deterministic $\varsigma>0$, depending only on the model parameters, a stationary ergodic sequence $K_t$ with $\Em K_t<\infty$, and an almost surely finite random variable $\Upsilon\geq1$ such that, uniformly in $T$ and $G\geq2$, the centered state $X_t=h_t-c_t$ satisfies $\Em_q[\e^{\varsigma X_t^2}]\leq\Upsilon K_t$. Since $\varsigma>2\varsigma/3$, Assumption~\ref{as:scaling}(ii) then holds at $b=\varsigma$ with $B_{t,T} = \Upsilon K_t$.
\end{lemma}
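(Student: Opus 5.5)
We will propagate a sub-Gaussian moment bound forward along the twisted Markov chain $q$, using the mode control of Lemma~\ref{lem:score} and the mode-centered concentration of Lemma~\ref{lem:modetail}. Write $X_t = h_t-c_t$ and, for $t\geq2$, decompose
\[
	X_t = \{h_t-r_t(h_{t-1})\}+\{r_t(c_{t-1}+X_{t-1})-c_t\}.
\]
Conditionally on $h_{t-1}$, the first term is the deviation of a draw from $q_t(\cdot\gvn h_{t-1})$ from its mode. Lemma~\ref{lem:modetail} applies with $\omega=\sigma_h^2$, since $\psi_t$ is concave. Hence this term has conditional tail $\e^{-z^2/(2\sigma_h^2)}$, uniformly in $h_{t-1}$, $T$ and the nodes. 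By Lemma~\ref{lem:score}(iii), the second term is at most $\varrho_t+|\phi|\,|X_{t-1}|$ in absolute value. So $|X_t|\leq E_t+\varrho_t+|\phi||X_{t-1}|$, where $E_t$ is conditionally sub-Gaussian with a deterministic constant. At $t=1$ we have $|X_1|\leq E_1+\varrho_1$, with $E_1$ sub-Gaussian at variance $\omega_1$.

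\textbf{Moment step.} The next step converts this drift inequality into an exponential-square moment recursion. Fix $\eta\in(|\phi|,1)$ and use the convexity bound
\[
	(x+y)^2\leq \frac{x^2}{\eta'}+\frac{y^2}{1-\eta'},
\]
with $\eta'=|\phi|/\eta$ applied to the $|\phi||X_{t-1}|$ term, which gives
\[
	X_t^2\leq |\phi|\eta\, X_{t-1}^2+C(E_t^2+\varrho_t^2).
\]
Take conditional expectations of $\e^{\varsigma X_t^2}$ given $h_{t-1}$ and use Lemma~\ref{lem:modetail} with $C\varsigma<1/(2\omega_1)$. This yields
\[
	\Em_q\big[\e^{\varsigma X_t^2}\big]\leq \kappa\,\e^{C\varsigma\varrho_t^2}\,\Em_q\big[\e^{\varsigma|\phi|\eta X_{t-1}^2}\big].
\]
Jensen's inequality with exponent $\rho=|\phi|\eta<1$ then gives
\[
	\Em_q\big[\e^{\varsigma\rho X_{t-1}^2}\big]\leq\Big(\Em_q\big[\e^{\varsigma X_{t-1}^2}\big]\Big)^{\rho}.
\]
Writing $m_t=\log\Em_q[\e^{\varsigma X_t^2}]$, we obtain the linear recursion $m_t\leq \log\kappa+C\varsigma\varrho_t^2+\rho\,m_{t-1}$. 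It starts from $m_1\leq\log\kappa+C\varsigma\varrho_1^2$, and iterating gives
\[
	m_t\leq \frac{\log\kappa}{1-\rho}+C\varsigma\sum_{j\geq0}\rho^j\varrho_{t-j}^2,
\]
with terms indexed before date $1$ set to zero. This bound is uniform in $T$ and $G$. The only $T$-dependence enters through $\varrho_t$, and that dependence is already removed by the $T$-free bound $\bar F_t$ in Lemma~\ref{lem:score}(i).

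\textbf{Choosing $K_t$ and $\Upsilon$.} Exponentiating gives $\Em_q[\e^{\varsigma X_t^2}]\leq \kappa'\exp\{C\varsigma\sum_j\rho^j\varrho_{t-j}^2\}$. Define $K_t$ as this quantity with the sum extended to the infinite stationary past. This makes $K_t$ a stationary ergodic functional of $(h_s,\epsilon_s)$, and it dominates the truncated sum. To get $\Em K_t<\infty$, use the convexity of $\exp$ over the geometric weights $(1-\rho)\rho^j$. This bounds $\Em K_t$ by $\kappa'\Em\exp\{C\varsigma(1-\rho)^{-1}\varrho_t^2\}$, which is finite once $\varsigma$ is small enough, because $\varrho_t$ has a finite exponential-square moment (Lemma~\ref{lem:score}(iii)). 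The random factor $\Upsilon$ is kept available to absorb two things: the date-1 initialization, whose parameters are $\omega_1$ rather than $\sigma_h^2$, and any almost surely finite discrepancy if one prefers $K_t$ built only from a finite window. Taking $\Upsilon=1$ may in fact suffice.

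\textbf{Final step and main obstacle.} The last task is to fix a single deterministic $\varsigma$ meeting every smallness constraint: $C\varsigma<1/(2\omega_1)$, and $C\varsigma/(1-\rho)$ below the exponential-square exponent of $\varrho_t$. Then set $b=\varsigma$, which makes $\varsigma>2b/3$ trivial. The main obstacle is circularity. The nodes, and hence $\bar\varkappa$, $\bar c$ and $\varrho_t$, depend on $b=\varsigma$, yet $\varsigma$ must be chosen small relative to constants depending on $\varrho_t$. I would resolve this by first fixing an upper value $b_0$ and noting that $\bar\varkappa$ is monotone, or at least bounded, for $b\in(0,b_0]$; as $b\to0$, $\Lambda$ spreads out and $\bar\varkappa$ grows. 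This yields constants $C_0$, $C_1$ and moment bounds for $\varrho_t$ that are uniform over $b\in[b_1,b_0]$, after which $\varsigma=b$ is chosen in that range. If uniformity as $b\to0$ fails, I would instead set $b=\varsigma'$ for some $\varsigma'<\varsigma$ with $\varsigma>2\varsigma'/3$, decoupling the two constants.
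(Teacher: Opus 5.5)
Your drift argument is the paper's own. It uses the same split of $X_t$ at the twisted mode $r_t(h_{t-1})$ and applies Lemma~\ref{lem:modetail} conditionally, uniformly in the conditioning argument. It uses Lemma~\ref{lem:score}(iii) for the mode term and a Young-type split with contraction coefficient below one (yours is $|\phi|\eta$, the paper's $4\phi^2/(1+|\phi|)^2$). It then applies Jensen to $x\mapsto x^{\rho}$, telescopes geometrically, and uses convexity of the exponential over the weights $(1-\rho)\rho^j$ to get $\Em K_t<\infty$.

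One bookkeeping claim is false: the $K_t$ you define does not dominate the truncated sum. Its lag-$(t-1)$ term carries the stationary $\varrho_1=2|c_1-\mu-\phi(c_0-\mu)|+\sigma_h^2(\bar F_1+\bar c)+\bar\varkappa$. Your recursion instead starts from the degenerate $\varrho_1=2|c_1-\mu|+\omega_1(\bar F_1+\bar c)+\bar\varkappa$, and the former need not dominate the latter ($\omega_1>\sigma_h^2$ whenever $\phi\neq0$). So $\Upsilon=1$ does not follow for your $K_t$. Take $\Upsilon=\exp(C\varsigma\varrho_1^2)\vee1$ with the degenerate $\varrho_1$, which suffices because $\rho^{t-1}\leq1$; this is exactly the paper's choice.

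The real gap is the step you call the main obstacle, and your resolution does not work as stated. First, $\bar\varkappa$ is not bounded on $(0,b_0]$. With $W=1+\e^{|x|}$ the density $\lambda$ concentrates near $\pm1/(2b)$, so $\bar\varkappa\to\infty$ as $b\to0$, and $\bar c=\mathcal{E}(\bar\varkappa)\bar\varkappa$ grows faster still. Second, restricting to $b\in[b_1,b_0]$ gives no reason for the resulting upper bound on $\varsigma$ to exceed $b_1$. Third, your fallback $b=\varsigma'<\varsigma$ decouples nothing: $b$ stays tied to $\varsigma$, it moves only in the direction that enlarges $\bar\varkappa$, and it abandons the $b=\varsigma$ of the statement.

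The missing observation is that there is no circularity. The parameter $b$ enters $\varrho_t$ only through deterministic additive constants: $\bar\varkappa$, $\sigma_h^2\bar c$ and $\sigma_h^2C_0$. The random part $2\delta_t+\sigma_h^2C_1\sum_{j\geq1}|\phi|^{j-1}\delta_{t+j}$ is free of $b$, because $C_1$ comes from the term $(|\phi|/\sigma_h^2)\delta_{t+1}$ of the backward score recursion.

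Since $(a+R)^2\leq(1+\tau)R^2+(1+1/\tau)a^2$ for any $\tau>0$, adding a constant $a$ changes the value of $\Em\e^{\lambda\varrho_t^2}$ but not its finiteness for $\lambda$ strictly below the admissible exponent of the $b$-free part. Hence a single $\lambda_\varrho$ serves every $b$. Your two constraints, $C\varsigma<1/(2\omega_1)$ and $C\varsigma/(1-\rho)<\lambda_\varrho$, then involve only $\phi$, $\omega_1$ and $\lambda_\varrho$. Fix $\varsigma$ from them and then set $b=\varsigma$ exactly. The dependence on $b$ survives only in the finite value of $\Em K_t$, which is all the lemma requires.
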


\begin{proof}
The constants are chosen in the order: the exponential-square exponent $\lambda_\varrho$ of $\varrho_t$ is fixed by Lemma~\ref{lem:score}(iii), and may be fixed uniformly in $b$ because $b$ enters $\varrho_t$ only through the additive constants $\bar\varkappa$ and $\sigma_h^2\bar c$, which change the moment's value but not the admissible exponent; then $\epsilon$ and $\varsigma$ are chosen explicitly in Step~2; finally $b=\varsigma$, which is admissible because $\varsigma>2\varsigma/3$. The quantification is therefore not circular: $\varsigma$ is determined by the model parameters alone, and $b$ is read off it.

\textit{Step 1: conditional sub-Gaussianity.} The log density of $q_t(\cdot \gvn u)$ is the concave $\psi_t$ plus a quadratic of curvature $-1/\omega_t$, so Lemma~\ref{lem:modetail} applies with $\omega = \omega_t$ and, writing $\Em_u$ for expectation under $q_t(\cdot \gvn u)$, gives $\Em_u[\e^{\varsigma_1(h_t-r_t(u))^2}]\leq(1-2\varsigma_1\omega_1)^{-1}$ for $\varsigma_1<1/(2\omega_1)$, uniformly in $u$, $t$, $T$, $G$; the constraint and the constant are imposed in their binding forms with $\omega_1\geq\sigma_h^2$, which covers the degenerate date~1 as well. Only a moment is required here, not a bound on the height of the conditional density; the two differ, because strong log-concavity alone leaves the peak unbounded, and Lemma~\ref{lem:modetail} never bounds it.

\textit{Step 2: drift recursion.} Recall $X_t = h_t-c_t$, the centered state of Section~\ref{ss:scaling}, and set $\xi_t = r_t(h_{t-1})-c_t$, so $X_t = \{h_t-r_t(h_{t-1})\}+\xi_t$ and $|\xi_t|\leq\varrho_t+|\phi||X_{t-1}|$, the degenerate date~1 reading $|\xi_1|\leq\varrho_1$. Set
\[
	\epsilon = \frac{1-|\phi|}{1+|\phi|},
	\qquad
	\bar\rho = (1+\epsilon)|\phi| = \frac{2|\phi|}{1+|\phi|}<1,
	\qquad
	c_\epsilon = (1+\epsilon)(1+1/\epsilon) = \frac{4}{1-\phi^2},
\]
all explicit in $\phi$ and well defined on the whole range $|\phi|<1$, including $\phi=0$, where $\epsilon=1$ and $\bar\rho=0$. Young's inequality twice gives $X_t^2\leq\bar\rho^2X_{t-1}^2+c_\epsilon\varrho_t^2+(1+1/\epsilon)\{h_t-r_t(h_{t-1})\}^2$. Conditioning on $h_{t-1}$, applying Step 1 with $\varsigma(1+1/\epsilon)<1/(2\omega_1)$, and then using Jensen's inequality for the concave map $x\mapsto x^{\bar\rho^2}$,
\[
	M_t := \Em_q\big[\e^{\varsigma X_t^2}\big]\leq C\,\e^{\varsigma c_\epsilon\varrho_t^2}\,M_{t-1}^{\bar\rho^2},
\]
where $C = \{1-2\varsigma(1+1/\epsilon)\omega_1\}^{-1}\geq1$ is the Step-1 constant at exponent $\varsigma(1+1/\epsilon)$. The recursion requires $\varsigma(1+1/\epsilon)<1/(2\omega_1)$ from Step~1, and the averaging below requires $\varsigma c_\epsilon/(1-\bar\rho^2)<\lambda_\varrho$; both hold at
\[
	\varsigma = \min\left\{\frac{1}{8\omega_1(1+1/\epsilon)},\ \frac{\lambda_\varrho(1-\bar\rho^2)}{2c_\epsilon}\right\}.
\]
At $t=1$ the transition is degenerate, so the contraction factor is $\bar\rho_1 = 0$ and the recursion terminates at $M_1\leq C\e^{\varsigma c_\epsilon\varrho_1^2}$. Under the stationary initialization the data sequence extends two-sidedly, and $\varrho_s$ is defined for every $s\in\mathbb{Z}$ from that extension; telescoping to $t=1$ and dominating the initial term,
\[
	M_t
	\leq\exp\Big\{\sum_{j\geq0}\bar\rho^{2j}\big(\log C+\varsigma c_\epsilon\varrho_{t-j}^2\big)\Big\}\,\Upsilon
	 =: M_t^{*}\,\Upsilon,
	\qquad \Upsilon = \e^{\varsigma c_\epsilon\varrho_1^2}\vee1,
\]
where $\Upsilon$ collects the geometrically discounted influence of date~1, is almost surely finite and is free of $t$; it cannot be absorbed into $M_t^{*}$, because the degenerate $\varrho_1$ of Lemma~\ref{lem:score}(iii) is not the stationary variable that enters the two-sided sum at lag $t-1$. Averaging with the weights $(1-\bar\rho^2)\bar\rho^{2j}$ inside the convex exponential,
\[
	\Em M_t^{*}\leq C^{1/(1-\bar\rho^2)}\,\Em\exp\Big\{\frac{\varsigma c_\epsilon}{1-\bar\rho^2}\varrho_t^2\Big\}<\infty
\]
by the choice of $\varsigma$. Setting $K_t = M_t^{*}$ gives $\Em_q[\e^{\varsigma X_t^2}] = M_t\leq\Upsilon K_t$ at every date, with $\Em K_t<\infty$; and $K_t$ is a measurable function of the stationary ergodic data sequence, hence stationary and ergodic. Since $\varsigma>2\varsigma/3$, Assumption~\ref{as:scaling}(ii) holds at $b=\varsigma$ with $B_{t,T}=\Upsilon K_t$.
\end{proof}

\begin{proof}[Proof of Corollary~\ref{cor:svscaling}]
Assumption~\ref{as:scaling}(i) holds with the deterministic $A_{t,T}\equiv A$ of \eqref{eq:svcurv} and $W(x) = 1+\e^{|x|}$, which satisfies $W\geq1$ and $L_W=1$; each $\ell_t$ is twice continuously differentiable. Assumption~\ref{as:scaling}(ii) holds with $b=\varsigma$ and $B_{t,T} = \Upsilon K_t$ by Lemma~\ref{lem:envelope}, uniformly over the companding node sets. For the average condition, Birkhoff's ergodic theorem gives $T^{-1}\sum_{t=1}^{T}A\Upsilon K_t\rightarrow A\Upsilon\,\Em K_0$ almost surely, which is finite because $\Upsilon$ is, and a convergent sequence is bounded, so $S:=\sup_TT^{-1}\sum_tA_{t,T}B_{t,T}$ is finite almost surely. Theorem~\ref{thm:scaling} then gives $-\log\Pm(\text{accept})\leq(C/G^2)\sum_tA_{t,T}B_{t,T}\leq C_yT/G^2$ with $C_y = CS$, which is finite on the same probability-one event and depends on neither $T$ nor $G$.
\end{proof}

The restriction to the companding sets is necessary, not merely convenient. Placing a single node at $c_t-L$ makes $\psi_t$ one tangent line of slope of order $\frac{1}{2}\varpi_t\e^{L}$; the twisted conditional mean then shifts from the prior mean by order $\sigma_h^2\e^{L}$, and no bound of the kind established above can hold uniformly over such configurations. This is why Assumption~\ref{as:scaling} is imposed over a single family of node sets rather than over all of them.

\subsection*{The Duration Families}

The centering device extends to the duration models of Section~\ref{s:app}. Table~\ref{tab:durfam} collects the three families: in each case the center is the logarithm of one plus the transformed duration, the observation-specific factor cancels from the curvature exactly as $y_t^2$ does in \eqref{eq:svcurv}, the score at the center is bounded by the shape parameter, and the hypotheses of Lemma~\ref{lem:centers} hold with the stated $(\kappa,\zeta)$ because $\log\Em\zeta^{s}$ grows like $s\log s = o(s^2)$.

\begin{table}[H]
\caption{The duration families of Section~\ref{s:app} in the notation of this appendix; the log-densities $\ell_t$ are those of Section~\ref{s:app}. In each row $-\ell_t''(c_t+z)\leq A\,W(z)$ for a deterministic $A$, and $\zeta$ is the variable of Lemma~\ref{lem:centers}, which applies with the stated $\kappa$. For the Weibull family $z_t = \{y_t\Gamma(1+1/k)\}^{k}$.}
\label{tab:durfam}
\centering
\begin{tabular}{lcccc}
\hline\hline
Family & $c_t$ & $-\ell_t''(c_t+z)$ & $W(x)$ & $(\kappa,\zeta)$ \\ \hline
Exponential & $\log(1+y_t)$ & $\frac{y_t}{1+y_t}\e^{-z}$ & $1+\e^{|x|}$ & $(1,\epsilon_t)$ \\
\rowcolor{lightgray}
Gamma, shape $a$ & $\log(1+y_t)$ & $a\frac{y_t}{1+y_t}\e^{-z}$ & $1+\e^{|x|}$ & $(1,\epsilon_t)$ \\
Weibull, shape $k$ & $k^{-1}\log(1+z_t)$ & $k^2\frac{z_t}{1+z_t}\e^{-kz}$ & $1+\e^{k|x|}$ & $\big(k,\{\epsilon_t\Gamma(1+\tfrac{1}{k})\}^{k}\big)$ \\
\hline\hline
\end{tabular}
\end{table}

We make two comments. The growth rate of $W_k(x)=1+\e^{k|x|}$ varies with $k$ in the Weibull case, so uniformity over an estimated shape requires restricting the parameter to $0<k\leq\bar k$ for some finite $\bar k$. The companding grid constructed from $\bar k$ then serves every $k\leq\bar k$, because the envelope functions $W_k$ are pointwise ordered, so the curvature bound of Assumption~\ref{as:scaling}(i) for $W_{\bar k}$ dominates that for each smaller $k$ and Lemma~\ref{lem:quant} applies once, at $\bar k$. The transition structure is identical to the stochastic volatility case, so the tail chain of Lemmas~\ref{lem:score}--\ref{lem:envelope} carries over step by step; we omit the repetition.

\section{Data and Implementation Details} \label{app:data}

This appendix provides the data, estimation and marginal likelihood details for the stochastic conditional duration application in Section~\ref{s:app}. The three subsections describe, in turn, how the raw trade records are converted into the adjusted volume durations in Table~\ref{tab:appdata}, how the exponential, gamma and Weibull models in Table~\ref{tab:appfit} are estimated by simulated maximum likelihood, and how the log marginal likelihoods in Table~\ref{tab:appml} are computed from \eqref{eq:marglike}.

\subsection*{Construction of the Durations}

\begin{sloppypar}
The data are the daily \texttt{aggTrades} archives published by Binance at \texttt{https://\allowbreak data.binance.vision/\allowbreak data/\allowbreak spot/\allowbreak daily/\allowbreak aggTrades/}. Each record reports a millisecond transaction time, price and quantity. We use LINKUSDT and ALGOUSDT for 13--17 May 2024 and estimate the models on 15 May.
\end{sloppypar}

We collapse records sharing a millisecond stamp into a single event. Zero durations between consecutive raw records account for $44\%$ of observations for the much more active BTCUSDT pair and $14\%$--$23\%$ for the pairs examined. After collapsing, all durations are positive. Volume durations are obtained by accumulating traded quantity until it crosses a threshold $V$ and recording the time between crossings. We set $V=\sum_t q_t/3{,}000$ on the estimation day. A trade that crosses several thresholds generates one event, so the realized counts in Table~\ref{tab:appdata} are somewhat below three thousand.

The periodic factor is $\varphi_t=\Em(y_t \gvn \text{time of day})$. We estimate it from the mean duration in each of $48$ half-hour bins, smooth the bin means using a circular five-point moving average with weights proportional to $(1,2,3,2,1)$, and normalize the factor to have mean one. The adjusted duration is $y_t/\varphi_t$. We estimate $\varphi_t$ by pooling all five days and apply it to 15 May. Pooling is what makes the intraday shape stable: the estimation day contributes about a fifth of the observations, and the factor is normalized to have mean one, so the level of activity on any single day has limited influence on the estimated profile. The model is estimated only on 15 May because trading activity changes sharply across the five-day window. For LINKUSDT, for example, the duration scale changes by more than a factor of three, which a time-of-day factor cannot absorb.

\subsection*{Estimation}

Each model is estimated by maximizing the unbiased likelihood estimator \eqref{eq:Zhat} over $\bm{\theta}=(\mu,\phi,\sigma_h^2)$ and the log shape parameter for the gamma and Weibull families. Terms in $\log p(y_t\gvn h_t)$ that do not depend on $h_t$ cancel from the state update but still depend on the shape parameter and must therefore be restored for shape estimation and model comparison. Without them, the profile likelihood increases without bound as the shape falls. At shape one the restored term is zero and both families reproduce the exponential model exactly, providing an implementation check. Common random numbers at every parameter value make the simulated objective smooth in $\bm{\theta}$.

Maximization uses Nelder--Mead with at most $1{,}500$ function evaluations, a budget no fit exhausts, initialized at $\mu=\log\bar y$, $\phi=0.9$, $\sigma_h^2=0.05$ and shape one. The search uses $G=61$ nodes and $M=100$ state draws per evaluation. Reported log likelihoods are reevaluated at $G=188$ and $M=500$, with $G/\sqrt{T}$ equal to $3.61$ and $4.05$ in the two series, consistent with the scaling rule in Section~\ref{ss:MCscaling}. The acceptance probabilities in Table~\ref{tab:appfit} are estimated at the parameter estimates using $G=188$, $300$ proposals and the Rao--Blackwellized estimator $\widehat p$ of Section~\ref{s:MC}. The state is unbounded in both directions, so the node range of Section~\ref{ss:nodes} is used unchanged. All computations are in $\matlab$ and single-threaded.

\subsection*{The Marginal Likelihood}

Section~\ref{ss:appresults} specifies the priors and the parameter importance density $g$. Bounded state weights do not by themselves guarantee finite variance of \eqref{eq:marglike}; the outer ratio $p(\bm{\theta})/g(\bm{\theta})$ must also be controlled. Corollary~\ref{cor:is} gives
\[
\Em\left\{\hat Z_M(\bm{\theta})^2 \gvn \bm{\theta}\right\}
\leq \left(1-\frac{1}{M}\right)Z(\bm{\theta})^2+\frac{1}{M}\hat C_1(\bm{\theta})Z(\bm{\theta}),
\]
so a sufficient condition for finite variance is
\begin{equation} \label{eq:finitevar}
\int \frac{p(\bm{\theta})^2}{g(\bm{\theta})}\left[\left(1-\frac{1}{M}\right)Z(\bm{\theta})^2+\frac{1}{M}\hat C_1(\bm{\theta})Z(\bm{\theta})\right]\di\bm{\theta}<\infty.
\end{equation}
Thus the tangent-twisted proposal controls integration over the state path, while $g$ controls the parameter-space tails.

We use $N=2{,}000$ draws of $\bm{\theta}$ and $M=25$ state draws at each value. The heavy-tailed Student's $t$ proposal is deliberate, because a Gaussian proposal need not provide sufficient tail coverage for condition~\eqref{eq:finitevar}. The proposal and numerical Hessian are constructed on the unconstrained parameterization $(\mu,\tanh^{-1}\phi,\log\sigma_h^2,\log\text{shape})$, with the corresponding Jacobians included in the prior. Working directly in $\phi$ is problematic because several estimates exceed $0.98$, so a finite-difference step can leave the stationary region on which the likelihood is defined.

Two safeguards apply to the scale, both imposed before the factor-of-two inflation. If the numerical curvature of the log posterior is negative, the corresponding proposal standard deviation implied by the second difference is capped at five. If the numerical curvature is nonnegative, the corresponding proposal standard deviation is set to one half. Together they ensure that a flat or noisy direction cannot collapse or inflate the proposal. Centering at the simulated maximum likelihood estimates rather than at the posterior mode is immaterial where the prior is uninformative relative to the likelihood, which is the case for all six fits.

These evaluations use $G=51$ nodes rather than the $188$ used for the reported log likelihoods. By Theorem~\ref{thm:exact} and Corollary~\ref{cor:is}, $\hat Z_M(\bm{\theta})$ is unbiased for any node set, so $G$ affects the variance of the inner estimator but not its validity. Corollary~\ref{cor:is} quantifies this contribution through $(1/p_{\bm{\theta}}-1)/M$. In these experiments, variation from the outer integration over $\bm{\theta}$ is more important, so the computational budget is better spent on parameter draws than on additional nodes.

\end{document}